\newif\ifignore 
\newcommand{\auxproof}[1]{
  \ifignore\mbox{}\newline
  \textbf{BEGIN: AUX-PROOF} \dotfill\newline
  {#1}\mbox{}\newline
  \textbf{END: AUX-PROOF}\dotfill\newline
  \fi}
\renewcommand{\marginpar}[1]{}    
\newcommand*{\algrule}[1][\algorithmicindent]{%
   \makebox[#1][l]{%
       \hspace*{.2em}
       \vrule height .75\baselineskip depth .25\baselineskip
   }
}
\def\ALG@printindent{%
    \ifnum \theALG@nested>0
    \ifx\ALG@text\ALG@x@notext
    \else
    \unskip
    \ALG@printindent@tempcnta=1
    \loop
    \algrule[\csname ALG@ind@\the\ALG@printindent@tempcnta\endcsname]%
    \advance \ALG@printindent@tempcnta 1
    \ifnum \ALG@printindent@tempcnta<\numexpr\theALG@nested+1\relax
    \repeat
    \fi
    \fi
}
\patchcmd{\ALG@doentity}{\noindent\hskip\ALG@tlm}{\ALG@printindent}{}{\errmessage{failed to patch}}
\patchcmd{\ALG@doentity}{\item[]\nointerlineskip}{}{}{} 
\newcommand{\Break}{\textbf{break}\; }
\newtheorem{mytheorem}{Theorem}[section]
\newtheorem{mydefinition}[mytheorem]{Definition}
\newtheorem{myassumption}[mytheorem]{Assumption}
\spnewtheorem*{myproof}{Proof}{\itshape}{\rmfamily}
\definecolor{dgreen}{rgb}{0, .6, 0}
\newcommand{\cmark}{\ding{51}}%
\newcommand{\xmark}{\ding{55}}%
\newcommand{\Z}{{\mathbb{Z}}}
\newcommand{\Zp}{{\mathbb{Z}_{>0}}}
\newcommand{\R}{{\mathbb{R}}}
\newcommand{\Rp}{{\mathbb{R}_{>0}}}
\newcommand{\ttrue}{\mathrm{t{\kern-1.5pt}t}}
\newcommand{\ffalse}{\mathrm{f{\kern-1.5pt}f}}
\newcommand{\Rnn}{\R_{\ge 0}}
\newcommand{\SG}{\mathit{SG}^a}
\newcommand{\figcaption}[1]{\def\@captype{figure}\caption{#1}}
\newcommand{\tblcaption}[1]{\def\@captype{table}\caption{#1}}
\newif\iftikzgnuplot
\pgfplotsset{compat=1.12}
\setlist[itemize]{label=\textbullet}
\newcommand{\Opt}{\mathit{Opt}}
\newcommand{\str}{w}
\newcommand{\pat}{\mathit{pat}}
\newcommand{\reset}{\mathrm{reset}}
\newcommand{\eval}{\mathrm{eval}}
\newcommand{\solConstr}{\mathrm{solConstr}}
\newcommand{\CurrConf}{\mathit{CurrConf}}
\newcommand{\NextConf}{\mathit{NextConf}}
\newcommand{\Conf}{\mathit{Conf}}
\newcommand{\rhoEmpty}{\rho_{\emptyset}}
\newcommand{\regionstate}[4][]{
 \node[state,region] (#2) [#1] {
 \begin{tabular}{c}
  #3\\
  #4\\
 \end{tabular}
 }}
\tikzset{
region/.style={
rectangle,
rounded corners,
draw=black,very thick
},
accepting/.style={double distance=2pt}
}
\newcommand{\Biggg}{\bBigg@{4}}
\newcommand{\Bigggg}{\bBigg@{5}}
\newcommand{\Biggggg}{\bBigg@{6}}
 \title{Efficient Online Timed Pattern Matching by Automata-Based Skipping}
 \author{
Masaki Waga
\inst{1}
  \and
Ichiro Hasuo
\inst{2}
  \and
Kohei Suenaga
\inst{3}
 }
 \institute{
     University of Tokyo, Tokyo, Japan
     \and
     National Institute of Informatics, Tokyo, Japan
     \and
    Kyoto University and JST PRESTO, Kyoto, Japan
}
\begin{document}
\maketitle

\begin{abstract}
The \emph{timed pattern matching} problem is an actively studied topic
 because of its relevance in \emph{monitoring} of real-time systems. There
 one is given a log $w$ and a specification $\mathcal{A}$ (given by a
 \emph{timed word} and a \emph{timed automaton} in this paper), and one
 wishes to return the set of intervals for which the log $w$, when
 restricted to the interval, satisfies the specification
 $\mathcal{A}$. In our previous work we presented an efficient timed pattern matching
 algorithm: it adopts a skipping mechanism inspired by the classic Boyer--Moore
 (BM) string matching algorithm. In this work we tackle the problem of \emph{online} timed pattern matching, towards embedded applications where it
 is vital to process a vast amount of incoming data in a timely
 manner. Specifically, we start with the Franek-Jennings-Smyth (FJS)
 string matching algorithm---a recent  variant of the BM
 algorithm---and extend it to timed pattern matching. Our
 experiments indicate the  efficiency of our FJS-type algorithm
 in online and offline timed pattern matching.
\end{abstract}

\section{Introduction}
\label{sec:intro}
\emph{Monitoring} of real-time properties is an actively studied topic with
numerous applications such as
automotive systems~\cite{DBLP:conf/rv/KaneCDK15},
 medical systems~\cite{DBLP:journals/jcse/ChenSWL16},
data classification~\cite{DBLP:conf/hybrid/BombaraVPYB16},
web service~\cite{DBLP:conf/sigsoft/RaimondiSE08},
and quantitative performance
measuring~\cite{DBLP:conf/cav/FerrereMNU15}. 
\marginpar{what is ``quantitative performance
measuring''?}
Given a specification $\mathcal{A}$ and a log $w$ of activities, monitoring would ask questions like: 
\emph{if $w$ has a segment that matches $\mathcal{A}$}; \emph{all the segments of $w$ that match $\mathcal{A}$}; and so on. 

For a monitoring algorithm \emph{efficiency} is a critical matter. Since we
often need to monitor a large number of logs, each of which tends to be
very long, one monitoring task can take hours. Therefore even \emph{constant} speed up can make  significant practical differences. Another important issue is an algorithm's performance in \emph{online usage scenarios}. Monitoring algorithms are often deployed in \emph{embedded} applications~\cite{kane2015runtime}, and this incurs the following online requirements: 
\begin{itemize}
 \item \emph{Real-time properties}, such as: on prefixes of the log $w$, we want to know their monitoring result soon, possibly before the whole log $w$ arrives.
 \item  \emph{Memory consumption}, such as: early prefixes of $w$ should not affect the monitoring task of later segments of $w$, so that we can throw the prefixes away and free memory (that tends to be quite limited in embedded applications).
 \item \emph{Speed} of the algorithm. In an online setting this means: if the log $w$  arrives at a speed faster than the algorithm processes it, then the data that waits to be processed will fill up the memory. 
\end{itemize}
Constant improvement in aspects like speed and memory consumption will be appreciated in online settings, too: if an algorithm is twice as fast, then this means the same monitoring task can be conducted with cheaper hardware that is twice slower. 

The goal of the current paper is thus monitoring algorithms that perform well both in offline and online settings. We take a framework where \emph{timed words}---they are
essentially sequences of time-stamped events---stand for logs, and  \emph{timed automata} express a specification. Both constructs are well-known in the community of real-time systems. The problem we solve is that of \emph{timed pattern matching}: see~\S{}\ref{subsec:timedPatternMatching} for its definition; Fig.~\ref{fig:input_timed_pattern_matching} for an example; and Table~\ref{table:matching} for comparison with other matching problems.

\begin{figure}[t]
\begin{minipage}{0.6\textwidth} 
 \centering
 \scalebox{0.7}{
  \begin{tikzpicture}[shorten >=1pt,node distance=2cm,on grid,auto]
 \node[state,initial] (s_0) {$s_0$};
 \node[state] (s_1) [right of=s_0] {$s_1$};
 \node[state] (s_2) [right of=s_1] {$s_2$};
 \node[state] (s_3) [right of=s_2]{$s_3$};
 \node[state,accepting] (s_4) [right of=s_3]{$s_4$};

 \path[->] 
  (s_0) edge [above] node {\begin{tabular}{c}
                            $\text{a},x > 1$\\
                            $/ x := 0$
                           \end{tabular}} (s_1)
  (s_1) edge [above] node {\begin{tabular}{c}
                            $\text{a},x < 1$\\
                            $/x := 0$
                           \end{tabular}} (s_2)
  (s_2) edge [above] node {$\text{a}, x < 1$} (s_3)
  (s_3) edge [above] node {$\$,\mathbf{true}$} (s_4);
 \end{tikzpicture}}
 \end{minipage}
\begin{minipage}{0.4\textwidth}
 \centering
 \scalebox{0.7}{
 \begin{tikzpicture} 
  \draw [thick, -stealth](-0.5,0)--(6.5,0) node [anchor=north]{$t$};
  \draw (0,0.1) -- (0,-0.1) node [anchor=north]{$0$};

 \draw (0.5,0.1) node[anchor=south]{$\text{a}$} -- (0.5,-0.1) node[anchor=north]{$0.5$};
 \draw (0.9,0.1) node[anchor=south]{$\text{a}$} -- (0.9,-0.1) node[anchor=north]{$0.9$};
 \draw (1.3,0.1) node[anchor=south]{$\text{b}$} -- (1.3,-0.1) node[anchor=north]{$1.3$};
 \draw (1.7,0.1) node[anchor=south]{$\text{b}$} -- (1.7,-0.1) node[anchor=north]{$1.7$};
 \draw (2.8,0.1) node[anchor=south]{$\text{a}$} -- (2.8,-0.1) node[anchor=north]{$2.8$};
 \draw (3.7,0.1) node[anchor=south]{$\text{a}$} -- (3.7,-0.1) node[anchor=north]{$3.7$};
 \draw (5.3,0.1) node[anchor=south]{$\text{a}$} -- (5.3,-0.1) node[anchor=north]{$5.3$};
 \draw (4.9,0.1) node[anchor=south]{$\text{a}$} -- (4.9,-0.1) node[anchor=north]{$4.9$};
 \draw (6.0,0.1) node[anchor=south]{$\text{a}$} -- (6.0,-0.1) node[anchor=north]{$6.0$};
 \end{tikzpicture}}
 \end{minipage}
 \caption{An example of timed pattern matching. For the pattern timed automaton $\mathcal{A}$ and the target timed word $w$, as shown, the output is the set of matching intervals $\{(t,t')\mid w|_{(t,t')}\in L(\mathcal{A})\}=\{(t,t') \mid t \in [3.7,3.9), t' \in (6.0,\infty)\}$. Here \$ is a special terminal character.}
 \label{fig:input_timed_pattern_matching}
\end{figure}
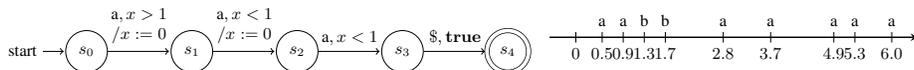

\begin{table}[t]
\centering
\caption{Matching problems}\label{table:matching}
\scalebox{.8}{\begin{tabular}{c||c|c|c}
&log, target&specification, pattern &output\\\hline\hline
string matching&
a word $\str\in \Sigma^{*}$
&
a word $\pat\in\Sigma^{*}$
&
$\{(i,j)\in(\Zp)^{2}\mid \str(i,j)=\pat\}$
\\\hline
pattern matching&
a word $\str\in \Sigma^{*}$
&
an NFA $\mathcal{A}$
&
$\{(i,j)\in(\Zp)^{2}\mid \str(i,j)\in L(\mathcal{A})\}$
\\\hline
timed pattern matching&
a timed word $\str\in(\Sigma \times \Rp)^{*}$
&
a timed automaton $\mathcal{A}$
&
$\{(t,t') \in (\Rp)^{2}\mid w|_{(t,t')} \in L (\mathcal{A})\}$
\\\hline
\end{tabular}
}
\end{table}

Towards the goal our strategy is to exploit the idea of \emph{skip values} in  efficient string matching algorithms (such as Boyer--Moore (BM)~\cite{Boyer1977}), together with their \emph{automata-based extension} for pattern matching by Watson \& Watson~\cite{Watson2003}, to skip unnecessary matching trials. In our previous work~\cite{DBLP:conf/formats/WagaAH16} we took the strategy and introduced a timed pattern matching algorithm with BM-type skipping. The current work improves on this previous BM algorithm: it is based on the more recent \emph{Franek--Jennings--Smyth (FJS) algorithm}~\cite{DBLP:journals/jda/FranekJS07} for string matching (instead of BM); and our new algorithm is faster than 
 our previous BM-type one. Moreover, in online usage, our FJS-type algorithm better addresses the online requirements that we listed in the above. This is in contrast with our previous BM-type algorithm that works necessarily in an offline manner (it must wait for the whole log $w$ before it starts).

%


\vspace{0em}
\noindent\textbf{Contributions}\quad
Our main contribution is an efficient algorithm for timed pattern matching that employs (an automata-theoretic extension of) skip values from the Franek--Jennings--Smyth (FJS)  algorithm for string matching~\cite{DBLP:journals/jda/FranekJS07}. By experiments we show that the algorithm generally outperforms a brute-force one and our previous BM algorithm~\cite{DBLP:conf/formats/WagaAH16}: it is twice as fast for some realistic automotive examples. Through our theoretical analysis as well as experiments on memory consumption, we claim that our algorithm is suited for online usage scenarios, too. We also compare its performance with a recent tool \emph{Montre} for timed pattern matching~\cite{DBLP:journals/corr/Ulus16}, and observe that ours is faster, at least in terms of the  implementations currently available. 

In its course we have obtained an FJS-type algorithm for \emph{untimed} pattern matching, which is one of the  main contributions too. The algorithm  is explained rather in detail, so that it paves the way to our FJS-type \emph{timed} pattern matching that is more complex.

A central theme of the paper is benefits of the formalism of \emph{automata}, a mathematical tool whose use is nowadays widespread in fields like temporal logic, model checking, and so on. We follow Watson \& Watson's idea of extending skipping from string matching to pattern matching~\cite{Watson2003}, where the key is overapproximation of words and languages by states of automata. Our main contribution on the conceptual side is that the same idea applies to \emph{timed} automata as well, where we rely on \emph{zone}-based abstraction (see e.g.~\cite{DBLP:conf/tacas/BehrmannBFL03,DBLP:journals/sttt/BehrmannBLP06,DBLP:conf/cav/HerbreteauSW10}) for computing reachability.

\vspace{0em}
\noindent\textbf{Related Works}\quad
Several algorithms have been proposed for online monitoring of real-time temporal logic specifications.
 An online monitoring
algorithm for ptMTL (a past time fragment of MTL) is 
in~\cite{DBLP:journals/fmsd/ReinbacherFB14} and
an algorithm for MTL[U,S] (a variant of MTL with both forward and
backward temporal modalities) is  in~\cite{DBLP:conf/rv/HoOW14}.
In addition, a case study on an autonomous research vehicle
monitoring~\cite{DBLP:conf/rv/KaneCDK15} shows such procedures can be
performed in an actual vehicle---this is where our motivation comes from, too. 

We have chosen timed automata as a specification formalism. This is because of their expressivity as well as various techniques that operate on them.
Some other formalisms can be translated to timed automata, and via translation,
our algorithm offers to these formalisms 
 an online monitoring algorithm.
In~\cite{Asarin2002}, a variant of \emph{timed regular expressions (TREs)} are
proved to have the same expressive power as timed automata.
For MTL and MITL, transformations into automata are introduced for many 
different settings; see
e.g.~\cite{DBLP:conf/focs/AlurH92,DBLP:conf/formats/MalerNP06,DBLP:conf/formats/NickovicP10,DBLP:conf/formats/KiniKP11,d2013clock}.

The work with closest interests to ours is by Ulus, Ferr\`ere, Asarin,  Maler  and their colleagues~\cite{DBLP:conf/formats/UlusFAM14,DBLP:conf/tacas/UlusFAM16,DBLP:journals/corr/Ulus16}. In their series of work, logs are presented by \emph{signals}, i.e.\ values that vary over time. Their logs are thus \emph{state-based} rather than \emph{event-based} like timed words. Their specification formalism is timed regular expressions (TREs).
An offline monitoring algorithm is presented in~\cite{DBLP:conf/formats/UlusFAM14} and an online
one
is  in~\cite{DBLP:conf/tacas/UlusFAM16}.
These algorithms are implemented in the tool \emph{Montre}~\cite{DBLP:journals/corr/Ulus16}, with which we conduct performance comparison. The difference between different specification formalisms (TREs, timed automata, temporal logics, etc.) are subtle, but for many realistic examples the difference does not matter. In the current paper we exploit various operations on automata, most notably zone-based abstraction.

%

\vspace{0em}
\noindent
\textbf{Notations}\quad
\label{subsec:notations}
Let $\Sigma$ be an alphabet and $w=a_{1}a_{2}\dotsc a_{n}\in \Sigma^{*}$
be a string over $\Sigma$, where $a_{i}\in\Sigma$ for each $i\in
[1,n]$. We let $w(i)$ denote the $i$-th character $a_{i}$ of
$w$. Furthermore, for $i,j\in [1,n]$, when $i\le j$ we let $w(i,j)$
denote the substring $a_{i}a_{i+1}\dotsc a_{j}$, otherwise  we let
$w(i,j)$ denote the empty string $\varepsilon$.
 The length $n$ of the string $w$ is denoted by $|w|$. 


\vspace{0em}
\noindent\textbf{Organization of the Paper}\quad
In \S\ref{sec:preliminaries}  are preliminaries on: our formulation of the problem of timed pattern matching; and the FJS algorithm for string matching. 
 The FJS-type skipping is extended to (untimed) pattern matching
in \S\ref{sec:fjs_pattern_matching}, where we describe the algorithm in detail. This paves the way to our FJS-type timed pattern matching algorithm in \S{}\ref{sec:timedFJS}. In \S{}\ref{sec:timedFJS} we also sketch zone-based abstraction of timed automata, a key technical ingredient in the algorithm. In \S{}\ref{sec:experiments} we present our experiment results. They indicate our algorithm's performance advantage in both offline and online usage scenarios. 

\section{Preliminaries}
\label{sec:preliminaries}

\subsection{Timed Pattern Matching}
\label{subsec:timedPatternMatching}
Here we formulate our problem. Our target strings are \emph{timed words}~\cite{Alur1994}, that are time-stamped words over an alphabet $\Sigma$.
 Our patterns are given by 
\emph{timed automata}~\cite{Alur1994}. 


\begin{mydefinition}[timed word, timed word segment]
 For an alphabet $\Sigma$, a \emph{timed word} is a
 sequence $w$ of pairs $(a_i,\tau_i) \in (\Sigma \times \Rp)$
 satisfying $\tau_i < \tau_{i + 1}$ for any $i \in [1,|w|-1]$.
 Let $w = (\overline{a},\overline{\tau})$ be a timed word.
 We denote the subsequence $(a_i, \tau_i),(a_{i+1},
 \tau_{i+1}),\cdots,(a_j,\tau_j)$ by $w (i,j)$.
 For $t \in \Rnn$, the \emph{$t$-shift} of $w$ is
 $(\overline{a}, \overline{\tau}) + t = (\overline{a}, \overline{\tau} +
 t)$ where
 $\overline{\tau} + t = \tau_1 + t,\tau_2 + t,\cdots, \tau_{|\tau|} + t$.
 For timed words $w = (\overline{a},\overline{\tau})$ and
 $w' = (\overline{a'},\overline{\tau'})$,
 their \emph{absorbing concatenation} is
 $w \circ w' = (\overline{a} \circ \overline{a'}, \overline{\tau} \circ
 \overline{\tau'})$ where $\overline{a} \circ \overline{a'}$ and 
 $\overline{\tau} \circ \overline{\tau'}$ are usual concatenations, and 
 their \emph{non-absorbing concatenation} is
 $w \cdot w' = w \circ (w' + \tau_{|w|})$.
 We note that the absorbing concatenation $w \circ w'$ is defined only
 when $\tau_{|w|} < \tau'_{1}$.

 For a timed word $w = (\overline{a}, \overline{\tau})$ on $\Sigma$
 and 
 $t,t' \in \R_{>0}$ satisfying $t < t'$, a \emph{timed word segment}
 $w|_{(t,t')}$ is defined by the timed word $(w (i,j) - t) \circ (\$,t')$
 on the augmented alphabet $\Sigma \sqcup \{\$\}$,
 where $i,j$ are chosen so that
 $\tau_{i-1} \leq t < \tau_i$ and
 $\tau_{j} < t' \leq \tau_{j+1}$.
 Here the fresh symbol ${\$}$ is called the \emph{terminal character}. 
\end{mydefinition}


\auxproof{
A timed automaton is a NFA equipped with clock variables.
A transition of a timed automaton has a \emph{guard} and
\emph{reset variables}:
the former is a proposition defined by a conjunction of inequalities
over clock variables that is an extra requirement of the transition;
the latter is a set of variables that are reset to zero after the
transition.
The \emph{language} of a timed automaton is defined similarly to the
one of a standard NFA.
Formally, the syntax and the semantics of timed automata are defined as
follows.
}
\begin{mydefinition}[timed automaton]
\label{def:semantics_ta}
Let $C$ be  a finite set of \emph{clock variables}, and $\Phi (C)$ denote the set of
 conjunctions of inequalities $x \bowtie c$ where $x \in C$, $c \in
 \Z_{\geq 0}$, and ${\bowtie} \in \{>,\geq,<,\leq\}$.
 A \emph{timed automaton}
 $\mathcal{A} = (\Sigma,S,S_0,C,E,F)$
is a tuple where: $\Sigma$ is an alphabet;
 $S$ is a finite set of states; $S_0 \subseteq S$ is a set of initial states; 
 $E \subseteq S \times S \times \Sigma \times \mathcal{P}(C) \times \Phi(C)$
 is a set of transitions; and $F \subseteq S$ is a set of accepting
 states. The components of a transition $(s,s',a,\lambda,\delta)\in E$ represent:  the source,  target,  action, reset variables and guard of the transition, respectively.

We define a \emph{clock valuation} 
 $\nu$ as a function $\nu: C \to \R_{\geq 0}$.
We define the \emph{$t$-shift} $\nu + t$  of a clock
 valuation $\nu$, where $t \in \R_{\geq 0}$, by  $(\nu + t) (x) = \nu (x) + t$ for any $x \in C$.
 For a timed automaton $\mathcal{A} = (\Sigma,S,S_0,E,C,F)$ and a timed
 word $w = (\overline{a},\overline{\tau})$, a \emph{run} of $\mathcal{A}$
 over $w$ is a sequence $r$ of pairs 
 $(s_i, \nu_i) \in S \times (\R_{\geq 0})^C$ satisfying the following:
 (initiation) $s_0 \in S_0$ and
 $\nu_0 (x) = 0$ for any $x \in C$; and
 (consecution) for any $i \in [1,|w|]$, there
 exists a transition $(s_{i-1}, s_i, a_i, \lambda, \delta) \in E$
 such that $\nu_{i-1} + \tau_i - \tau_{i-1} \models \delta$ and
 $\nu_i (x) = 0$  (for $x \in \lambda$) and 
 $\nu_i (x) = \nu_{i-1} (x) + \tau_i - \tau_{i-1}$ (for $x \not\in \lambda$).
 A run only satisfying the consecution condition is a \emph{path}.
 A run $r = (\overline{s},\overline{\nu})$ is \emph{accepting} if the last element $s_{|s|-1}$ of $s$ belongs to $F$. 
 The \emph{language} $L (\mathcal{A})$
 is defined to be the set
 $\{w \mid \text{ there is an accepting run of $\mathcal{A}$ over $w$}\}$ of timed words.
\end{mydefinition}



\begin{mydefinition}[timed pattern matching]\label{def:TimedPatternMatching}
Let $\mathcal{A}$
be a timed automaton, and  $\str$ be a timed word,  over a common  alphabet $\Sigma$. The \emph{timed pattern matching} problem requires
all the intervals $(t,t')$ for which the segment  $w|_{(t,t')}$ is 
 accepted by $\mathcal{A}$. That is, it requires
 the \emph{match set} $\mathcal{M}
 (w,\mathcal{A}) = \{(t,t') \mid w|_{(t,t')} \in L (\mathcal{A})\}$.
\end{mydefinition}
The match set $\mathcal{M}
 (w,\mathcal{A})
$ is in general uncountable; however it allows finitary representation, as a finite union of special polyhedra called \emph{zones}. See~\cite{DBLP:conf/formats/WagaAH16}.

\subsection{String Matching and the FJS Algorithm}
\label{subsec:stringMatching}

\emph{String matching} is a fundamental problem in computer science.
Given a \emph{pattern string} $\pat$ and a \emph{target string} $\str$, it requires
the set 
\begin{math}
 \bigl\{\,(i,j)\in(\Zp)^{2} \,\big|\, \str(i,j)=\pat\,\bigr\}
\end{math}
of all the occurrences of $\pat$ in $\str$. 
A brute-force algorithm, by trying to match $|\pat|$ characters for all the possible $|\str|-|\pat|$ positions of the pattern string, solves the string matching problem in 
$O(| \pat || \str |)$. Efficient algorithms for this classic problem  have been sought  for a long time, with significant progress made as recently as in the last decade~\cite{DBLP:journals/csur/FaroL13}. Among them the \emph{Knuth--Morris--Pratt (KMP) algorithm}~\cite{Knuth1977} and
the \emph{Boyer--Moore (BM) algorithm}~\cite{Boyer1977} are well-known, where unnecessary matching trials are \emph{skipped} utilizing  \emph{skip value functions}.
Empirical studies have shown speed advantage of BM---and its variants 
like  \emph{Quick Search}~\cite{DBLP:journals/cacm/Sunday90}---over KMP, while
theoretically  KMP
 exhibits better worst-case complexity $O(|\pat| + |\str|)$.
By combining  KMP and  Quick Search,
the \emph{Franek--Jennings--Smyth (FJS) algorithm}~\cite{DBLP:journals/jda/FranekJS07}, proposed in 2007, achieves both linear worst-case complexity and good practical performance.

The current paper's goal is to introduce FJS-like optimization to timed pattern matching. We therefore take the FJS algorithm as an example and demonstrate how skip values are utilized in the string matching algorithms we have mentioned.\footnote{The FJS-type algorithm we present here is a simplified version of the original FJS algorithm.  Our simplification is equipped with all the features that we will exploit later for pattern matching and timed pattern matching; the original algorithm further  omits some other trivially unnecessary matching trials.
We note that, because of the difference (that is conceptually inessential), our simplified algorithm (for string matching) no longer enjoys   linear worst-case complexity. }

\begin{figure}[t]
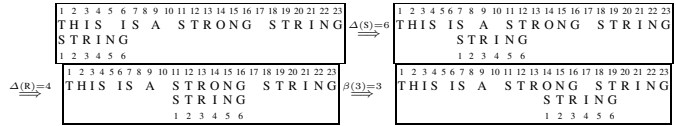

 \centering
 \setlength{\tabcolsep}{1pt}
\scriptsize
\scalebox{0.7}{
\begin{tabular}{ccc}
&
  \begin{tabular}{|ccccccccccccccccccccccc|}
  \hline
  \tiny 1&\tiny 2&\tiny 3&\tiny 4&\tiny 5&\tiny 6&\tiny 7&\tiny 8&\tiny
  9&\tiny 10&\tiny 11&\tiny 12&\tiny 13&\tiny 14&\tiny 15&\tiny 16&\tiny
  17&\tiny 18&\tiny 19&\tiny 20&\tiny 21&\tiny 22&\tiny 23\\
  T&H&I&S&\ &I&S&\ &A&\ &S&T&R&O&N&G&\ &S&T&R&I&N&G\\
  S&T&R&I&N&G &&&&& &&&&& &&&&& &&\\
  \tiny 1&\tiny 2&\tiny 3&\tiny 4&\tiny 5&\tiny 6
   &&&&& &&&&& &&&&& &&		       
 \\\hline
 \end{tabular}
$\stackrel{\Delta(\textrm{S})=6}{\Longrightarrow}$
&
\scriptsize
  \begin{tabular}{|ccccccccccccccccccccccc|}
  \hline
  \tiny 1&\tiny 2&\tiny 3&\tiny 4&\tiny 5&\tiny 6&\tiny 7&\tiny 8&\tiny
  9&\tiny 10&\tiny 11&\tiny 12&\tiny 13&\tiny 14&\tiny 15&\tiny 16&\tiny
  17&\tiny 18&\tiny 19&\tiny 20&\tiny 21&\tiny 22&\tiny 23\\
  T&H&I&S&\ &I&S&\ &A&\ &S&T&R&O&N&G&\ &S&T&R&I&N&G\\
  &&&&&& S&T&R&I&N&G &&&&& &&&&& &\\
  &&&&&&\tiny 1&\tiny 2&\tiny 3&\tiny 4&\tiny 5&\tiny 6 &&&&& &&&&& &
  \\\hline
 \end{tabular}
\\
 $\stackrel{\Delta(\textrm{R})=4}{\Longrightarrow}$
\scriptsize
     &
  \begin{tabular}{|ccccccccccccccccccccccc|}
   \hline
  \tiny 1&\tiny 2&\tiny 3&\tiny 4&\tiny 5&\tiny 6&\tiny 7&\tiny 8&\tiny
  9&\tiny 10&\tiny 11&\tiny 12&\tiny 13&\tiny 14&\tiny 15&\tiny 16&\tiny
  17&\tiny 18&\tiny 19&\tiny 20&\tiny 21&\tiny 22&\tiny 23\\
  T&H&I&S&\ &I&S&\ &A&\ &S&T&R&O&N&G&\ &S&T&R&I&N&G\\
  &&&&&&&&&& S&T&R&I&N&G  &&& &&&&\\
  &&&&&&&&&& \tiny 1&\tiny 2&\tiny 3&\tiny 4&\tiny 5&\tiny 6 &&& &&&&
   \\\hline
 \end{tabular}
 $\stackrel{\beta(3)=3}{\Longrightarrow}$
&
\scriptsize
  \begin{tabular}{|ccccccccccccccccccccccc|}
   \hline
  \tiny 1&\tiny 2&\tiny 3&\tiny 4&\tiny 5&\tiny 6&\tiny 7&\tiny 8&\tiny
  9&\tiny 10&\tiny 11&\tiny 12&\tiny 13&\tiny 14&\tiny 15&\tiny 16&\tiny
  17&\tiny 18&\tiny 19&\tiny 20&\tiny 21&\tiny 22&\tiny 23\\
  T&H&I&S&\ &I&S&\ &A&\ &S&T&R&O&N&G&\ &S&T&R&I&N&G\\
  &&&&&&&&&&&&& S&T&R&I&N&G &&&& \\
  &&&&&&&&&&&&& \tiny 1&\tiny 2&\tiny 3&\tiny 4&\tiny 5&\tiny 6 &&&&
   \\\hline
 \end{tabular}
\end{tabular}}
  \caption{The Franek--Jennings--Smyth (FJS) algorithm for string matching: an example}
  \label{fig:stringMatchConfig}
\end{figure}

 The FJS algorithm  combines two skip value functions: $\Delta\colon \Sigma\to [1,|\pat|+1]$
and $\beta\colon [0,|\pat|]\to [1,|\pat|]$; the former $\Delta$ comes from Quick Search and the latter $\beta$ comes from KMP 
(the choice of symbols follows~\cite{DBLP:journals/jda/FranekJS07}). See Fig.~\ref{fig:stringMatchConfig} where the pattern string $\pat=\textrm{STRING}$ is shifted by 6, 4 and 3 (instead of one-by-one). 

In the first shift we use the Quick Search skip value $\Delta(\textrm{S})=6$: we try matching the tail of $\pat$; it fails ($\pat(6)
\neq
\str(6)$);  then we find that the next character $\str(7)=\textrm{S}$ of the target only occurs in the first position of the pattern. 
Formally we define $\Delta$ by
\begin{equation}\label{eq:deltaForFJSStringMatch}
\small
\begin{aligned}
  \Delta(a)
 =
 \min\bigl(\,\bigl\{i\in[1,|\pat|]\;\big|\; a=\pat({|\pat|-i+1})\bigr\}\cup\bigl\{|\pat|+1\bigr\}\,\bigr)
\quad\text{for  $a\in\Sigma$.}
\end{aligned}
\end{equation}
In the example of Fig.~\ref{fig:stringMatchConfig} we have $\Delta(\text{I})=3$ and $\Delta(\text{Q})=7$.

Now we are in the second configuration of Fig.~\ref{fig:stringMatchConfig} and we try matching the tail $\pat(6)=\textrm{G}$ with $\str(12)$. It fails and we invoke the Quick Search skip value function $\Delta$; this results in a shift by $\Delta(\text{R})=4$ positions. 

\begin{wrapfigure}{r}{0pt}
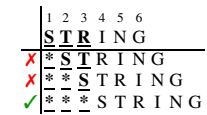

 \setlength{\tabcolsep}{1pt}
 \scriptsize
 \begin{tabular}{c|cccccccccccccccccccccccccccccccccccccccccccccc}
 & \tiny 1&\tiny 2&\tiny 3&\tiny 4&\tiny 5&\tiny 6\\[-.0em]
 &\bf\underline S&\bf\underline T&\bf\underline R&I&N&G\\[-.0em]\hline
 \color{red}\xmark&\bf\underline *&\bf\underline S&\bf\underline T&R&I&N&G\\[-.0em]
 \color{red}\xmark&\bf\underline *&\bf\underline *&\bf\underline S&T&R&I&N&G\\[-.0em]
 \color{dgreen}\cmark&\bf\underline *&\bf\underline *&\bf\underline *&S&T&R&I&N&G
 \end{tabular}
  \caption{$\beta(3)=3$, where the argument $3$ is the length of the successful partial match.}
  \label{fig:tableForBeta}
\end{wrapfigure}
For the shift from the third configuration to the fourth in Fig.~\ref{fig:stringMatchConfig} we employ the KMP skip value function $\beta$. It is defined as follows. Observe first that, in the third configuration of Fig.~\ref{fig:stringMatchConfig}, matching trials from the head succeed for three positions and then fail ($\str(11,13)=\pat(1,3), \str(14)\neq\pat(4)$). From this information alone we can see that, for a potential string match, the pattern string must be shifted at least by $\beta(3)=3$. See Fig.~\ref{fig:tableForBeta} where shifting the pattern string $\pat$  by one or two positions necessarily leads to a mismatch with $\pat(1,3)$. It is important here that we know $\pat(1,3)$ coincides with $\str(11,13)$ from the previous successful matching trials. Formally:
\begin{equation}\label{eq:betaForFJSStringMatch}
\small
\begin{aligned}
  \beta(p)= \min\{n \in [1,\pat] \mid \pat(1,p-n) = \pat(1+n,p)
  \}
\quad\text{for $p\in[0,|\pat|]$.}
\end{aligned}
\end{equation}

In the FJS algorithm we combine
 the two skip value function
 $\Delta$ and $\beta$. 
Specifically:
let us be in a configuration where $\pat(1)$ is in the position of $\str({1+n})$. We first try matching the pattern's tail $\pat({|\pat|})$ with its counterpart $\str({|\pat|+n})$; if it fails  we invoke the Quick Search skipping $\Delta$; otherwise we turn to the pattern's head $\pat(1)$  try matching from left to right. After its success or failure  we invoke the KMP skipping $\beta$. Note that preference is given to the Quick Search skipping. See Algorithm~\ref{alg:FJSStringMatch}. 
\begin{algorithm}[t]
 \caption{The FJS string matching algorithm (simplified)}
 \label{alg:FJSStringMatch}
 \scalebox{0.8}{
 \parbox{1.25\textwidth}{
 \begin{algorithmic}[1]
  \Require A target string $w$ and a pattern string $\pat$.
  \Ensure $Z$ is the set of matching intervals.
  
  \State $n \gets 1;$
   \Comment{$n$ is the position in $\str$ of the head of $\pat$}
  \While{$n \leq |\str|-|\pat|+1$}     
  \While{$\str({n+|\pat|-1})\neq \pat({|\pat|})$}
    \Comment{Try matching the tail of $\pat$}
  \State $n \gets n + \Delta(\str({n + |\pat|}))$
  \Comment{Quick Search-type skipping}
   \If{$n > |w| - m + 1$}
   \Return
   \EndIf
  \EndWhile
  \If{$\pat=\str(n,n+|\pat|-1)$}
   \Comment{We try  matching from left to right}
    \State $p\gets |\pat|+1;\qquad Z\gets Z\cup \{[n,n+|\pat|-1]\}$
  \Else
    \State $p\gets \min \{p'\mid \pat(p')\neq\str(n+p'-1)\}$
     \Comment{Matching trials fail at position $p$ for the first time}
  \EndIf
  \State $n \gets n + \beta(p)$
  \Comment{KMP-type skipping}
  \EndWhile
 \end{algorithmic}
}}
\end{algorithm}

It is important that the skip value functions
$\Delta\colon \Sigma\to [1,|\pat|+1]$
and $\beta\colon [0,|\pat|]\to [1,|\pat|]$ rely only on the pattern string $\pat$. Therefore it is possible to pre-compute the function values in advance (i.e.\ before a target string $\str$ arrives); moreover since $|\pat|$ is usually not large those values can be stored effectively in look-up tables. Skipping by these skip values does not improve the worst-case complexity, but practically it brings pleasing constant speed up, as demonstrated in Fig.~\ref{fig:stringMatchConfig}. 

Finally we note the following alternative presentation of
 $\Delta$ and $\beta$.
\begin{equation}
\label{eq:alternativeReprOfFJSSkipping}
\small
  \begin{aligned} 
  \Delta (a) &=
  \min
  \{n \in \mathbb{Z}_{>0} \mid \Sigma^{n}  \pat \cap
  \Sigma^{|\pat|}  a  \Sigma^* \neq \emptyset\}
  \qquad \text{for each $a\in \Sigma$,}
  \\
  \beta (p) &=
  \min
  \{n \in \mathbb{Z}_{>0} \mid \Sigma^{n} \pat(1,p) \cap \pat(1,p)
  \Sigma^* \neq \emptyset\}
  \qquad \text{for each $p\in [0,|\pat|]$.}
 \end{aligned}
\end{equation}

\auxproof{
\begin{algorithm}[t]
 \caption{A common outline of FJS-type algorithms}
 \label{alg:FJS}
 \begin{algorithmic}[1]
  \Require A target $w$ and a pattern $\pat$.
  \Ensure $Z$ is the set of matching intervals.
  
  \State $i \gets 1; m \gets \mathrm{QuickSearchLookAheadDistance} (\pat)$
  \While{$i \leq |w|$}
  \While{$\neg \mathrm{QuickSearchCheck}(i,w,\pat)$}\label{alg_line:common_quick_check}
  \State $i \gets i + \Delta(w_{i + m})$ \label{alg_line:common_quick_skip}
  \Comment{Quick Search-type skipping}
   \If{$n > |w| - m + 1$}
   \Return
   \EndIf
  \EndWhile
  \State $(\mathsf{ans}, \Conf) \gets \mathrm{MatchingTrial}(i,w,\pat)$\label{alg_line:common_matching_trial}
  \State $Z \gets Z \cup \mathit{ans}$
  \State $i \gets i + \beta(\Conf)$   \label{alg_line:common_KMP}
  \Comment{KMP-type skipping}
  \EndWhile
 \end{algorithmic}
\end{algorithm}

}

\auxproof{The FJS algorithm has two phases. See Alg.~\ref{alg:FJS}, where we present an outline of the algorithm.\footnote{The outline in  Alg.~\ref{alg:FJS} is simplified from the original FJS algorithm, for our technical developments later, so that the same outline applies to other problems like \emph{pattern} matching.  We note that the simplification results in loss of linear worst-case complexity in string matching. }
Firstly, the FJS algorithm tries to shift the pattern string $\pat$, by the skip value
$\Delta (a)$ that comes from   Quick Search, to skip some matching trials.
An intuition behind this Quick Search-type shifting is that $\pat(1)$ is easier to
match than $\pat(|\pat|)$. The initial configuration is in
(\ref{eq:config1}) of Fig.~\ref{fig:stringMatchConfig} and
attempts to match the last character ``\texttt{G}'' of $pat$ --- that is
line~\ref{alg_line:common_quick_check} in Alg.~\ref{alg:FJS}.
Since $\pat(6) \neq \str(6)$, we move the pattern where the ``next''
character $\str(7) = \mathtt{S}$ matches a character in $pat$
(line~\ref{alg_line:common_quick_skip} in Alg.~\ref{alg:FJS}).
In the next configuration (\ref{eq:config2}) in
Fig.~\ref{fig:stringMatchConfig}, we have $\pat(1) = \str(7) =
\mathtt{S}$.
Again we try matching $\pat(6)$ and $\str(12)$ and similarly, we can
move to the configuration (\ref{eq:config3}) in
Fig.~\ref{fig:stringMatchConfig} immediately.
Now $\pat(6) = \str(16) = \mathtt{G}$ and we start matching trials  --- that is
we finished the while loop from line~\ref{alg_line:common_quick_check} and
go to line~\ref{alg_line:common_matching_trial} in Alg.~\ref{alg:FJS}.
Trying matching from left to right, we
realize that $\pat(1,3) = \str(11,13) = \mathtt{STR}$ but
$\pat(4) \neq \str(14)$.
Then we try KMP like skipping --- that is line~\ref{alg_line:common_KMP}
in Alg.~\ref{alg:FJS}.
The KMP like skip value $\beta (3)$ is calculated by shifting the
pattern string as in Fig.~\ref{fig:tableForBeta} beforehand.
Since any prefix of $\pat(1,3)$ is not a proper suffix of $\pat(1,3)$,
we can move the pattern by $3$ and our configuration is
(\ref{eq:config4}) in Fig.~\ref{fig:stringMatchConfig}.
Then we go back to line~\ref{alg_line:common_quick_skip} in
Alg.~\ref{alg:FJS} again.
By repeating this procedure, we can solve the string matching problem.
The skip value functions $\Delta$ and $\beta$ are formally as follows.

\begin{mydefinition}[Skip values of the FJS algorithm for string matching]
 For a character $a\in\Sigma$ and a positive integer
 $p\in\mathbb{Z}_{>0}$, the Quick Search-type skip value $\Delta (a)$ and
 the KMP-type skip value $\beta (p)$ of the FJS algorithm for string
 matching is as follows.
 \begin{align*} 
  \Delta (a) &=
  \min
  \{n \in \mathbb{Z}_{>0} \mid \Sigma^{n}  \pat \cap
  \Sigma^{|\pat|}  a  \Sigma^* \neq \emptyset\}
  \\
  \beta (p) &=
  \min
  \{n \in \mathbb{Z}_{>0} \mid \Sigma^{n} \pat(1,p-1) \cap \pat(1,p-1)
  \Sigma^* \neq \emptyset,\\
  &\qquad  \qquad \pat (p) \neq \pat (p-n)\}
 \end{align*}
\end{mydefinition}
Note that the skip value functions $\Delta$ (over $a\in \Sigma$) and $\beta$ (over $p\in [1,|\pat|]$) can be organized as \emph{finite} tables. These tables are computed in the preprocessing stage, and are used in actual matching as in Fig.~\ref{fig:stringMatchConfig}. 
}

\auxproof{\subsection{Zone Automata}\label{subsec:zoneAutomata}

The reachability of a timed automaton can be checked with
a \emph{zone}~\cite{DBLP:conf/tacas/BehrmannBFL03} (or
\emph{region}~\cite{Alur1994}) automaton.
The main idea of the zone- (or region-) based reachability checking is
to take a finite equivalent class of the infinitely many clock configurations
and reduce to the reachability checking of an NFA.
A \emph{zone} is a convex polyhedron on the clock value space.
Since there exist infinitely many zones, some abstractions of zones are
proposed in the
literature~\cite{DBLP:conf/tacas/BehrmannBFL03,DBLP:journals/sttt/BehrmannBLP06}
to make the number of zones finite.
In this paper, we use a common abstraction
$\mathit{SG^a}$~\cite{DBLP:conf/cav/HerbreteauSW10}, that equate any
sufficiently large clock valuations.
A \emph{zone automaton} is incrementally constructed by tracing from the
initial conditions, reserving the soundness and the completeness of the
reachability.
Formally, they are defined as follows.

\begin{mydefinition} [zone]
 Let $\mathcal{A}$ be a timed automaton, $M$ be the maximum constant
 appearing in the guards of $\mathcal{A}$, $C$ be the set of clock
 variables of $\mathcal{A}$, and $\nu$ be a clock
 valuation for $C$.
A \emph{zone} $[\nu]$ abstracted by $\mathit{SG^a}$ is an equivalence
 class of clock
 valuations defined as a conjunction of the constraints in the
 form $\nu (x_j) - \nu (x_i) \prec c$ or
 $\pm \nu (x_i) \prec c$ where ${\prec} \in \{<,\leq\}$ and 
 $c \in [-M, M]$ is an integer.
\end{mydefinition}

\begin{mydefinition} [zone automaton]
For a timed automaton $\mathcal{A}$, the \emph{zone automaton}
 $\mathit{SG^a} (\mathcal{A})$ is a NFA whose state is the pair
 $(s,[\nu])$ of a state $s$ of the timed automaton $\mathcal{A}$ and a
zone $[\nu]$ abstracted by $\mathit{SG^a}$ and there exists a transition
$(s,[\nu]) \xrightarrow{a} (s',[\nu'])$ if for any
$\nu \in [\nu]$, there is a $\nu' \in [\nu']$ such that 
there is a run of $\mathcal{A}$ contain
$(s,\nu) \xrightarrow{\sigma} (s', \nu')$ as a subsequence and
 $[\nu']$ is the smallest such zone.
\end{mydefinition}

In addition to its practically small state space,
another advantage of the clock abstraction by zones is its efficient
representation by Difference Bound Matrices
(DBM)~\cite{DBLP:conf/avmfss/Dill89}.
A DBM $D$ is a $(|C| + 1) \times (|C| + 1)$ matrix that each element
$d_{i,j}$ represent a constraint $\nu (x_i) - \nu (x_j) \leq d_{i,j}$
for each clock variables $x_i,x_j \in C$
where we let $x_0$ be a special variable satisfying $\nu (x_0) = 0$ for
any clock valuation $\nu$.
A detail of DBM is found in~\cite{DBLP:conf/avmfss/Dill89} and a
detail of the operations on zones 
is found in~\cite{DBLP:journals/sttt/BehrmannBLP06}.
}

\section{An FJS-Type Algorithm for Pattern Matching}
\label{sec:fjs_pattern_matching}
In this section we present our first main contribution, namely an adaptation of the FJS algorithm (\S{}\ref{subsec:stringMatching}) from string matching to \emph{pattern matching}. 
\begin{mydefinition}[pattern matching]\label{def:patternMatching}
Let $\mathcal{A}$
be a nondeterministic finite automaton over an alphabet $\Sigma$ (a \emph{pattern} NFA), and  $\str\in\Sigma^{*}$ be a \emph{target} string. The \emph{pattern matching} problem requires
all the intervals $(i,j)$ for which the substring 
$\str(i,j)$ is accepted by $\mathcal{A}$. That is, it requires the set
\begin{math}
 \bigl\{\,(i,j)\,\big|\,1\le i\le j\le |\str| \text{ and }\str(i,j)\in L(\mathcal{A})\,\bigr\}
\end{math}.
\end{mydefinition}
For an example see Fig.~\ref{fig:patternMatchingExample}, where the automaton $\mathcal{A}$ satisfies $L(\mathcal{A}) = L(\mathtt{\{ab,cd\}cc^*d})$.

\begin{figure}[tbp]
 \begin{minipage}{0.43\textwidth}
\scalebox{.7}{ \begin{tikzpicture}[shorten >=1pt,node distance=1.3cm,on grid,auto]
  \node[initial,state] (S0){$s_0$};
  \node[state] (S1) at (1.2,0.5) {$s_1$};
  \node[state] (S2) at (1.2,-0.5) {$s_2$};
  \node[state] (S3) [right =2.4cm of S0] {$s_3$};
  \node[state] (S4) [right =of S3] {$s_4$};
  \node[accepting,state] (S5) [right =of S4]{$s_5$};

  \path[->]
  (S0) edge node {a} (S1)
  (S0) edge[below] node {c} (S2)
  (S1) edge node {b} (S3)
  (S2) edge[below] node {d} (S3)
  (S3) edge node {c} (S4)
  (S4) edge[loop above] node {c} (S4)
  (S4) edge node {d} (S5);
 \end{tikzpicture} 
}
\end{minipage}
\hfill
 \begin{minipage}{0.43\textwidth}
 \centering
 \setlength{\tabcolsep}{1pt}
 \scriptsize
 \vspace*{-1em}
  \begin{tabular}{ccccccccccccccccccccccccccccccccccccccccccccccc}
  &\tiny 1&\tiny 2&\tiny 3&\tiny 4&\tiny 5&\tiny 6&\tiny 7&\tiny 8&\tiny
  9&\tiny 10&\tiny 11&\tiny 12\\
  $\str =$&a&b&d&a&b&c&c&b&a&b&c&d\\
  $L(\mathcal{A}) \ni$&&&&&&&&&a&b&c&d\\
 \end{tabular}
\end{minipage}
 \caption{Pattern matching. For the pattern NFA $\mathcal{A}$ on the left, for which it is easy to see that $L(\mathcal{A}) = L(\mathtt{\{ab,cd\}cc^*d})$, the output is
$\{(9,12)\}$ as shown on the right.}
 \label{fig:patternMatchingExample}
 \centering
  \scalebox{0.8}{
 \begin{minipage}{0.7\textwidth}
 \centering
 \[
  L (\mathcal{A}) = \left\{
 \begin{array}{ll}
  \mathtt{abcd},& \mathtt{cdcd},\\
  \mathtt{abcc}\mathit{d},& \mathtt{cdcc}\mathit{d},\\
  \mathtt{abcc}\mathit{cd},& \mathtt{cdcc}\mathit{cd},\\
  \multicolumn{2}{c}{\vdots}
 \end{array}
 \right\}
\quad \leadsto\quad
 L'' =
 L'\cdot\Sigma^{*}=
  \left\{
 \begin{array}{c}
  \mathtt{abcd}, \mathtt{cdcd},\\
  \mathtt{abcc}, \mathtt{cdcc}\\
 \end{array}
 \right\} \Sigma^*
 \]
 \end{minipage}}
 \caption{Overapproximation of the language $L(\mathcal{A})$}
 \label{fig:approx}
\end{figure}

A brute-force algorithm solves pattern matching in $O(|S| |\str|^2)$, 
where $S$ is the state space of the pattern $\mathcal{A}$ (the factor $|S|$ is there due to nondeterminism). Some optimizations are known, among which is the adaptation of the Boyer--Moore algorithm by Watson \& Watson~\cite{Watson2003}. In their algorithm they adapt the BM-type skip values to pattern matching: the core idea in doing so is to \emph{overapproximate} languages and substrings, so that the skip value function can be organized as a finite table and hence can be computed in advance. Our adaptation of the FJS algorithm  employs similar overapproximation.

In the original FJS algorithm (for string matching) one uses skip value functions
\begin{equation}
 \Delta\colon \Sigma\to [1,|\pat|+1]
\quad\text{and}\quad
\beta\colon [0,|\pat|]\to [1,|\pat|]\enspace.
\end{equation} 
One may wonder what we can use in place of $|\pat|$, now that the pattern $\mathcal{A}$ can accept infinitely many words that are unboundedly long.

It turns out that our adaptations have the types
\begin{equation}
 \Delta: \Sigma \to [1 ,m + 1] \quad\text{and}\quad
 \beta: S \to [1, m]\enspace, 
\end{equation}
where 
 $m$ is the length of the shortest words accepted by $\mathcal{A}$ and $S$ is the state space of $\mathcal{A}$.
Intuitively, the original $\Delta$ does a comparison of the pattern
 $\pat$ with a
 character $a \in \Sigma$ and the original $\beta$ does a comparison of
 $\pat$ with the substring $\str(i,j)$ of the target string we actually
 read in the last matching trial.
Thus the adaptation can be done by a finite presentation of the
 overapproximation of $L(\mathcal{A})$ and $\str(i,j)$.

More specifically, for the approximation of $L(\mathcal{A})$: 1)
we focus on the length $m$ of the shortest accepted strings (four in the example of Fig.~\ref{fig:patternMatchingExample}); 2) 
we collect all the prefixes of  length $m$ that appear in $L(\mathcal{A})$ 
(\begin{math}
    \textrm{abcd}, \textrm{cdcd},
  \textrm{abcc}, \textrm{cdcc}
 \end{math} in the same example); and
3) we let an overapproximation $L''$ consist of any word that starts with those prefixes. See Fig.~\ref{fig:approx} for illustration; precise definitions are as follows.
\begin{displaymath}\small
\begin{array}{c}
  m=\min\{|w|\mid w\in L(\mathcal{A}) \}
 \quad
 L'
 =\bigl\{w'\in\Sigma^{m}\,\big|\,
 \exists w''\in\Sigma^{*}.\, w'w''\in L(\mathcal{A})
 \bigr\}
 \quad
 L''=L'\cdot \Sigma^{*}
\end{array}
\end{displaymath}
Here $L'\subseteq\Sigma^{m}$ is necessarily a finite set; thus $L''=L'\cdot \Sigma^{*}$ is an overapproximation of $L(\mathcal{A})$ with a finite representation $L'$.

For the overapproximation of the substring $\str(i,j)$ that we actually
read at the last matching trial, we exploit the set 
\begin{math}
 \mathcal{S}(\str(i,j)) = \{s\in S\mid s_{0}\xrightarrow{\str(i,j)} s \text{ in $\mathcal{A}$}\}
\end{math}
of states of $\mathcal{A}$.
 We have 
\begin{math}
 \str(i,j) \in \{w' \mid \forall s \in \mathcal{S}(\str(i,j)), \exists s_0 \in S_0.\,  s_{0}\xrightarrow{w'} s \text{ in $\mathcal{A}$}\}
\end{math}
, when $\mathcal{S}(\str(i,j)) \ne \emptyset$.
Using the overapproximation same as the one for $L'$, we obtain an
overapproximation of such $\str(i,j)$ 
represented by at most $2^{|S|}$ sets.



\begin{figure}[t]
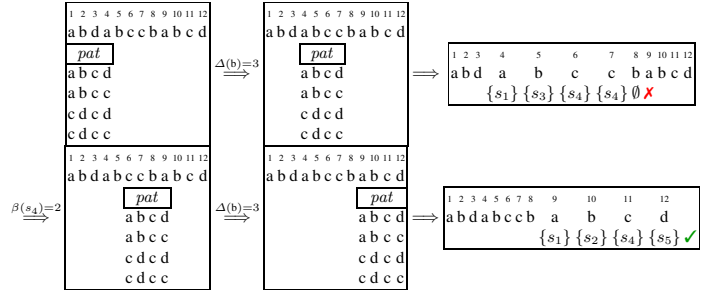

 \setlength{\tabcolsep}{1pt}
 \centering
\scalebox{.7}{  \begin{tabular}{c c c c}
   &
 \begin{tabular}{|cccccccccccc|}
  \hline
 \tiny 1&\tiny 2&\tiny 3&\tiny 4&\tiny 5&\tiny 6&\tiny 7&\tiny 8&\tiny
 9&\tiny 10&\tiny 11&\tiny 12\\
  a&b&d&a&b&c&c&b&a&b&c&d\\ \cline{1-4}
  \multicolumn{4}{|c|}{$\pat$}&&&&&&&&\\ \cline{1-4}
  a&b&c&d &&&&&&&&\\
  a&b&c&c &&&&&&&&\\
  c&d&c&d &&&&&&&&\\
  c&d&c&c &&&&&&&&\\  \hline
 \end{tabular}
 $\stackrel{\Delta(\textrm{b})=3}{\Longrightarrow}$&
 \begin{tabular}{|cccccccccccc|}
  \hline
 \tiny 1&\tiny 2&\tiny 3&\tiny 4&\tiny 5&\tiny 6&\tiny 7&\tiny 8&\tiny
 9&\tiny 10&\tiny 11&\tiny 12\\
  a&b&d&a&b&c&c&b&a&b&c&d\\ \cline{4-7}
  &&& \multicolumn{4}{|c|}{$\pat$} &&&&&\\ \cline{4-7}
  &&& a&b&c&d &&&&& \\
  &&& a&b&c&c &&&&&\\
  &&& c&d&c&d &&&&&\\
  &&& c&d&c&c &&&&&\\\hline
 \end{tabular}
 $\Longrightarrow$&
 \begin{tabular}{|cccccccccccc|}
  \hline
 \tiny 1&\tiny 2&\tiny 3&\tiny 4&\tiny 5&\tiny 6&\tiny 7&\tiny 8&\tiny
 9&\tiny 10&\tiny 11&\tiny 12\\
  a&b&d&a&b&c&c&b&a&b&c&d\\ 
  &&&$\{s_1\}$&$\{s_3\}$&$\{s_4\}$&$\{s_4\}$&$\emptyset$&\color{red}{\xmark}&&&\\\hline
 \end{tabular}
 \\
$\stackrel{\beta(s_4)=2}{\Longrightarrow}$&
 \begin{tabular}{|cccccccccccc|}
  \hline
 \tiny 1&\tiny 2&\tiny 3&\tiny 4&\tiny 5&\tiny 6&\tiny 7&\tiny 8&\tiny
 9&\tiny 10&\tiny 11&\tiny 12\\
  a&b&d&a&b&c&c&b&a&b&c&d\\ \cline{6-9}
  && &&& \multicolumn{4}{|c|}{$\pat$}&&&\\ \cline{6-9}
  && &&& a&b&c&d &&&\\
  && &&& a&b&c&c &&&\\
  && &&& c&d&c&d &&&\\
  && &&& c&d&c&c &&&\\\hline
 \end{tabular}
 $\stackrel{\Delta(\textrm{b})=3}{\Longrightarrow}$&
 \begin{tabular}{|cccccccccccc|}
  \hline
 \tiny 1&\tiny 2&\tiny 3&\tiny 4&\tiny 5&\tiny 6&\tiny 7&\tiny 8&\tiny
 9&\tiny 10&\tiny 11&\tiny 12\\
  a&b&d&a&b&c&c&b&a&b&c&d\\ \cline{9-12}
  &&& && &&& \multicolumn{4}{|c|}{$\pat$}\\ \cline{9-12}
  &&& && &&& a&b&c&d\\
  &&& && &&& a&b&c&c\\
  &&& && &&& c&d&c&d\\
  &&& && &&& c&d&c&c\\\hline
 \end{tabular}
 $\Longrightarrow$&       
 \begin{tabular}{|ccccccccccccc|}
  \hline
 \tiny 1&\tiny 2&\tiny 3&\tiny 4&\tiny 5&\tiny 6&\tiny 7&\tiny 8&\tiny
 9&\tiny 10&\tiny 11&\tiny 12&\\
  a&b&d&a&b&c&c&b&a&b&c&d&\\
  &&& && &&&$\{s_1\}$&$\{s_2\}$&$\{s_4\}$&$\{s_5\}$&\color{dgreen}{\cmark}\\
  \hline
 \end{tabular}
 \end{tabular}
}  \caption{Our FJS-type algorithm for pattern matching, for the example in Fig.~\ref{fig:patternMatchingExample}}
 \label{fig:patternMatchConfig}
\end{figure}

Let us demonstrate our two skip value functions $\Delta$ and $\beta$ using the example in Fig.~\ref{fig:patternMatchingExample}; the execution trace of our algorithm is in Fig.~\ref{fig:patternMatchConfig}. In the first configuration we try to match the tail of all the possible length-4 prefixes of $L(\mathcal{A})$ with $\str(4)=\text{a}$, which fails. Then we invoke the Quick Search-type skipping $\Delta(\str(5))=\Delta(\text{b})$; since $\text{b}$ occurs no later than in the second position in $L'=\{\text{abcd},\text{abcc},\text{cdcd},\text{cdcc}\}$, we can skip by three positions and reach the second configuration. 

We again try matching from the tail; this time we succeed since $\str(7)=\text{c}$ appears as a tail in $L'$. We subsequently move to the phase where we match from left to right, much like in the original FJS algorithm (\S{}\ref{subsec:stringMatching}). Concretely this means we feed the automaton $\mathcal{A}$ (see Fig.~\ref{fig:patternMatchConfig}) the remaining segment  $\str(4)\str(5)\dotsc$ from left to right; we obtain $\{s_1\}\{s_3\}\{s_4\}\{s_4\}\emptyset$ as the sequence of reachable sets. Since no accepting states occur therein and we have reached the emptyset, we conclude that the matching trial starting at the position $\str(4)$ is unsuccessful.

\begin{wrapfigure}{r}{0pt}
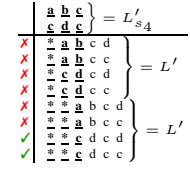

 \setlength{\tabcolsep}{1pt}
 \tiny
 \begin{tabular}{l|l}
  &
      \multirow{2}{*}{
      $\left.
      \begin{array}{ccc}
       \textbf{\underline a}&\textbf{\underline b}&\textbf{\underline c}\\
        \textbf{\underline c}&\textbf{\underline d}&\textbf{\underline c}
      \end{array}
          \right\} = L'_{s_4}
          $}\\
  \\
  \hline 
 \color{red}\xmark&
      \multirow{4}{*}{
      $\left.
\begin{array}{cccccc}
\textbf{\underline *}&\textbf{\underline a}&\textbf{\underline b}&\textrm{c}&\textrm{d}&\\
\textbf{\underline *}&\textbf{\underline a}&\textbf{\underline b}&\textrm{c}&\textrm{c}&\\
\textbf{\underline *}&\textbf{\underline c}&\textbf{\underline d}&\textrm{c}&\textrm{d}&\\
\textbf{\underline *}&\textbf{\underline c}&\textbf{\underline d}&\textrm{c}&\textrm{c}&\\
       \end{array}
  \right \} = L'
$}\\
\color{red}\xmark\\
\color{red}\xmark\\
\color{red}\xmark\\
\color{red}\xmark&
      \multirow{4}{*}{
$\left.
\begin{array}{cccccc}
\textbf{\underline *}&\textbf{\underline *}&\textbf{\underline a}&\textrm{b}&\textrm{c}&\textrm{d}\\
\textbf{\underline *}&\textbf{\underline *}&\textbf{\underline a}&\textrm{b}&\textrm{c}&\textrm{c}\\
\textbf{\underline *}&\textbf{\underline *}&\textbf{\underline c}&\textrm{d}&\textrm{c}&\textrm{d}\\
\textbf{\underline *}&\textbf{\underline *}&\textbf{\underline c}&\textrm{d}&\textrm{c}&\textrm{c}
\end{array}
  \right\} = L'
$}\\
 \color{red}\xmark\\
 \color{dgreen}\cmark\\
 \color{dgreen}\cmark\\
 \end{tabular}
  \caption{$\beta(s_4)$}
  \label{fig:tableForPatternBeta}
\end{wrapfigure}
Now we invoke the KMP-type skipping $\beta$. In the original FJS
algorithm we used the data of successful partial matching
($\str(4,7)=\text{abcc}$ in the current case) for computing $\beta$;
this is not possible, however, since it is infeasible to prepare skip
values for all possible $\str(i,j)$. Instead we use the data
$\mathcal{S}(\str(4,7)) = \{s_4\}$ and the set 
$L'_{s_{4}}=\{\text{abc},\text{cdc}\}$ as an overapproximation of the
partial match $\str(4,7) = \text{abcc}$. The intuition of the set $L'_{s_{4}}$ is that: for a word $w'$ to drive $\mathcal{A}$ from an initial state to $s_{4}$, $w'$ must have either $\text{abc}$ or $\text{cdc}$ as its prefix. In Fig.~\ref{fig:tableForPatternBeta} is how we compute the skip value $\beta(s_{4})$, using the approximant $L'_{s_{4}}$ of the partial match and the approximant $L'$ of the pattern. Note also that it follows the same pattern as Fig.~\ref{fig:tableForBeta}. 

We are now in the fourth configuration in Fig.~\ref{fig:patternMatchConfig}. The matching trial at the position $9$ fails and we invoke the Quick Search-type skipping, much like before. In the fifth configuration, the matching trial at the position $12$ succeeds, which makes us try matching from the left, feeding $\mathcal{A}$ with $\str(9,12)$. We reach $s_{5}$ and thus succeed.

\begin{algorithm}[t]
 \caption{The FJS algorithm for pattern matching, for a target $\str$ and a pattern $\mathcal{A}$}
 \label{alg:FJSPatternMatching}
 \scalebox{0.8}{
 \parbox{1.25\textwidth}{
 \begin{algorithmic}[1]
  \Ensure $Z$ is the set of matching intervals.  
  \State $n \gets 1;$
   \Comment{$n$ is the position in $\str$ of the head of $\pat$}
  \While{$n \leq |\str|-m+1$}     
  \While{$\forall w' \in L'.\, \str({n+m-1})\neq w'(m)$}
    \Comment{Try  matching the tail of $L'$}
  \State $n \gets n + \Delta(\str({n + m}))$
  \Comment{Quick Search-type skipping}
   \If{$n > |w| - m + 1$}
   \Return
   \EndIf
  \EndWhile
  \State $Z\gets Z\cup \{(n,n') \mid w (n,n') \in L (\mathcal{A})\}$
  \Comment{We try matching by feeding $w(n,|w|)$ to $\mathcal{A}$}
  \State $n' \gets \max\{n' \in [1, |\str|] \mid \exists s_0 \in S_0,s \in S.\, s_0
  \xrightarrow{w(n,n')} s\}$
  \Comment{$n'$ is the position of the last successful match}
  \State $S' \gets \{s \in S \mid \exists s_0 \in S_0.\, s_0 \xrightarrow{w(n,n')} s\}$
  \Comment{Matching trials stack at the states $S'$}
  \State $n \gets n + \max_{s \in S'}{\beta(s)}$
  \Comment{KMP-type skipping}
  \EndWhile
 \end{algorithmic}
}}
\end{algorithm}

Overall our FJS-type algorithm for pattern matching is as in Algorithm~\ref{alg:FJSPatternMatching}. The skip value functions therein are defined as follows. They are similar to the ones in~(\ref{eq:alternativeReprOfFJSSkipping}). Since $L'$ and $L'_{s}$ are all finite, computing $\Delta$ and $\beta$ is straightforward.

\vspace{.3em}
\noindent
\begin{minipage}{\textwidth}
\begin{mydefinition}[Skip values in our FJS-type  pattern
 matching algorithm]\label{def:patternMatchingSkipValue}
 Let $\mathcal{A}=(\Sigma,S,S_{0},E,F)$ be a pattern NFA, $a\in\Sigma$
 be a character, $s$ be a state of $\mathcal{A}$, and
$\mathcal{A}_{s}=(\Sigma,S,S_{0},E,\{s\})$ be the automaton where $s$ is
the only accepting state. 
Let 
$m_{s}
= \min\{|w|\mid w\in L(\mathcal{A}_{s})\}$ 
(the length of a shortest word that leads to $s$)
and 
$m
 =
 \min_{s\in F}m_{s}
 $ (the length of a shortest accepted word). 
The skip value functions
 $\Delta: \Sigma \to [1 ,m + 1]$ and
 $\beta: S \to [1, m]$ are defined as follows. 
\begin{equation*}\small
 \begin{aligned}
 L'
 &=
 \{w(1,m) \mid w\in L(\mathcal{A})\}
 \qquad
 L'_{s}
 =
 \{w(1,\min\{m_{s},m\}) \mid w\in L(\mathcal{A}_{s})\}
 \\
 \Delta (a) &=
 \min
 \{n \in \mathbb{Z}_{>0} \mid \Sigma^{n}  L' \cap
 \Sigma^{m}  a  \Sigma^* \neq \emptyset\}
 \\
\beta (s) &= 
 \min
 \{n \in \mathbb{Z}_{>0} \mid \Sigma^{n} L' \cap L'_s \Sigma^* \neq \emptyset
 \}
\end{aligned}\end{equation*}
\end{mydefinition}
\end{minipage}

\section{An FJS-Type Algorithm for Timed Pattern Matching}
\label{sec:timedFJS}
Here we present our second main contribution: an FJS-type algorithm for timed pattern matching. It is superior to our previous Boyer--Moore-type algorithm~\cite{DBLP:conf/formats/WagaAH16}, in its performance both in offline and online scenarios.
We fix a target timed word $w = (\overline{a},
\overline{\tau})$ and a pattern timed automaton
$\mathcal{A} = (\Sigma \sqcup \{\$\},S,S_0,C,E,F)$.
We further assume the following that means $\mathcal{A}$ is a suitable
pattern for timed pattern matching.

\vspace{.3em}
\noindent
\begin{minipage}{\textwidth}
\begin{myassumption}
 $\mathcal{A}$ satisfies the following:
 any transition to an accepting state is labelled with the terminal character \$;
  no other transition is labelled with \$;
 and there is no transition from an accepting state.
\end{myassumption}
\end{minipage}

The basic idea of our FJS-type  algorithm here is the same as
in~\S{}\ref{sec:fjs_pattern_matching}: we use two skip value functions
$\Delta$ and $\beta$; and
and for their finitary representation we let states of automata
 overapproximate various infinitary data, as we 
 explain later. 
 In the current timed setting, however, we cannot use a pattern timed automaton $\mathcal{A}$ itself to play the same role---in a run of $\mathcal{A}$ a state is always accompanied with a clock valuation that takes continuous values. We overcome this difficulty relying on the \emph{zone abstraction} of timed automata, a construction that turns a timed automaton into an NFA maintaining reachability (see e.g.~\cite{DBLP:conf/cav/HerbreteauSW10}).\footnote{In our previous work~\cite{DBLP:conf/formats/WagaAH16} we used \emph{regions}~\cite{Alur1994} in place of zones. Though equivalent in terms of finiteness, zones give more efficient abstraction than regions.}

\vspace{.3em}
\noindent
\begin{minipage}{\textwidth}
\begin{mydefinition} [zone]
\label{def:zone} 
Let $\mathcal{A}$ be a timed automaton over the set $C$ of clock 
 variables, and $M$ be the maximum constant occurring in the guards of
 $\mathcal{A}$. A \emph{zone} is a $|C|$-dimensional polyhedron specified with a conjunction of the constraints of the form 
 $\nu (x_j) - \nu (x_i) \prec c$,
  $\nu (x_i) \prec c$ 
 or $-\nu (x_i) \prec c$,
where ${\prec} \in \{<,\leq\}$ and 
  $c \in [-M, M]$.
\end{mydefinition}
\end{minipage}

A \emph{zone automaton}  $\mathcal{Z}$ for a timed automaton $\mathcal{A}$ is an NFA whose states are pairs $(s,\alpha)$ of a state $s$ of $\mathcal{A}$ and a zone $\alpha$; it is meant to be a finite abstraction of the timed automaton $\mathcal{A}$ via which we study properties of $\mathcal{A}$.
There are many different known constructions of zone automata (see e.g.~\cite{DBLP:conf/cav/HerbreteauSW10,DBLP:conf/tacas/BehrmannBFL03}): they come with different efficiency (i.e.\ the size of the resulting NFA), and with different preservation properties (bisimilarity to $\mathcal{A}$, similarity, etc.). For our current purpose it does not matter which precise construction we use;
we chose a common construction $\mathit{SG^a}$ from~\cite{DBLP:conf/cav/HerbreteauSW10}, mainly for its ease of implementation.

A \emph{path} of a zone automaton  $\mathcal{Z}$ is much like a run, but it is allowed to start at a possibly non-initial state.
A  path $r = (\overline{s},\overline{\nu})$ of a timed automaton $\mathcal{A}$ is called an \emph{instance} of a  path $\overline{r} = (\overline{s},\overline{\alpha})$ of a zone automaton $\mathcal{Z}$ for $\mathcal{A}$ if, for any $n \in [0,|s|-1]$, we have $\overline{\nu}_n \in \overline{\alpha}_n$.
Conversely, such $\overline{r}$ is called an \emph{abstraction} of $r$.
In this paper we rely on the following  preservation property of the specific construction $\mathcal{Z}=\SG(\mathcal{A})$ of zone automata: every run in $\mathit{SG^a}(\mathcal{A})$ is an abstraction of some run of $\mathcal{A}$; conversely every run of $\mathcal{A}$ is an instance of some run in $\mathit{SG^a}(\mathcal{A})$.
See~\cite{DBLP:conf/cav/HerbreteauSW10} for details.

\begin{algorithm}[t]
 \caption{\small Our FJS-type algorithm for timed pattern matching, for a target $\str$ and a pattern $\mathcal{A}$}
 \label{alg:FJSTimedPatternMatching}
 \scalebox{0.8}{
 \parbox{1.25\textwidth}{
 \begin{algorithmic}[1]
   \Ensure $Z$ is the match set $\mathcal{M} (w,\mathcal{A})$ in
   Def.~\ref{def:TimedPatternMatching}.
  \State $n \gets 1;$
   \Comment{$n$ is the  position in $\str$ of the beginning of the current matching trial}
  \State $\nu_0 \gets (\text{the clock valuation that returns $0$ for any
  clock variable})$
  \While{$n \leq |\str|-m+2$}     
  \While{$\forall\, \overline{r} \in L'.\, \overline{a}_{n+m-2}\neq a' \,\text{(where $a'$ is such that $\overline{r}_{m-2} \xrightarrow{a'} \overline{r}_{m-1}$)}$}
    \Comment{Try matching the tail of $L'$}
  \State $n \gets n + \Delta(\overline{a}_{n + m - 1})$
  \Comment{Quick Search-type skipping}
   \If{$n > |w| - m + 2$}
   \Return
   \EndIf
  \EndWhile
  \State $Z\gets Z\cup \{(t,t') \in [\tau_{n-1},\tau_t) \times (\tau_{n-1},\infty) \mid w|_{(t,t')} \in L (\mathcal{A})\}$
  \label{line:leftToRightMatchingInTimedFJS}
  \Comment{Try matching from left to right
 }
  \State $n' \gets \max\{n' \in [1, |\str|] \mid \exists s_0 \in S_0,s \in S,\nu \in (\Rnn)^C.\, (s_0,\nu_0) \xrightarrow{w(n,n')} (s,\nu)\}$
  \State $S' \gets \{s \in S \mid \exists s_0 \in S_0,\nu \in (\Rnn)^C.\, (s_0,\nu_0) \xrightarrow{w(n,n')} (s,\nu)\}$
  \Comment{Matching trials stack at the states $S'$}
  \State $n \gets n + \max_{s \in S'}{\beta(s)}$
  \Comment{KMP-type skipping}
  \EndWhile
 \end{algorithmic}
}}
\end{algorithm}
Our algorithm is in Algorithm~\ref{alg:FJSTimedPatternMatching}. The constructs therein are defined as follows.

\vspace{.3em}
\noindent
\begin{minipage}{\textwidth}
\begin{mydefinition}[FJS-type skip values for timed pattern matching]
 \label{def:timed_fjs_skip_value}
Let $\overline{r}$ be a path of the zone automaton $\SG(\mathcal{A})$. 
The 
 set  $\mathcal{W} (\overline{r})$ of timed words represented by $\overline{r}$ is:
\begin{displaymath}
 \mathcal{W} (\overline{r}) = \{ w\in(\Sigma \times \Rp)^{*} \mid
 \text{there is a path $r$ of $\mathcal{A}$ over $w$ that is an instance of $\overline{r}$}
\}\enspace.
\end{displaymath}
For a set $K$ of paths of  $\SG(\mathcal{A})$, the definition naturally extends by $\mathcal{W} (K)=\bigcup_{\overline{r}\in K}\mathcal{W}(\overline{r})$. 
Let
$\mathcal{A}_s= (\Sigma,S,S_0,E,C,\{s\})$ be the modification of $\mathcal{A}$ in which $s$ is the only accepting state.
Let 
$m_{s}
= \min\{|w|\mid w\in L(\mathcal{A}_{s})\}$ 
and 
$m
 =
 \min_{s\in F}m_{s}
 $.
Following the discussion in \S{}\ref{sec:fjs_pattern_matching},
we define the overapproximations $L''$ of $L(\mathcal{A})$ and $L'_{s}$. as follows.
Note that $L'$ and $L'_{s}$ are in fact sets of runs of $\SG(\mathcal{A})$; $L''$ is a set of timed words.
\begin{equation*}
 \small
\begin{aligned}
 L' &= \{\overline{r}(0,m-1) \mid \text{$\overline{r}$ is a  run of $\SG(\mathcal{A})$, and } \mathcal{W} (\overline{r}) \cap L(\mathcal{A}) \ne \emptyset\} 
\\
 L''&= \mathcal{W}(L') \cdot (\Sigma \times \Rp)^*  
\\
 L'_{s}
 &=
 \{\overline{r}(0,\min\{m_{s},m - 1\}) \mid
 \text{$\overline{r}$ is a  run of $\SG(\mathcal{A})$, and }\mathcal{W} (\overline{r}) \cap L(\mathcal{A}_{s}) \ne \emptyset\}
\end{aligned}
\end{equation*}
These are used in the following definition of skip values. Here $a\in \Sigma$ and $s\in S$. 
\begin{equation}\label{eq:skipValuesForTimedFJS}\small
  \begin{aligned}
 \Delta (a) &=
 \min
 \{n \in \mathbb{Z}_{>0} \mid 
\\
  &
  \exists t \in \mathbb{R}_{>0}.\, (\Sigma \times \mathbb{R}_{>0})^{n} \cdot \mathcal{W}(L') \cap
(\Sigma \times \mathbb{R}_{>0})^{m-1}  \cdot (a,t) \cdot (\Sigma \times \mathbb{R}_{>0})^* \neq \emptyset\}\\
 \beta (s) &=
 \min
 \{n \in \mathbb{Z}_{>0} \mid (\Sigma \times \mathbb{R}_{>0})^{n} \cdot
 \mathcal{W}(L') \cap \mathcal{W}(L'_s) \cdot (\Sigma \times
 \mathbb{R}_{>0})^* \neq \emptyset \}
\end{aligned}
\end{equation}
\end{mydefinition}
Note the similarity between the last definition and~(\ref{eq:alternativeReprOfFJSSkipping}).
\end{minipage}

 Explanation is in order how some operations in Algorithm~\ref{alg:FJSTimedPatternMatching} (and in Def.~\ref{def:timed_fjs_skip_value}) can be implemented. First note that $\mathcal{W}(\overline{r})$ is an infinite set. The set $L'$ is finite and computable nevertheless: due to the preservation property of the zone automaton $\SG(\mathcal{A})$, the condition $\mathcal{W} (\overline{r}) \cap L(\mathcal{A}) \ne \emptyset$ simply means $\overline{r}$ is accepting. The same goes for $L'_{s}$. For $\Delta$, we realize that the second argument
$(\Sigma \times \mathbb{R}_{>0})^{m-1}  \cdot (a,t) \cdot (\Sigma \times \mathbb{R}_{>0})^*$ of the intersection does not pose any timing constraint. Therefore the timed nonemptiness problem reduces to an untimed one that is readily solved. Solving the timed nonemptiness problem for $\beta$ in~(\ref{eq:skipValuesForTimedFJS}) is nontrivial. Here we use emptiness check in
$\SG(\mathcal{A} \times \mathcal{A})$---the zone automaton of the product of $\mathcal{A}$ with itself, changing its initial state suitably in order to address shift of words---to check whether the intersection of the two relevant languages is empty. Finally, the left-to-right matching on Line~\ref{line:leftToRightMatchingInTimedFJS} is done by accumulating constraints on $t$ in the course of necessary transitions. Further details are in Appendices~\ref{appendix:detailFJS}--\ref{appendix:TimedFJSIllustrated}.

A correctness proof (i.e.\ our skipping does not affect the output) is in Appendix~\ref{appendix:correctness}.

\auxproof{
In our FJS-type algorithm for timed pattern matching, we use the following skip value functions: $\Delta: \Sigma \to [1,m]$ and $\beta: S \to [1,m-1]$ where $m$ is the length of a shortest word accepted by $\mathcal{A}$ and $S$ is the state space of $\mathcal{A}$. 
Though the functions have similar type to the ones for pattern matching, we cannot compute them in the way in \S{}\ref{sec:fjs_pattern_matching}.
That is because there exist infinitely many length-$m$ timed words due to the infinity of the time domain.
Similarly to~\cite{DBLP:conf/formats/WagaAH16}, we use an abstraction of clock values maintaining soundness and completeness (w.r.t reachability).
In~\cite{DBLP:conf/formats/WagaAH16}, \emph{region}~\cite{Alur1994} was used.
Here instead, we employ \emph{zone}, more precisely $\SG$ in~\cite{DBLP:conf/cav/HerbreteauSW10}.
Since a state $(s,[\nu])$ of the \emph{zone automaton} $\SG(\mathcal{A})$ is isomorphic to a set of a pair $(s,\nu)$ of a state $s$  of $\mathcal{A}$ and a clock valuation $\nu$ of $\mathcal{A}$, a (partial) run $\overline{r}$ of $\SG(\mathcal{A})$ is naturally identified with a set of (partial) runs $r$ of $\mathcal{A}$.
By a partial run $\overline{r}$ of the zone automaton $\SG(\mathcal{A})$, we represent the set of timed words along $\overline{r}$, precisely 
$\mathcal{W} (\overline{r}) = \{ w | \text{$\exists r \in \overline{r}.\,r$ is a partial run of $\mathcal{A}$ over $w$}\}$.
The definition of $\mathcal{W}(\overline{r})$ naturally extends to a set of partial runs of $\SG(\mathcal{A})$.
The abstraction $L''$ of $L(\mathcal{A})$ is
\[
 L' = \{\overline{r}(0,m-1) \mid \mathcal{W} (\overline{r}) \cap L(\mathcal{A}) \ne \emptyset\} \qquad
 L''= \mathcal{W}(L') \cdot (\Sigma \times \Rp)^*  
\]
Since zone-abstraction makes the state space smaller than region-abstraction practically, zone-based implementation may handle some patterns that are infeasible for region-based implementation e.g. Case~4 in~\cite{DBLP:conf/formats/WagaAH16}.
}

One important idea in our algorithm is that we use timing constraints---in addition to character constraints like in Fig.~\ref{fig:tableForBeta} \&~\ref{fig:tableForPatternBeta}---in calculating skip values. By this we achieve greater skip values, while keeping the computational overhead minimal by the use of the zone automaton $\SG(\mathcal{A} \times \mathcal{A})$. 

The way our algorithm (Algorithm~\ref{alg:FJSTimedPatternMatching}) operates is very similar to  the one in~\S{}\ref{sec:fjs_pattern_matching} for (untimed) pattern matching, as we already described earlier. There the zone automaton $\SG(\mathcal{A})$ plays important roles in the calculation of skip values. For the record we include in Appendix~\ref{appendix:TimedFJSIllustrated} the illustration of our algorithm using the example in Fig.~\ref{fig:input_timed_pattern_matching}.

\vspace{.3em}
\noindent
\begin{minipage}{\textwidth}
\paragraph{Online Properties}
We claim that the current FJS-type algorithm is much better suited to online usage scenarios than our previous BM-type one~\cite{DBLP:conf/formats/WagaAH16}. 
See Fig.~\ref{fig:FJSIsOnlineBMIsNot}. In our FJS-type algorithm 
we can sometimes increment $n$ before reading the whole target timed word $w$ (``unnec.'' for ``unnecessary'' in Fig.~\ref{fig:FJSIsOnlineBMIsNot}); this is the case when we observe that no further transition is possible in the pattern automaton $\mathcal{A}$. (Additionally, thanks to the skip values $\Delta$ and $\beta$, sometimes we can increment $n$ by more than one). For real-world examples we can assume that matches tend to be much shorter than the whole log $w$; this means the ``unnec.'' parts are often big. 
\end{minipage}
 In the BM-type algorithm, in contrast, matching trials  start almost at the tail of $w$,\footnote{To be precise we can start without the last $m-1$ characters, where $m$ is the length of a shortest word accepted by $\mathcal{A}$. Usually $m$ is by magnitude smaller than $|w|$.}  and we have to wait until the arrival of the whole target word. This contrast is 
 witnessed in our experimental results, specifically on those for memory usage.


\section{Experiments}
\label{sec:experiments}


\newcommand{\OFFLINE}{\mathit{offline}}
\newcommand{\ONLINE}{\mathit{online}}
\newcommand{\BRUTEFORCE}{\mathit{bruteforce}}
\newcommand{\BMREGION}{\mathit{BMregion}}
\newcommand{\BMZONE}{\mathit{BMzone}}
\newcommand{\FJS}{\mathit{FJS}}
\newcommand{\IMPL}{\mathcal{I}}
\newcommand\set[1]{\{{#1}\}}

\begin{wrapfigure}[4]{r}{0pt}
\scalebox{.85}{\begin{tabular}{c||c|c|c||c}
&brute-force&BM  
 &FJS
 & Montre 
\\\hline\hline
  offline&from~\cite{DBLP:conf/formats/WagaAH16}&from~\cite{DBLP:conf/formats/WagaAH16}&new&from~\cite{DBLP:journals/corr/Ulus16}\\\hline
  online&from~\cite{DBLP:conf/formats/WagaAH16}&---&new&from~\cite{DBLP:journals/corr/Ulus16}\\
\end{tabular}
}
\end{wrapfigure}
We implemented our FJS-type algorithm for timed pattern matching---its online and offline variations difference between which will be elaborated later. We compared its performance with that of: brute-force algorithms (online and offline); the BM-type algorithm~\cite{DBLP:conf/formats/WagaAH16}; and the tool \emph{Montre}~\cite{DBLP:journals/corr/Ulus16} for timed pattern matching.

The BM- and FJS-type algorithms employ zone-based abstraction; it is implemented using \emph{difference bound matrices}, following~\cite{DBLP:conf/avmfss/Dill89}. Zone construction and calculation of skip values are done in the preprocessing stage, where the most expensive is the emptiness checking for $\beta(s)$ (see~(\ref{eq:skipValuesForTimedFJS})). 
We optimized this part, memorizing parts of zone automata and reusing them in computing  $\beta(s)$ for  different $s$. As a result the preprocessing stage takes a fraction of a second for each of our benchmark problems. See Appendix~\ref{appendix:preprocessing} for details.

For brute-force and FJS, the algorithms are the same in their
online and offline implementations.  In the online implementations, a
target timed word is read lazily and a memory cell is deallocated as
soon as we realize it is no longer needed.  In the offline
implementations, the whole target timed word is read and stored in memory in the beginning, and the memory cells are not deallocated until
the end. The tool Montre employs different algorithms in its online and offline usage modes. See~\cite{DBLP:journals/corr/Ulus16} for details.

 In our current implementations, we hardcode a pattern timed automaton
in the code. Developing a parser for user-defined timed automata should
not be hard.

The benchmark problems we used are in Fig.~\ref{fig:case1_pattern}--\ref{fig:case6_pattern}  (the pattern automata $\mathcal{A}$ and the set $W$ of target words). They are from automotive scenarios except for the first two.

\begin{figure}[tbp]
\setlength\abovecaptionskip{0pt}
 \begin{minipage}{0.48\textwidth}
\scalebox{0.78}{\begin{tikzpicture}
\node at (2.5,3.3) {\textbf{BM}};
 \draw (0,3) rectangle (5,2.5);
 \node at (2.5,2.75) {$w$};

 \foreach \x / \y in {3.5/2.2,4.0/1.9,4.5/1.6}
 {
 \draw [thick,|-|](2.5,\y)--(\x,\y);
 \node at (3.0,\y) {$\blacktriangleright$};
 }
 \node at (3.5,1.3) {$\vdots$}; 
 \draw [thick,|-|](2.5,0.9)--(5,0.9);
 \node at (3.0,0.9) {$\blacktriangleright$};
 \node at (1.2,1.5) {$\Biggg[$};
 \node at (-0.3,1.5) {$n = |w| - m + 2$};
 \node at (5.3,1.2) {$\Biggr]$};
 \node at (5.8,1.2) {unnec.};
 \foreach \x / \y in {3.0/0.4,3.5/0.1,4.0/-0.2}
 {
 \draw [thick,|-|](2.0,\y)--(\x,\y);
 \node at (2.5,\y) {$\blacktriangleright$};
 }
 \node at (3.0,-0.5) {$\vdots$};
 \draw [thick,|-|](2.0,-0.9)--(5.0,-0.9); 
 \node at (2.5,-0.9) {$\blacktriangleright$};
 \node at (1.2,-0.3) {$\Biggg[$};
 \node at (-0.2,-0.3) {$n = |w| - m + 1$};
 \node at (5.3,-0.6) {$\Biggr]$};
 \node at (5.8,-0.6) {unnec.};

 \node at (2.5,-1.3) {${\Large \vdots}$};
 \foreach \x / \y in {1.0/-1.9,1.5/-2.2,2.0/-2.5}
 {
 \draw [thick,|-|](0,\y)--(\x,\y);
 \node at (0.5,\y) {$\blacktriangleright$};
 }
 \node at (0.75,-2.8) {$\vdots$};
 \draw [thick,|-|](0.0,-3.2)--(5.0,-3.2);
 \node at (0.5,-3.2) {$\blacktriangleright$};
 \node at (-0.5,-2.5) {$\Biggg[$};
 \node at (-1.0,-2.5) {$n = 1$};
 \node at (5.3,-2.9) {$\Biggr]$};
 \node at (5.8,-2.9) {unnec.};
\end{tikzpicture}}
\end{minipage}
\hfill
 \begin{minipage}{0.48\textwidth}
\scalebox{0.78}{\begin{tikzpicture}
\node at (2.5,3.3) {\textbf{FJS}};
 \draw (0,3) rectangle (5,2.5);
 \node at (2.5,2.75) {$w$};

 \foreach \x / \y in {1.0/2.2,1.5/1.9,2.0/1.6}
 {
 \draw [thick,|-|](0,\y)--(\x,\y);
 \node at (0.5,\y) {$\blacktriangleright$};
 }
 \node at (1.0,1.3) {$\vdots$};
 \draw [thick,|-|](0,0.9)--(5,0.9);
 \node at (0.5,0.9) {$\blacktriangleright$};
 \node at (-0.5,1.5) {$\Biggg[$};
 \node at (-1.0,1.5) {$n = 1$};
 \node at (5.3,1.2) {$\Biggr]$};
 \node at (5.8,1.2) {unnec.};
 \foreach \x / \y in {1.5/0.4,2.0/0.1,2.5/-0.2}
 {
 \draw [thick,|-|](0.5,\y)--(\x,\y);
 \node at (1.0,\y) {$\blacktriangleright$};
 }
 \node at (1.25,-0.5) {$\vdots$};
 \draw [thick,|-|](0.5,-0.9)--(5,-0.9);
 \node at (1.0,-0.9) {$\blacktriangleright$};
 \node at (-0.5,-0.3) {$\Biggg[$};
 \node at (-1.0,-0.3) {$n = 2$};
 \node at (5.3,-0.6) {$\Biggr]$};
 \node at (5.8,-0.6) {unnec.};

 \node at (2.5,-1.3) {${\Large \vdots}$};

 \foreach \x / \y in {3.5/-1.9,4.0/-2.2,4.5/-2.5}
 {
 \draw [thick,|-|](2.5,\y)--(\x,\y);
 \node at (3.0,\y) {$\blacktriangleright$};
 }
 \node at (3.5,-2.8) {$\vdots$};
 \draw [thick,|-|](2.5,-3.2)--(5.0,-3.2);
 \node at (3.0,-3.2) {$\blacktriangleright$};
 \node at (1.5,-2.5) {$\Biggg[$};
 \node at (0,-2.5) {$n = |w| - m + 2$};
 \node at (5.3,-2.9) {$\Biggr]$};
 \node at (5.8,-2.9) {unnec.};
\end{tikzpicture}}
\end{minipage}
 \caption{How matching trials proceed: our previous BM-type algorithm (on the left) and our current FJS-type algorithm (on the right). }
 \label{fig:FJSIsOnlineBMIsNot}
\vspace{1.5em}
\begin{minipage}[t]{0.47\textwidth}
  \centering
   \scalebox{0.7}{
 \begin{tikzpicture}[shorten >=1pt,node distance=2cm,on grid,auto]
 \node[state,initial] (s_0) {$s_0$};
 \node[state] (s_1) [right of=s_0] {$s_1$};
 \node[state] (s_2) [right of=s_1] {$s_2$};
 \node[state,accepting] (s_3) [right of=s_2]{$s_3$};

 \path[->] 
  (s_0) edge [above] node {$\text{a},\mathbf{true}$} (s_1)
  (s_1) edge [above] node {$\text{b},\mathbf{true}$} (s_2)
  (s_2) edge [above] node {$\$,\mathbf{true}$} (s_3);
 \end{tikzpicture}}
 \caption{\textsc{Simple} from~\cite{DBLP:conf/formats/WagaAH16}.
 The set $W$ consists of alternations of \text{a} and \text{b} whose length is from 20 to 1,024,000. Timing is random.
}
 \label{fig:case1_pattern}
  \vspace{1em}
  \centering
   \scalebox{0.9}{
\begin{math}
 \bigl\langle\bigl(\,\bigl(\langle\mathrm{p}\cdot\mathrm{\neg p}\rangle_{(0,10]}\bigr)^*\land\bigl(\langle\mathrm{q}\cdot\mathrm{\neg
    q}\rangle_{(0,10]}\bigr)^*\,\bigr)\cdot \$\bigr\rangle_{(0,80]}
\end{math}}
 \caption{\textsc{Large Constraints} from~\cite{DBLP:conf/formats/WagaAH16}. The pattern $\mathcal{A}$ is a translation of the above timed regular expression (5 states and 9 transitions). The  set $W$ consists of 
 superpositions of the  alternations $p,\neg p,p,\neg p,\dots$ and $q,\neg q,q,\neg q,\dots$ whose timing follows a certain exponential distribution. The length of words in $W$ is from 1,934 to
   31,935. The pattern $\mathcal{A}$ is in Fig.~\ref{fig:case2_automaton}
}
 \label{fig:case2_pattern}
  \vspace{1em}
  \centering
 \scalebox{0.57}{
 \begin{tikzpicture}[shorten >=1pt,node distance=2.5cm,on grid,auto] 
    \node[state,initial] (s_0)  {$s_0$}; 
    \node[state,node distance=2cm] (s_1) [right=of s_0] {$s_1$}; 
    \node[state] (s_2) [right=of s_1] {$s_2$};
    \node[state] (s_3) [right=of s_2] {$s_3$};
    \node[state,node distance=2cm] (s_4) [below=of s_3] {$s_4$}; 
    \node[state,node distance=2.5cm] (s_5) [left=of s_4] {$s_5$}; 
    \node[state,node distance=2cm] (s_6) [left=of s_5] {$s_6$};
    \node[state,accepting,node distance=2cm] (s_7) [left=of s_6] {$s_7$};
    \path[->] 
    (s_0) edge  [above] node {\begin{tabular}{c}
                               $\textrm{low},\mathbf{true}$\\
                               $/x := 0$
                              \end{tabular}} (s_1)
    (s_1) edge  [above] node {\begin{tabular}{c}
                               $\textrm{high},$\\
                               $0 < x < 1$
                              \end{tabular}} (s_2)
    (s_2) edge  [above] node {\begin{tabular}{c}
                               $\textrm{high},$\\
                               $0 < x < 1$
                              \end{tabular}} (s_3)
    (s_3) edge  [right] node {\begin{tabular}{c}
                               $\textrm{high},$\\
                               $0 < x < 1$
                              \end{tabular}} (s_4)
    (s_4) edge  [above] node {\begin{tabular}{c}
                               $\textrm{high},$\\
                               $0 < x < 1$
                              \end{tabular}} (s_5)
    (s_5) edge  [above] node {\begin{tabular}{c}
                               $\textrm{high},$\\
                               $1 < x$
                              \end{tabular}} (s_6)
    (s_5) edge  [above, loop above] node {$\textrm{high},\mathbf{true}$} (s_5)
    (s_6) edge  [above] node {$\$,\mathbf{true}$} (s_7);
 \end{tikzpicture}}

 \caption{\textsc{Torque}, an automotive example
from~\cite{DBLP:conf/formats/WagaAH16}.
It monitors for
 five or more consecutive occurrences
of $\textrm{high}$ in one second.
 The target words in $W$ (length 242,808--4,873,207) are generated by 
the model
\texttt{sldemo\_enginewc.slx} in the Simulink Demo
palette~\cite{SimulinkGuide} with random input.
}
 \label{fig:case3_pattern}
\end{minipage}
\quad
\begin{minipage}[t]{0.5\textwidth}
  \centering
  \scalebox{0.57}{
  \begin{tikzpicture}[shorten >=1pt,node distance=2.5cm,on grid,auto] 
    \node[state,initial] (s_0)  {}; 
    \node[state,node distance=3.0cm] (s_1) [right=of s_0] {nml};
    \node[state,node distance=2.3cm] (s_2) [right=of s_1] {unstl};
    \node[state,accepting,node distance=2.7cm] (s_3) [right=of s_2] {\cmark};
    \path[->] 
    (s_0) edge  [above] node {$\textrm{normal} /x := 0$} (s_1)
    (s_1) edge  [above] node {$\textrm{unsettled}
                                $} (s_2)
    (s_2) edge  [above] node {$\$, x>100$} (s_3);
  \end{tikzpicture}}
  \caption{\textsc{Settling}. The set $W$ (length 472--47,200,000) is generated by 
  the  Simulink powertrain model  in~\cite{DBLP:conf/hybrid/JinDKUB14}.  
The pattern
 (Requirement (32) in~\cite{DBLP:conf/hybrid/JinDKUB14}) is for an event in which the system remains unsettled  for 100 seconds after moving to the normal mode.
}
  \label{fig:case4_pattern}
  \vspace{1em}
 \centering
  \scalebox{0.6}{
  \begin{tikzpicture}[shorten >=1pt,node distance=2.5cm,on grid,auto] 
    \node[state,initial] (s_0)  {}; 
    \node[state,node distance=2.5cm] (s_1) [right=of s_0] {$\text{g}_1$}; 
    \node[state,node distance=2.5cm] (s_2) [right=of s_1] {$\text{g}_2$};
    \node[state,accepting,node distance=1.5cm] (s_3) [right=of s_2] {\cmark};
    \path[->] 
    (s_0) edge  [above] node {$\text{g}_1 /x := 0$} (s_1)
    (s_1) edge  [above] node {$\text{g}_2, x < 2$} (s_2)
    (s_2) edge  [above] node {$\$$} (s_3);
  \end{tikzpicture}}
  \caption{\textsc{Gear}. The set $W$ (length 307--1,011,427) is generated by the automatic transmission system model in~\cite{DBLP:conf/cpsweek/HoxhaAF14}. The pattern, from $\phi^{\mathit{AT}}_5$ in~\cite{DBLP:conf/cpsweek/HoxhaAF14}, is for an event in which gear shift occurs too quickly (from the 1st to 2nd).
}
  \label{fig:case5_pattern}
  \vspace{1em}
  \centering
  \scalebox{0.45}{
  \begin{tikzpicture}[on grid,auto,node distance=2.5cm]
   \node[state, initial] (s_000) {$?$};

   \node[state] (s_100)[above right=of s_000] {$\text{g}_1$};
   \node[state] (s_001)[below right=of s_000] {$?$};

   \node[state] (s_200)[right=of s_100] {$\text{g}_2$};
   \node[state] (s_101)[right=of s_001] {$\text{g}_1$};

   \node[state] (s_300)[right=of s_200] {$\text{g}_3$};
   \node[state] (s_201)[right=of s_101] {$\text{g}_2$};

   \node[state] (s_400)[right=of s_300] {$\text{g}_4$};
   \node[state] (s_301)[right=of s_201] {$\text{g}_3$};

   \node[state] (s_401)[right=of s_301] {$\text{g}_4$};
   \node[state,accepting] (f)[right=of s_400] {\cmark};

   \path[->]
   (s_000) edge  [above left] node {$\text{g}_1, \mathbf{true}$} (s_100)
   (s_100) edge  [above] node {$\text{g}_2, \mathbf{true}$} (s_200)
   (s_200) edge  [above] node {$\text{g}_3, \mathbf{true}$} (s_300)
   (s_300) edge  [above] node {\begin{tabular}{c}
                                $\text{g}_4, x \leq 10$\\
                                $/x := 0$
                               \end{tabular}} (s_400)

   (s_100) edge  [below left] node {$\mathrm{rpmHigh}, \mathbf{true}$} (s_101)
   (s_200) edge  [below left] node {$\mathrm{rpmHigh}, \mathbf{true}$} (s_201)
   (s_300) edge  [below left] node {$\mathrm{rpmHigh}, \mathbf{true}$} (s_301)
   (s_400) edge  [below left] node {$\mathrm{rpmHigh}, \mathbf{true}$} (s_401)

   (s_001) edge  [above] node {$\text{g}_1, \mathbf{true}$} (s_101)
   (s_101) edge  [above] node {$\text{g}_2, \mathbf{true}$} (s_201)
   (s_201) edge  [above] node {$\text{g}_3, \mathbf{true}$} (s_301)
   (s_301) edge  [below] node {\begin{tabular}{c}
                                $\text{g}_4, x \leq 10$\\
                                $/x := 0$
                               \end{tabular}} (s_401)




   (s_000) edge  [below left] node {$\mathrm{rpmHigh}, \mathbf{true}$} (s_001)
   %

   (s_401) [bend right=0] edge [right] node {$\$, x > 1$} (f);
  \end{tikzpicture}}
  \caption{\textsc{Accel}.
The set $W$ (length 25,002--17,280,002) is generated by the same automatic transmission system model as in \textsc{Gear}. The pattern
is from
 $\phi^{\mathit{AT}}_8$ 
in~\cite{DBLP:conf/cpsweek/HoxhaAF14}: although the gear shifts from 1st to 4th and RPM is high enough somewhere in its course, the vehicle velocity is not high enough (i.e.\ the character veloHigh is absent). 
 }
  \label{fig:case6_pattern}
 \end{minipage}
\end{figure}

\subsection{Comparison with the Brute Force and BM-Type Algorithms}
\label{subsec:comparisonWithBFAndBM}

\begin{figure}[tbp]
\setlength\abovecaptionskip{0pt}
\begin{minipage}{.48\textwidth}
  \centering
  \scalebox{0.4}{
  \begin{tikzpicture}[gnuplot]
\path (0.000,0.000) rectangle (12.500,8.750);
\gpcolor{color=gp lt color border}
\gpsetlinetype{gp lt border}
\gpsetdashtype{gp dt solid}
\gpsetlinewidth{1.00}
\draw[gp path] (1.320,0.985)--(1.500,0.985);
\draw[gp path] (11.947,0.985)--(11.767,0.985);
\node[gp node right] at (1.136,0.985) {$0$};
\draw[gp path] (1.320,2.218)--(1.500,2.218);
\draw[gp path] (11.947,2.218)--(11.767,2.218);
\node[gp node right] at (1.136,2.218) {$20$};
\draw[gp path] (1.320,3.450)--(1.500,3.450);
\draw[gp path] (11.947,3.450)--(11.767,3.450);
\node[gp node right] at (1.136,3.450) {$40$};
\draw[gp path] (1.320,4.683)--(1.500,4.683);
\draw[gp path] (11.947,4.683)--(11.767,4.683);
\node[gp node right] at (1.136,4.683) {$60$};
\draw[gp path] (1.320,5.916)--(1.500,5.916);
\draw[gp path] (11.947,5.916)--(11.767,5.916);
\node[gp node right] at (1.136,5.916) {$80$};
\draw[gp path] (1.320,7.148)--(1.500,7.148);
\draw[gp path] (11.947,7.148)--(11.767,7.148);
\node[gp node right] at (1.136,7.148) {$100$};
\draw[gp path] (1.320,8.381)--(1.500,8.381);
\draw[gp path] (11.947,8.381)--(11.767,8.381);
\node[gp node right] at (1.136,8.381) {$120$};
\draw[gp path] (1.320,0.985)--(1.320,1.165);
\draw[gp path] (1.320,8.381)--(1.320,8.201);
\node[gp node center] at (1.320,0.677) {$0$};
\draw[gp path] (3.091,0.985)--(3.091,1.165);
\draw[gp path] (3.091,8.381)--(3.091,8.201);
\node[gp node center] at (3.091,0.677) {$20$};
\draw[gp path] (4.862,0.985)--(4.862,1.165);
\draw[gp path] (4.862,8.381)--(4.862,8.201);
\node[gp node center] at (4.862,0.677) {$40$};
\draw[gp path] (6.634,0.985)--(6.634,1.165);
\draw[gp path] (6.634,8.381)--(6.634,8.201);
\node[gp node center] at (6.634,0.677) {$60$};
\draw[gp path] (8.405,0.985)--(8.405,1.165);
\draw[gp path] (8.405,8.381)--(8.405,8.201);
\node[gp node center] at (8.405,0.677) {$80$};
\draw[gp path] (10.176,0.985)--(10.176,1.165);
\draw[gp path] (10.176,8.381)--(10.176,8.201);
\node[gp node center] at (10.176,0.677) {$100$};
\draw[gp path] (11.947,0.985)--(11.947,1.165);
\draw[gp path] (11.947,8.381)--(11.947,8.201);
\node[gp node center] at (11.947,0.677) {$120$};
\draw[gp path] (1.320,8.381)--(1.320,0.985)--(11.947,0.985)--(11.947,8.381)--cycle;
\node[gp node center,rotate=-270] at (0.246,4.683) {Execution Time [ms]};
\node[gp node center] at (6.633,0.215) {Number of Events [$\times 10000$]};
\node[gp node right] at (10.479,8.047) {brute-force};
\gpcolor{rgb color={0.580,0.000,0.827}}
\draw[gp path] (10.663,8.047)--(11.579,8.047);
\draw[gp path] (1.320,0.986)--(1.321,0.987)--(1.321,0.988)--(1.323,0.988)--(1.326,0.990)%
  --(1.329,0.992)--(1.331,0.995)--(1.338,1.001)--(1.343,1.008)--(1.347,1.048)--(1.355,1.023)%
  --(1.364,1.031)--(1.365,1.032)--(1.373,1.040)--(1.382,1.062)--(1.391,1.042)--(1.400,1.108)%
  --(1.409,1.122)--(1.411,1.074)--(1.462,1.113)--(1.603,1.207)--(1.887,1.442)--(2.206,1.705)%
  --(2.454,1.797)--(3.091,2.346)--(3.587,2.632)--(3.977,3.440)--(4.862,3.852)--(5.748,4.277)%
  --(5.854,4.392)--(6.634,5.659)--(7.519,6.114)--(8.405,6.628)--(9.290,7.012)--(10.176,7.657)%
  --(10.388,7.644);
\gpsetpointsize{4.00}
\gppoint{gp mark 1}{(1.320,0.986)}
\gppoint{gp mark 1}{(1.320,0.986)}
\gppoint{gp mark 1}{(1.321,0.987)}
\gppoint{gp mark 1}{(1.321,0.988)}
\gppoint{gp mark 1}{(1.323,0.988)}
\gppoint{gp mark 1}{(1.326,0.990)}
\gppoint{gp mark 1}{(1.329,0.992)}
\gppoint{gp mark 1}{(1.331,0.995)}
\gppoint{gp mark 1}{(1.338,1.001)}
\gppoint{gp mark 1}{(1.343,1.008)}
\gppoint{gp mark 1}{(1.347,1.048)}
\gppoint{gp mark 1}{(1.355,1.023)}
\gppoint{gp mark 1}{(1.364,1.031)}
\gppoint{gp mark 1}{(1.365,1.032)}
\gppoint{gp mark 1}{(1.373,1.040)}
\gppoint{gp mark 1}{(1.382,1.062)}
\gppoint{gp mark 1}{(1.391,1.042)}
\gppoint{gp mark 1}{(1.400,1.108)}
\gppoint{gp mark 1}{(1.409,1.122)}
\gppoint{gp mark 1}{(1.411,1.074)}
\gppoint{gp mark 1}{(1.462,1.113)}
\gppoint{gp mark 1}{(1.603,1.207)}
\gppoint{gp mark 1}{(1.887,1.442)}
\gppoint{gp mark 1}{(2.206,1.705)}
\gppoint{gp mark 1}{(2.454,1.797)}
\gppoint{gp mark 1}{(3.091,2.346)}
\gppoint{gp mark 1}{(3.587,2.632)}
\gppoint{gp mark 1}{(3.977,3.440)}
\gppoint{gp mark 1}{(4.862,3.852)}
\gppoint{gp mark 1}{(5.748,4.277)}
\gppoint{gp mark 1}{(5.854,4.392)}
\gppoint{gp mark 1}{(6.634,5.659)}
\gppoint{gp mark 1}{(7.519,6.114)}
\gppoint{gp mark 1}{(8.405,6.628)}
\gppoint{gp mark 1}{(9.290,7.012)}
\gppoint{gp mark 1}{(10.176,7.657)}
\gppoint{gp mark 1}{(10.388,7.644)}
\gppoint{gp mark 1}{(11.121,8.047)}
\gpcolor{color=gp lt color border}
\node[gp node right] at (10.479,7.739) {BM};
\gpcolor{rgb color={0.000,0.620,0.451}}
\draw[gp path] (10.663,7.739)--(11.579,7.739);
\draw[gp path] (1.320,0.986)--(1.321,0.986)--(1.321,0.987)--(1.323,0.988)--(1.326,0.990)%
  --(1.329,0.992)--(1.331,0.995)--(1.338,1.003)--(1.343,1.007)--(1.347,1.070)--(1.355,1.015)%
  --(1.364,1.037)--(1.365,1.023)--(1.373,1.026)--(1.382,1.038)--(1.391,1.048)--(1.400,1.115)%
  --(1.409,1.070)--(1.411,1.101)--(1.462,1.127)--(1.603,1.211)--(1.887,1.350)--(2.206,1.608)%
  --(2.454,1.680)--(3.091,2.249)--(3.587,2.471)--(3.977,3.059)--(4.862,3.444)--(5.748,3.820)%
  --(5.854,3.861)--(6.634,5.160)--(7.519,5.603)--(8.405,5.909)--(9.290,6.328)--(10.176,6.626)%
  --(10.388,6.842);
\gppoint{gp mark 2}{(1.320,0.986)}
\gppoint{gp mark 2}{(1.320,0.986)}
\gppoint{gp mark 2}{(1.321,0.986)}
\gppoint{gp mark 2}{(1.321,0.987)}
\gppoint{gp mark 2}{(1.323,0.988)}
\gppoint{gp mark 2}{(1.326,0.990)}
\gppoint{gp mark 2}{(1.329,0.992)}
\gppoint{gp mark 2}{(1.331,0.995)}
\gppoint{gp mark 2}{(1.338,1.003)}
\gppoint{gp mark 2}{(1.343,1.007)}
\gppoint{gp mark 2}{(1.347,1.070)}
\gppoint{gp mark 2}{(1.355,1.015)}
\gppoint{gp mark 2}{(1.364,1.037)}
\gppoint{gp mark 2}{(1.365,1.023)}
\gppoint{gp mark 2}{(1.373,1.026)}
\gppoint{gp mark 2}{(1.382,1.038)}
\gppoint{gp mark 2}{(1.391,1.048)}
\gppoint{gp mark 2}{(1.400,1.115)}
\gppoint{gp mark 2}{(1.409,1.070)}
\gppoint{gp mark 2}{(1.411,1.101)}
\gppoint{gp mark 2}{(1.462,1.127)}
\gppoint{gp mark 2}{(1.603,1.211)}
\gppoint{gp mark 2}{(1.887,1.350)}
\gppoint{gp mark 2}{(2.206,1.608)}
\gppoint{gp mark 2}{(2.454,1.680)}
\gppoint{gp mark 2}{(3.091,2.249)}
\gppoint{gp mark 2}{(3.587,2.471)}
\gppoint{gp mark 2}{(3.977,3.059)}
\gppoint{gp mark 2}{(4.862,3.444)}
\gppoint{gp mark 2}{(5.748,3.820)}
\gppoint{gp mark 2}{(5.854,3.861)}
\gppoint{gp mark 2}{(6.634,5.160)}
\gppoint{gp mark 2}{(7.519,5.603)}
\gppoint{gp mark 2}{(8.405,5.909)}
\gppoint{gp mark 2}{(9.290,6.328)}
\gppoint{gp mark 2}{(10.176,6.626)}
\gppoint{gp mark 2}{(10.388,6.842)}
\gppoint{gp mark 2}{(11.121,7.739)}
\gpcolor{color=gp lt color border}
\node[gp node right] at (10.479,7.431) {FJS};
\gpcolor{rgb color={1.000,0.000,0.000}}
\draw[gp path] (10.663,7.431)--(11.579,7.431);
\draw[gp path] (1.320,0.986)--(1.321,0.987)--(1.323,0.989)--(1.326,0.992)--(1.329,0.992)%
  --(1.331,0.995)--(1.338,1.000)--(1.343,1.012)--(1.347,1.012)--(1.355,1.014)--(1.364,1.103)%
  --(1.365,1.037)--(1.373,1.035)--(1.382,1.043)--(1.391,1.061)--(1.400,1.058)--(1.409,1.064)%
  --(1.411,1.071)--(1.462,1.098)--(1.603,1.217)--(1.887,1.422)--(2.206,1.687)--(2.454,1.736)%
  --(3.091,2.305)--(3.587,2.532)--(3.977,3.207)--(4.862,3.593)--(5.748,4.185)--(5.854,4.105)%
  --(6.634,5.443)--(7.519,5.828)--(8.405,6.245)--(9.290,6.699)--(10.176,7.213)--(10.388,7.259);
\gppoint{gp mark 3}{(1.320,0.986)}
\gppoint{gp mark 3}{(1.320,0.986)}
\gppoint{gp mark 3}{(1.321,0.987)}
\gppoint{gp mark 3}{(1.321,0.987)}
\gppoint{gp mark 3}{(1.323,0.989)}
\gppoint{gp mark 3}{(1.326,0.992)}
\gppoint{gp mark 3}{(1.329,0.992)}
\gppoint{gp mark 3}{(1.331,0.995)}
\gppoint{gp mark 3}{(1.338,1.000)}
\gppoint{gp mark 3}{(1.343,1.012)}
\gppoint{gp mark 3}{(1.347,1.012)}
\gppoint{gp mark 3}{(1.355,1.014)}
\gppoint{gp mark 3}{(1.364,1.103)}
\gppoint{gp mark 3}{(1.365,1.037)}
\gppoint{gp mark 3}{(1.373,1.035)}
\gppoint{gp mark 3}{(1.382,1.043)}
\gppoint{gp mark 3}{(1.391,1.061)}
\gppoint{gp mark 3}{(1.400,1.058)}
\gppoint{gp mark 3}{(1.409,1.064)}
\gppoint{gp mark 3}{(1.411,1.071)}
\gppoint{gp mark 3}{(1.462,1.098)}
\gppoint{gp mark 3}{(1.603,1.217)}
\gppoint{gp mark 3}{(1.887,1.422)}
\gppoint{gp mark 3}{(2.206,1.687)}
\gppoint{gp mark 3}{(2.454,1.736)}
\gppoint{gp mark 3}{(3.091,2.305)}
\gppoint{gp mark 3}{(3.587,2.532)}
\gppoint{gp mark 3}{(3.977,3.207)}
\gppoint{gp mark 3}{(4.862,3.593)}
\gppoint{gp mark 3}{(5.748,4.185)}
\gppoint{gp mark 3}{(5.854,4.105)}
\gppoint{gp mark 3}{(6.634,5.443)}
\gppoint{gp mark 3}{(7.519,5.828)}
\gppoint{gp mark 3}{(8.405,6.245)}
\gppoint{gp mark 3}{(9.290,6.699)}
\gppoint{gp mark 3}{(10.176,7.213)}
\gppoint{gp mark 3}{(10.388,7.259)}
\gppoint{gp mark 3}{(11.121,7.431)}
\gpcolor{color=gp lt color border}
\draw[gp path] (1.320,8.381)--(1.320,0.985)--(11.947,0.985)--(11.947,8.381)--cycle;
\gpdefrectangularnode{gp plot 1}{\pgfpoint{1.320cm}{0.985cm}}{\pgfpoint{11.947cm}{8.381cm}}
\end{tikzpicture}
  }
 \caption{\textsc{Simple}: exec.\ time}
 \label{fig:case1_exec_time}
\end{minipage}
\hfill
\begin{minipage}{.48\textwidth}
  \centering
 \scalebox{0.4}{
 \begin{tikzpicture}[gnuplot]
\path (0.000,0.000) rectangle (12.500,8.750);
\gpcolor{color=gp lt color border}
\gpsetlinetype{gp lt border}
\gpsetdashtype{gp dt solid}
\gpsetlinewidth{1.00}
\draw[gp path] (1.136,0.985)--(1.316,0.985);
\draw[gp path] (11.947,0.985)--(11.767,0.985);
\node[gp node right] at (0.952,0.985) {$0$};
\draw[gp path] (1.136,2.042)--(1.316,2.042);
\draw[gp path] (11.947,2.042)--(11.767,2.042);
\node[gp node right] at (0.952,2.042) {$2$};
\draw[gp path] (1.136,3.098)--(1.316,3.098);
\draw[gp path] (11.947,3.098)--(11.767,3.098);
\node[gp node right] at (0.952,3.098) {$4$};
\draw[gp path] (1.136,4.155)--(1.316,4.155);
\draw[gp path] (11.947,4.155)--(11.767,4.155);
\node[gp node right] at (0.952,4.155) {$6$};
\draw[gp path] (1.136,5.211)--(1.316,5.211);
\draw[gp path] (11.947,5.211)--(11.767,5.211);
\node[gp node right] at (0.952,5.211) {$8$};
\draw[gp path] (1.136,6.268)--(1.316,6.268);
\draw[gp path] (11.947,6.268)--(11.767,6.268);
\node[gp node right] at (0.952,6.268) {$10$};
\draw[gp path] (1.136,7.324)--(1.316,7.324);
\draw[gp path] (11.947,7.324)--(11.767,7.324);
\node[gp node right] at (0.952,7.324) {$12$};
\draw[gp path] (1.136,8.381)--(1.316,8.381);
\draw[gp path] (11.947,8.381)--(11.767,8.381);
\node[gp node right] at (0.952,8.381) {$14$};
\draw[gp path] (1.136,0.985)--(1.136,1.165);
\draw[gp path] (1.136,8.381)--(1.136,8.201);
\node[gp node center] at (1.136,0.677) {$0$};
\draw[gp path] (2.680,0.985)--(2.680,1.165);
\draw[gp path] (2.680,8.381)--(2.680,8.201);
\node[gp node center] at (2.680,0.677) {$5$};
\draw[gp path] (4.225,0.985)--(4.225,1.165);
\draw[gp path] (4.225,8.381)--(4.225,8.201);
\node[gp node center] at (4.225,0.677) {$10$};
\draw[gp path] (5.769,0.985)--(5.769,1.165);
\draw[gp path] (5.769,8.381)--(5.769,8.201);
\node[gp node center] at (5.769,0.677) {$15$};
\draw[gp path] (7.314,0.985)--(7.314,1.165);
\draw[gp path] (7.314,8.381)--(7.314,8.201);
\node[gp node center] at (7.314,0.677) {$20$};
\draw[gp path] (8.858,0.985)--(8.858,1.165);
\draw[gp path] (8.858,8.381)--(8.858,8.201);
\node[gp node center] at (8.858,0.677) {$25$};
\draw[gp path] (10.403,0.985)--(10.403,1.165);
\draw[gp path] (10.403,8.381)--(10.403,8.201);
\node[gp node center] at (10.403,0.677) {$30$};
\draw[gp path] (11.947,0.985)--(11.947,1.165);
\draw[gp path] (11.947,8.381)--(11.947,8.201);
\node[gp node center] at (11.947,0.677) {$35$};
\draw[gp path] (1.136,8.381)--(1.136,0.985)--(11.947,0.985)--(11.947,8.381)--cycle;
\node[gp node center,rotate=-270] at (0.246,4.683) {Execution Time [ms]};
\node[gp node center] at (6.541,0.215) {Number of Events [$\times 1000$]};
\node[gp node right] at (10.479,8.047) {brute-force};
\gpcolor{rgb color={0.580,0.000,0.827}}
\draw[gp path] (10.663,8.047)--(11.579,8.047);
\draw[gp path] (1.733,1.902)--(2.368,3.631)--(2.374,1.764)--(3.023,2.170)--(3.591,3.304)%
  --(3.600,3.462)--(3.650,2.318)--(4.891,2.949)--(6.037,5.307)--(6.058,5.674)--(8.545,5.755)%
  --(11.000,7.019);
\gpsetpointsize{4.00}
\gppoint{gp mark 1}{(1.733,1.902)}
\gppoint{gp mark 1}{(2.368,3.631)}
\gppoint{gp mark 1}{(2.374,1.764)}
\gppoint{gp mark 1}{(3.023,2.170)}
\gppoint{gp mark 1}{(3.591,3.304)}
\gppoint{gp mark 1}{(3.600,3.462)}
\gppoint{gp mark 1}{(3.650,2.318)}
\gppoint{gp mark 1}{(4.891,2.949)}
\gppoint{gp mark 1}{(6.037,5.307)}
\gppoint{gp mark 1}{(6.058,5.674)}
\gppoint{gp mark 1}{(8.545,5.755)}
\gppoint{gp mark 1}{(11.000,7.019)}
\gppoint{gp mark 1}{(11.121,8.047)}
\gpcolor{color=gp lt color border}
\node[gp node right] at (10.479,7.739) {BM};
\gpcolor{rgb color={0.000,0.620,0.451}}
\draw[gp path] (10.663,7.739)--(11.579,7.739);
\draw[gp path] (1.733,1.511)--(2.368,2.296)--(2.374,1.851)--(3.023,2.159)--(3.591,2.656)%
  --(3.600,2.796)--(3.650,2.246)--(4.891,2.566)--(6.037,3.426)--(6.058,4.611)--(8.545,4.754)%
  --(11.000,7.786);
\gppoint{gp mark 2}{(1.733,1.511)}
\gppoint{gp mark 2}{(2.368,2.296)}
\gppoint{gp mark 2}{(2.374,1.851)}
\gppoint{gp mark 2}{(3.023,2.159)}
\gppoint{gp mark 2}{(3.591,2.656)}
\gppoint{gp mark 2}{(3.600,2.796)}
\gppoint{gp mark 2}{(3.650,2.246)}
\gppoint{gp mark 2}{(4.891,2.566)}
\gppoint{gp mark 2}{(6.037,3.426)}
\gppoint{gp mark 2}{(6.058,4.611)}
\gppoint{gp mark 2}{(8.545,4.754)}
\gppoint{gp mark 2}{(11.000,7.786)}
\gppoint{gp mark 2}{(11.121,7.739)}
\gpcolor{color=gp lt color border}
\node[gp node right] at (10.479,7.431) {FJS};
\gpcolor{rgb color={1.000,0.000,0.000}}
\draw[gp path] (10.663,7.431)--(11.579,7.431);
\draw[gp path] (1.733,1.340)--(2.368,1.511)--(2.374,1.472)--(3.023,2.237)--(3.591,2.148)%
  --(3.600,2.274)--(3.650,2.980)--(4.891,3.668)--(6.037,4.033)--(6.058,3.481)--(8.545,7.557)%
  --(11.000,7.825);
\gppoint{gp mark 3}{(1.733,1.340)}
\gppoint{gp mark 3}{(2.368,1.511)}
\gppoint{gp mark 3}{(2.374,1.472)}
\gppoint{gp mark 3}{(3.023,2.237)}
\gppoint{gp mark 3}{(3.591,2.148)}
\gppoint{gp mark 3}{(3.600,2.274)}
\gppoint{gp mark 3}{(3.650,2.980)}
\gppoint{gp mark 3}{(4.891,3.668)}
\gppoint{gp mark 3}{(6.037,4.033)}
\gppoint{gp mark 3}{(6.058,3.481)}
\gppoint{gp mark 3}{(8.545,7.557)}
\gppoint{gp mark 3}{(11.000,7.825)}
\gppoint{gp mark 3}{(11.121,7.431)}
\gpcolor{color=gp lt color border}
\draw[gp path] (1.136,8.381)--(1.136,0.985)--(11.947,0.985)--(11.947,8.381)--cycle;
\gpdefrectangularnode{gp plot 1}{\pgfpoint{1.136cm}{0.985cm}}{\pgfpoint{11.947cm}{8.381cm}}
\end{tikzpicture}
  }
 \caption{\textsc{Large Constraints}: exec.\ time}
 \label{fig:case2_exec_time}
\end{minipage}
\begin{minipage}{0.48\textwidth}
  \centering
  \scalebox{0.4}{
  \begin{tikzpicture}[gnuplot]
\path (0.000,0.000) rectangle (12.500,8.750);
\gpcolor{color=gp lt color border}
\gpsetlinetype{gp lt border}
\gpsetdashtype{gp dt solid}
\gpsetlinewidth{1.00}
\draw[gp path] (1.320,0.985)--(1.500,0.985);
\draw[gp path] (11.947,0.985)--(11.767,0.985);
\node[gp node right] at (1.136,0.985) {$0$};
\draw[gp path] (1.320,2.464)--(1.500,2.464);
\draw[gp path] (11.947,2.464)--(11.767,2.464);
\node[gp node right] at (1.136,2.464) {$50$};
\draw[gp path] (1.320,3.943)--(1.500,3.943);
\draw[gp path] (11.947,3.943)--(11.767,3.943);
\node[gp node right] at (1.136,3.943) {$100$};
\draw[gp path] (1.320,5.423)--(1.500,5.423);
\draw[gp path] (11.947,5.423)--(11.767,5.423);
\node[gp node right] at (1.136,5.423) {$150$};
\draw[gp path] (1.320,6.902)--(1.500,6.902);
\draw[gp path] (11.947,6.902)--(11.767,6.902);
\node[gp node right] at (1.136,6.902) {$200$};
\draw[gp path] (1.320,8.381)--(1.500,8.381);
\draw[gp path] (11.947,8.381)--(11.767,8.381);
\node[gp node right] at (1.136,8.381) {$250$};
\draw[gp path] (1.320,0.985)--(1.320,1.165);
\draw[gp path] (1.320,8.381)--(1.320,8.201);
\node[gp node center] at (1.320,0.677) {$0$};
\draw[gp path] (2.383,0.985)--(2.383,1.165);
\draw[gp path] (2.383,8.381)--(2.383,8.201);
\node[gp node center] at (2.383,0.677) {$50$};
\draw[gp path] (3.445,0.985)--(3.445,1.165);
\draw[gp path] (3.445,8.381)--(3.445,8.201);
\node[gp node center] at (3.445,0.677) {$100$};
\draw[gp path] (4.508,0.985)--(4.508,1.165);
\draw[gp path] (4.508,8.381)--(4.508,8.201);
\node[gp node center] at (4.508,0.677) {$150$};
\draw[gp path] (5.571,0.985)--(5.571,1.165);
\draw[gp path] (5.571,8.381)--(5.571,8.201);
\node[gp node center] at (5.571,0.677) {$200$};
\draw[gp path] (6.634,0.985)--(6.634,1.165);
\draw[gp path] (6.634,8.381)--(6.634,8.201);
\node[gp node center] at (6.634,0.677) {$250$};
\draw[gp path] (7.696,0.985)--(7.696,1.165);
\draw[gp path] (7.696,8.381)--(7.696,8.201);
\node[gp node center] at (7.696,0.677) {$300$};
\draw[gp path] (8.759,0.985)--(8.759,1.165);
\draw[gp path] (8.759,8.381)--(8.759,8.201);
\node[gp node center] at (8.759,0.677) {$350$};
\draw[gp path] (9.822,0.985)--(9.822,1.165);
\draw[gp path] (9.822,8.381)--(9.822,8.201);
\node[gp node center] at (9.822,0.677) {$400$};
\draw[gp path] (10.884,0.985)--(10.884,1.165);
\draw[gp path] (10.884,8.381)--(10.884,8.201);
\node[gp node center] at (10.884,0.677) {$450$};
\draw[gp path] (11.947,0.985)--(11.947,1.165);
\draw[gp path] (11.947,8.381)--(11.947,8.201);
\node[gp node center] at (11.947,0.677) {$500$};
\draw[gp path] (1.320,8.381)--(1.320,0.985)--(11.947,0.985)--(11.947,8.381)--cycle;
\node[gp node center,rotate=-270] at (0.246,4.683) {Execution Time [ms]};
\node[gp node center] at (6.633,0.215) {Number of Events [$\times 10000$]};
\node[gp node right] at (10.479,8.047) {brute-force};
\gpcolor{rgb color={0.580,0.000,0.827}}
\draw[gp path] (10.663,8.047)--(11.579,8.047);
\draw[gp path] (1.836,1.313)--(2.355,1.646)--(2.876,1.969)--(3.915,2.622)--(5.210,3.470)%
  --(6.500,4.277)--(7.794,5.070)--(9.088,5.900)--(10.385,6.844)--(11.678,7.713);
\gpsetpointsize{4.00}
\gppoint{gp mark 1}{(1.836,1.313)}
\gppoint{gp mark 1}{(2.355,1.646)}
\gppoint{gp mark 1}{(2.876,1.969)}
\gppoint{gp mark 1}{(3.915,2.622)}
\gppoint{gp mark 1}{(5.210,3.470)}
\gppoint{gp mark 1}{(6.500,4.277)}
\gppoint{gp mark 1}{(7.794,5.070)}
\gppoint{gp mark 1}{(9.088,5.900)}
\gppoint{gp mark 1}{(10.385,6.844)}
\gppoint{gp mark 1}{(11.678,7.713)}
\gppoint{gp mark 1}{(11.121,8.047)}
\gpcolor{color=gp lt color border}
\node[gp node right] at (10.479,7.739) {BM};
\gpcolor{rgb color={0.000,0.620,0.451}}
\draw[gp path] (10.663,7.739)--(11.579,7.739);
\draw[gp path] (1.836,1.149)--(2.355,1.316)--(2.876,1.487)--(3.915,1.826)--(5.210,2.218)%
  --(6.500,2.608)--(7.794,3.052)--(9.088,3.473)--(10.385,3.863)--(11.678,4.293);
\gppoint{gp mark 2}{(1.836,1.149)}
\gppoint{gp mark 2}{(2.355,1.316)}
\gppoint{gp mark 2}{(2.876,1.487)}
\gppoint{gp mark 2}{(3.915,1.826)}
\gppoint{gp mark 2}{(5.210,2.218)}
\gppoint{gp mark 2}{(6.500,2.608)}
\gppoint{gp mark 2}{(7.794,3.052)}
\gppoint{gp mark 2}{(9.088,3.473)}
\gppoint{gp mark 2}{(10.385,3.863)}
\gppoint{gp mark 2}{(11.678,4.293)}
\gppoint{gp mark 2}{(11.121,7.739)}
\gpcolor{color=gp lt color border}
\node[gp node right] at (10.479,7.431) {FJS};
\gpcolor{rgb color={1.000,0.000,0.000}}
\draw[gp path] (10.663,7.431)--(11.579,7.431);
\draw[gp path] (1.836,1.121)--(2.355,1.256)--(2.876,1.396)--(3.915,1.684)--(5.210,1.993)%
  --(6.500,2.320)--(7.794,2.685)--(9.088,2.970)--(10.385,3.322)--(11.678,3.706);
\gppoint{gp mark 3}{(1.836,1.121)}
\gppoint{gp mark 3}{(2.355,1.256)}
\gppoint{gp mark 3}{(2.876,1.396)}
\gppoint{gp mark 3}{(3.915,1.684)}
\gppoint{gp mark 3}{(5.210,1.993)}
\gppoint{gp mark 3}{(6.500,2.320)}
\gppoint{gp mark 3}{(7.794,2.685)}
\gppoint{gp mark 3}{(9.088,2.970)}
\gppoint{gp mark 3}{(10.385,3.322)}
\gppoint{gp mark 3}{(11.678,3.706)}
\gppoint{gp mark 3}{(11.121,7.431)}
\gpcolor{color=gp lt color border}
\draw[gp path] (1.320,8.381)--(1.320,0.985)--(11.947,0.985)--(11.947,8.381)--cycle;
\gpdefrectangularnode{gp plot 1}{\pgfpoint{1.320cm}{0.985cm}}{\pgfpoint{11.947cm}{8.381cm}}
\end{tikzpicture}
  }
 \caption{\textsc{Torque}: exec.\ time}
 \label{fig:case3_exec_time}
 \end{minipage}
 \hfill
 \begin{minipage}{0.48\textwidth}
  \centering
  \scalebox{0.4}{
  \begin{tikzpicture}[gnuplot]
\path (0.000,0.000) rectangle (12.500,8.750);
\gpcolor{color=gp lt color border}
\gpsetlinetype{gp lt border}
\gpsetdashtype{gp dt solid}
\gpsetlinewidth{1.00}
\draw[gp path] (1.504,0.985)--(1.684,0.985);
\draw[gp path] (11.947,0.985)--(11.767,0.985);
\node[gp node right] at (1.320,0.985) {$0$};
\draw[gp path] (1.504,2.042)--(1.684,2.042);
\draw[gp path] (11.947,2.042)--(11.767,2.042);
\node[gp node right] at (1.320,2.042) {$200$};
\draw[gp path] (1.504,3.098)--(1.684,3.098);
\draw[gp path] (11.947,3.098)--(11.767,3.098);
\node[gp node right] at (1.320,3.098) {$400$};
\draw[gp path] (1.504,4.155)--(1.684,4.155);
\draw[gp path] (11.947,4.155)--(11.767,4.155);
\node[gp node right] at (1.320,4.155) {$600$};
\draw[gp path] (1.504,5.211)--(1.684,5.211);
\draw[gp path] (11.947,5.211)--(11.767,5.211);
\node[gp node right] at (1.320,5.211) {$800$};
\draw[gp path] (1.504,6.268)--(1.684,6.268);
\draw[gp path] (11.947,6.268)--(11.767,6.268);
\node[gp node right] at (1.320,6.268) {$1000$};
\draw[gp path] (1.504,7.324)--(1.684,7.324);
\draw[gp path] (11.947,7.324)--(11.767,7.324);
\node[gp node right] at (1.320,7.324) {$1200$};
\draw[gp path] (1.504,8.381)--(1.684,8.381);
\draw[gp path] (11.947,8.381)--(11.767,8.381);
\node[gp node right] at (1.320,8.381) {$1400$};
\draw[gp path] (1.504,0.985)--(1.504,1.165);
\draw[gp path] (1.504,8.381)--(1.504,8.201);
\node[gp node center] at (1.504,0.677) {$0$};
\draw[gp path] (2.548,0.985)--(2.548,1.165);
\draw[gp path] (2.548,8.381)--(2.548,8.201);
\node[gp node center] at (2.548,0.677) {$500$};
\draw[gp path] (3.593,0.985)--(3.593,1.165);
\draw[gp path] (3.593,8.381)--(3.593,8.201);
\node[gp node center] at (3.593,0.677) {$1000$};
\draw[gp path] (4.637,0.985)--(4.637,1.165);
\draw[gp path] (4.637,8.381)--(4.637,8.201);
\node[gp node center] at (4.637,0.677) {$1500$};
\draw[gp path] (5.681,0.985)--(5.681,1.165);
\draw[gp path] (5.681,8.381)--(5.681,8.201);
\node[gp node center] at (5.681,0.677) {$2000$};
\draw[gp path] (6.726,0.985)--(6.726,1.165);
\draw[gp path] (6.726,8.381)--(6.726,8.201);
\node[gp node center] at (6.726,0.677) {$2500$};
\draw[gp path] (7.770,0.985)--(7.770,1.165);
\draw[gp path] (7.770,8.381)--(7.770,8.201);
\node[gp node center] at (7.770,0.677) {$3000$};
\draw[gp path] (8.814,0.985)--(8.814,1.165);
\draw[gp path] (8.814,8.381)--(8.814,8.201);
\node[gp node center] at (8.814,0.677) {$3500$};
\draw[gp path] (9.858,0.985)--(9.858,1.165);
\draw[gp path] (9.858,8.381)--(9.858,8.201);
\node[gp node center] at (9.858,0.677) {$4000$};
\draw[gp path] (10.903,0.985)--(10.903,1.165);
\draw[gp path] (10.903,8.381)--(10.903,8.201);
\node[gp node center] at (10.903,0.677) {$4500$};
\draw[gp path] (11.947,0.985)--(11.947,1.165);
\draw[gp path] (11.947,8.381)--(11.947,8.201);
\node[gp node center] at (11.947,0.677) {$5000$};
\draw[gp path] (1.504,8.381)--(1.504,0.985)--(11.947,0.985)--(11.947,8.381)--cycle;
\node[gp node center,rotate=-270] at (0.246,4.683) {Execution Time [ms]};
\node[gp node center] at (6.725,0.215) {Number of Events [$\times 10000$]};
\node[gp node right] at (10.479,8.047) {brute-force};
\gpcolor{rgb color={0.580,0.000,0.827}}
\draw[gp path] (10.663,8.047)--(11.579,8.047);
\draw[gp path] (1.504,0.985)--(1.514,0.992)--(1.602,1.046)--(2.490,1.694)--(3.476,2.414)%
  --(4.461,3.107)--(5.447,3.805)--(6.433,4.377)--(7.419,5.218)--(8.405,5.831)--(9.391,6.539)%
  --(10.376,7.044)--(11.362,7.911);
\gpsetpointsize{4.00}
\gppoint{gp mark 1}{(1.504,0.985)}
\gppoint{gp mark 1}{(1.514,0.992)}
\gppoint{gp mark 1}{(1.602,1.046)}
\gppoint{gp mark 1}{(2.490,1.694)}
\gppoint{gp mark 1}{(3.476,2.414)}
\gppoint{gp mark 1}{(4.461,3.107)}
\gppoint{gp mark 1}{(5.447,3.805)}
\gppoint{gp mark 1}{(6.433,4.377)}
\gppoint{gp mark 1}{(7.419,5.218)}
\gppoint{gp mark 1}{(8.405,5.831)}
\gppoint{gp mark 1}{(9.391,6.539)}
\gppoint{gp mark 1}{(10.376,7.044)}
\gppoint{gp mark 1}{(11.362,7.911)}
\gppoint{gp mark 1}{(11.121,8.047)}
\gpcolor{color=gp lt color border}
\node[gp node right] at (10.479,7.739) {BM};
\gpcolor{rgb color={0.000,0.620,0.451}}
\draw[gp path] (10.663,7.739)--(11.579,7.739);
\draw[gp path] (1.504,0.985)--(1.514,0.992)--(1.602,1.047)--(2.490,1.699)--(3.476,2.418)%
  --(4.461,3.125)--(5.447,3.794)--(6.433,4.509)--(7.419,5.232)--(8.405,6.332)--(9.391,6.582)%
  --(10.376,7.211)--(11.362,7.890);
\gppoint{gp mark 2}{(1.504,0.985)}
\gppoint{gp mark 2}{(1.514,0.992)}
\gppoint{gp mark 2}{(1.602,1.047)}
\gppoint{gp mark 2}{(2.490,1.699)}
\gppoint{gp mark 2}{(3.476,2.418)}
\gppoint{gp mark 2}{(4.461,3.125)}
\gppoint{gp mark 2}{(5.447,3.794)}
\gppoint{gp mark 2}{(6.433,4.509)}
\gppoint{gp mark 2}{(7.419,5.232)}
\gppoint{gp mark 2}{(8.405,6.332)}
\gppoint{gp mark 2}{(9.391,6.582)}
\gppoint{gp mark 2}{(10.376,7.211)}
\gppoint{gp mark 2}{(11.362,7.890)}
\gppoint{gp mark 2}{(11.121,7.739)}
\gpcolor{color=gp lt color border}
\node[gp node right] at (10.479,7.431) {FJS};
\gpcolor{rgb color={1.000,0.000,0.000}}
\draw[gp path] (10.663,7.431)--(11.579,7.431);
\draw[gp path] (1.504,0.985)--(1.514,0.996)--(1.602,1.011)--(2.490,1.342)--(3.476,1.716)%
  --(4.461,2.083)--(5.447,2.390)--(6.433,2.685)--(7.419,3.119)--(8.405,3.398)--(9.391,3.700)%
  --(10.376,3.968)--(11.362,4.328);
\gppoint{gp mark 3}{(1.504,0.985)}
\gppoint{gp mark 3}{(1.514,0.996)}
\gppoint{gp mark 3}{(1.602,1.011)}
\gppoint{gp mark 3}{(2.490,1.342)}
\gppoint{gp mark 3}{(3.476,1.716)}
\gppoint{gp mark 3}{(4.461,2.083)}
\gppoint{gp mark 3}{(5.447,2.390)}
\gppoint{gp mark 3}{(6.433,2.685)}
\gppoint{gp mark 3}{(7.419,3.119)}
\gppoint{gp mark 3}{(8.405,3.398)}
\gppoint{gp mark 3}{(9.391,3.700)}
\gppoint{gp mark 3}{(10.376,3.968)}
\gppoint{gp mark 3}{(11.362,4.328)}
\gppoint{gp mark 3}{(11.121,7.431)}
\gpcolor{color=gp lt color border}
\draw[gp path] (1.504,8.381)--(1.504,0.985)--(11.947,0.985)--(11.947,8.381)--cycle;
\gpdefrectangularnode{gp plot 1}{\pgfpoint{1.504cm}{0.985cm}}{\pgfpoint{11.947cm}{8.381cm}}
\end{tikzpicture}
  }
  \caption{\textsc{Settling}: exec.\ time}
  \label{fig:case4_exec_time}
\end{minipage}
\begin{minipage}{0.48\textwidth}
  \centering
 \scalebox{0.4}{
  \begin{tikzpicture}[gnuplot]
\path (0.000,0.000) rectangle (12.500,8.750);
\gpcolor{color=gp lt color border}
\gpsetlinetype{gp lt border}
\gpsetdashtype{gp dt solid}
\gpsetlinewidth{1.00}
\draw[gp path] (1.136,0.985)--(1.316,0.985);
\draw[gp path] (11.947,0.985)--(11.767,0.985);
\node[gp node right] at (0.952,0.985) {$0$};
\draw[gp path] (1.136,2.042)--(1.316,2.042);
\draw[gp path] (11.947,2.042)--(11.767,2.042);
\node[gp node right] at (0.952,2.042) {$10$};
\draw[gp path] (1.136,3.098)--(1.316,3.098);
\draw[gp path] (11.947,3.098)--(11.767,3.098);
\node[gp node right] at (0.952,3.098) {$20$};
\draw[gp path] (1.136,4.155)--(1.316,4.155);
\draw[gp path] (11.947,4.155)--(11.767,4.155);
\node[gp node right] at (0.952,4.155) {$30$};
\draw[gp path] (1.136,5.211)--(1.316,5.211);
\draw[gp path] (11.947,5.211)--(11.767,5.211);
\node[gp node right] at (0.952,5.211) {$40$};
\draw[gp path] (1.136,6.268)--(1.316,6.268);
\draw[gp path] (11.947,6.268)--(11.767,6.268);
\node[gp node right] at (0.952,6.268) {$50$};
\draw[gp path] (1.136,7.324)--(1.316,7.324);
\draw[gp path] (11.947,7.324)--(11.767,7.324);
\node[gp node right] at (0.952,7.324) {$60$};
\draw[gp path] (1.136,8.381)--(1.316,8.381);
\draw[gp path] (11.947,8.381)--(11.767,8.381);
\node[gp node right] at (0.952,8.381) {$70$};
\draw[gp path] (1.136,0.985)--(1.136,1.165);
\draw[gp path] (1.136,8.381)--(1.136,8.201);
\node[gp node center] at (1.136,0.677) {$0$};
\draw[gp path] (2.938,0.985)--(2.938,1.165);
\draw[gp path] (2.938,8.381)--(2.938,8.201);
\node[gp node center] at (2.938,0.677) {$20$};
\draw[gp path] (4.740,0.985)--(4.740,1.165);
\draw[gp path] (4.740,8.381)--(4.740,8.201);
\node[gp node center] at (4.740,0.677) {$40$};
\draw[gp path] (6.542,0.985)--(6.542,1.165);
\draw[gp path] (6.542,8.381)--(6.542,8.201);
\node[gp node center] at (6.542,0.677) {$60$};
\draw[gp path] (8.343,0.985)--(8.343,1.165);
\draw[gp path] (8.343,8.381)--(8.343,8.201);
\node[gp node center] at (8.343,0.677) {$80$};
\draw[gp path] (10.145,0.985)--(10.145,1.165);
\draw[gp path] (10.145,8.381)--(10.145,8.201);
\node[gp node center] at (10.145,0.677) {$100$};
\draw[gp path] (11.947,0.985)--(11.947,1.165);
\draw[gp path] (11.947,8.381)--(11.947,8.201);
\node[gp node center] at (11.947,0.677) {$120$};
\draw[gp path] (1.136,8.381)--(1.136,0.985)--(11.947,0.985)--(11.947,8.381)--cycle;
\node[gp node center,rotate=-270] at (0.246,4.683) {Execution Time [ms]};
\node[gp node center] at (6.541,0.215) {Number of Events [$\times 10000$]};
\node[gp node right] at (10.479,8.047) {brute-force};
\gpcolor{rgb color={0.580,0.000,0.827}}
\draw[gp path] (10.663,8.047)--(11.579,8.047);
\draw[gp path] (1.139,0.988)--(2.285,1.856)--(3.440,2.758)--(4.588,3.869)--(5.719,4.479)%
  --(6.834,5.979)--(7.969,6.550)--(9.196,7.474)--(10.248,7.853);
\gpsetpointsize{4.00}
\gppoint{gp mark 1}{(1.139,0.988)}
\gppoint{gp mark 1}{(2.285,1.856)}
\gppoint{gp mark 1}{(3.440,2.758)}
\gppoint{gp mark 1}{(4.588,3.869)}
\gppoint{gp mark 1}{(5.719,4.479)}
\gppoint{gp mark 1}{(6.834,5.979)}
\gppoint{gp mark 1}{(7.969,6.550)}
\gppoint{gp mark 1}{(9.196,7.474)}
\gppoint{gp mark 1}{(10.248,7.853)}
\gppoint{gp mark 1}{(11.121,8.047)}
\gpcolor{color=gp lt color border}
\node[gp node right] at (10.479,7.739) {BM};
\gpcolor{rgb color={0.000,0.620,0.451}}
\draw[gp path] (10.663,7.739)--(11.579,7.739);
\draw[gp path] (1.139,0.988)--(2.285,1.783)--(3.440,2.507)--(4.588,3.573)--(5.719,4.113)%
  --(6.834,5.473)--(7.969,6.089)--(9.196,6.851)--(10.248,7.265);
\gppoint{gp mark 2}{(1.139,0.988)}
\gppoint{gp mark 2}{(2.285,1.783)}
\gppoint{gp mark 2}{(3.440,2.507)}
\gppoint{gp mark 2}{(4.588,3.573)}
\gppoint{gp mark 2}{(5.719,4.113)}
\gppoint{gp mark 2}{(6.834,5.473)}
\gppoint{gp mark 2}{(7.969,6.089)}
\gppoint{gp mark 2}{(9.196,6.851)}
\gppoint{gp mark 2}{(10.248,7.265)}
\gppoint{gp mark 2}{(11.121,7.739)}
\gpcolor{color=gp lt color border}
\node[gp node right] at (10.479,7.431) {FJS};
\gpcolor{rgb color={1.000,0.000,0.000}}
\draw[gp path] (10.663,7.431)--(11.579,7.431);
\draw[gp path] (1.139,0.988)--(2.285,1.714)--(3.440,2.564)--(4.588,3.427)--(5.719,3.997)%
  --(6.834,5.406)--(7.969,5.781)--(9.196,6.515)--(10.248,7.186);
\gppoint{gp mark 3}{(1.139,0.988)}
\gppoint{gp mark 3}{(2.285,1.714)}
\gppoint{gp mark 3}{(3.440,2.564)}
\gppoint{gp mark 3}{(4.588,3.427)}
\gppoint{gp mark 3}{(5.719,3.997)}
\gppoint{gp mark 3}{(6.834,5.406)}
\gppoint{gp mark 3}{(7.969,5.781)}
\gppoint{gp mark 3}{(9.196,6.515)}
\gppoint{gp mark 3}{(10.248,7.186)}
\gppoint{gp mark 3}{(11.121,7.431)}
\gpcolor{color=gp lt color border}
\draw[gp path] (1.136,8.381)--(1.136,0.985)--(11.947,0.985)--(11.947,8.381)--cycle;
\gpdefrectangularnode{gp plot 1}{\pgfpoint{1.136cm}{0.985cm}}{\pgfpoint{11.947cm}{8.381cm}}
\end{tikzpicture}
 }
  \caption{\textsc{Gear}: exec.\ time}
  \label{fig:case5_exec_time}
 \end{minipage}
 \hfill
 \begin{minipage}{0.48\textwidth}
  \centering
  \scalebox{0.4}{
  \begin{tikzpicture}[gnuplot]
\path (0.000,0.000) rectangle (12.500,8.750);
\gpcolor{color=gp lt color border}
\gpsetlinetype{gp lt border}
\gpsetdashtype{gp dt solid}
\gpsetlinewidth{1.00}
\draw[gp path] (1.136,0.985)--(1.316,0.985);
\draw[gp path] (11.947,0.985)--(11.767,0.985);
\node[gp node right] at (0.952,0.985) {$0$};
\draw[gp path] (1.136,2.042)--(1.316,2.042);
\draw[gp path] (11.947,2.042)--(11.767,2.042);
\node[gp node right] at (0.952,2.042) {$10$};
\draw[gp path] (1.136,3.098)--(1.316,3.098);
\draw[gp path] (11.947,3.098)--(11.767,3.098);
\node[gp node right] at (0.952,3.098) {$20$};
\draw[gp path] (1.136,4.155)--(1.316,4.155);
\draw[gp path] (11.947,4.155)--(11.767,4.155);
\node[gp node right] at (0.952,4.155) {$30$};
\draw[gp path] (1.136,5.211)--(1.316,5.211);
\draw[gp path] (11.947,5.211)--(11.767,5.211);
\node[gp node right] at (0.952,5.211) {$40$};
\draw[gp path] (1.136,6.268)--(1.316,6.268);
\draw[gp path] (11.947,6.268)--(11.767,6.268);
\node[gp node right] at (0.952,6.268) {$50$};
\draw[gp path] (1.136,7.324)--(1.316,7.324);
\draw[gp path] (11.947,7.324)--(11.767,7.324);
\node[gp node right] at (0.952,7.324) {$60$};
\draw[gp path] (1.136,8.381)--(1.316,8.381);
\draw[gp path] (11.947,8.381)--(11.767,8.381);
\node[gp node right] at (0.952,8.381) {$70$};
\draw[gp path] (1.136,0.985)--(1.136,1.165);
\draw[gp path] (1.136,8.381)--(1.136,8.201);
\node[gp node center] at (1.136,0.677) {$0$};
\draw[gp path] (2.337,0.985)--(2.337,1.165);
\draw[gp path] (2.337,8.381)--(2.337,8.201);
\node[gp node center] at (2.337,0.677) {$20$};
\draw[gp path] (3.538,0.985)--(3.538,1.165);
\draw[gp path] (3.538,8.381)--(3.538,8.201);
\node[gp node center] at (3.538,0.677) {$40$};
\draw[gp path] (4.740,0.985)--(4.740,1.165);
\draw[gp path] (4.740,8.381)--(4.740,8.201);
\node[gp node center] at (4.740,0.677) {$60$};
\draw[gp path] (5.941,0.985)--(5.941,1.165);
\draw[gp path] (5.941,8.381)--(5.941,8.201);
\node[gp node center] at (5.941,0.677) {$80$};
\draw[gp path] (7.142,0.985)--(7.142,1.165);
\draw[gp path] (7.142,8.381)--(7.142,8.201);
\node[gp node center] at (7.142,0.677) {$100$};
\draw[gp path] (8.343,0.985)--(8.343,1.165);
\draw[gp path] (8.343,8.381)--(8.343,8.201);
\node[gp node center] at (8.343,0.677) {$120$};
\draw[gp path] (9.545,0.985)--(9.545,1.165);
\draw[gp path] (9.545,8.381)--(9.545,8.201);
\node[gp node center] at (9.545,0.677) {$140$};
\draw[gp path] (10.746,0.985)--(10.746,1.165);
\draw[gp path] (10.746,8.381)--(10.746,8.201);
\node[gp node center] at (10.746,0.677) {$160$};
\draw[gp path] (11.947,0.985)--(11.947,1.165);
\draw[gp path] (11.947,8.381)--(11.947,8.201);
\node[gp node center] at (11.947,0.677) {$180$};
\draw[gp path] (1.136,8.381)--(1.136,0.985)--(11.947,0.985)--(11.947,8.381)--cycle;
\node[gp node center,rotate=-270] at (0.246,4.683) {Execution Time [ms]};
\node[gp node center] at (6.541,0.215) {Number of Events [$\times 10000$]};
\node[gp node right] at (10.479,8.047) {brute-force};
\gpcolor{rgb color={0.580,0.000,0.827}}
\draw[gp path] (10.663,8.047)--(11.579,8.047);
\draw[gp path] (1.140,0.989)--(2.422,1.894)--(3.709,2.815)--(4.997,3.862)--(6.268,4.866)%
  --(7.543,5.650)--(8.822,6.446)--(10.139,7.400)--(11.386,8.264);
\gpsetpointsize{4.00}
\gppoint{gp mark 1}{(1.140,0.989)}
\gppoint{gp mark 1}{(2.422,1.894)}
\gppoint{gp mark 1}{(3.709,2.815)}
\gppoint{gp mark 1}{(4.997,3.862)}
\gppoint{gp mark 1}{(6.268,4.866)}
\gppoint{gp mark 1}{(7.543,5.650)}
\gppoint{gp mark 1}{(8.822,6.446)}
\gppoint{gp mark 1}{(10.139,7.400)}
\gppoint{gp mark 1}{(11.386,8.264)}
\gppoint{gp mark 1}{(11.121,8.047)}
\gpcolor{color=gp lt color border}
\node[gp node right] at (10.479,7.739) {BM};
\gpcolor{rgb color={0.000,0.620,0.451}}
\draw[gp path] (10.663,7.739)--(11.579,7.739);
\draw[gp path] (1.140,0.989)--(2.422,1.828)--(3.709,2.710)--(4.997,3.406)--(6.268,4.177)%
  --(7.543,5.031)--(8.822,5.803)--(10.139,6.581)--(11.386,7.827);
\gppoint{gp mark 2}{(1.140,0.989)}
\gppoint{gp mark 2}{(2.422,1.828)}
\gppoint{gp mark 2}{(3.709,2.710)}
\gppoint{gp mark 2}{(4.997,3.406)}
\gppoint{gp mark 2}{(6.268,4.177)}
\gppoint{gp mark 2}{(7.543,5.031)}
\gppoint{gp mark 2}{(8.822,5.803)}
\gppoint{gp mark 2}{(10.139,6.581)}
\gppoint{gp mark 2}{(11.386,7.827)}
\gppoint{gp mark 2}{(11.121,7.739)}
\gpcolor{color=gp lt color border}
\node[gp node right] at (10.479,7.431) {FJS};
\gpcolor{rgb color={1.000,0.000,0.000}}
\draw[gp path] (10.663,7.431)--(11.579,7.431);
\draw[gp path] (1.140,0.988)--(2.422,1.352)--(3.709,1.788)--(4.997,2.337)--(6.268,2.470)%
  --(7.543,2.983)--(8.822,3.162)--(10.139,3.648)--(11.386,3.913);
\gppoint{gp mark 3}{(1.140,0.988)}
\gppoint{gp mark 3}{(2.422,1.352)}
\gppoint{gp mark 3}{(3.709,1.788)}
\gppoint{gp mark 3}{(4.997,2.337)}
\gppoint{gp mark 3}{(6.268,2.470)}
\gppoint{gp mark 3}{(7.543,2.983)}
\gppoint{gp mark 3}{(8.822,3.162)}
\gppoint{gp mark 3}{(10.139,3.648)}
\gppoint{gp mark 3}{(11.386,3.913)}
\gppoint{gp mark 3}{(11.121,7.431)}
\gpcolor{color=gp lt color border}
\draw[gp path] (1.136,8.381)--(1.136,0.985)--(11.947,0.985)--(11.947,8.381)--cycle;
\gpdefrectangularnode{gp plot 1}{\pgfpoint{1.136cm}{0.985cm}}{\pgfpoint{11.947cm}{8.381cm}}
\end{tikzpicture}
  }
  \caption{\textsc{Accel}: exec.\ time}
  \label{fig:case6_exec_time}
 \end{minipage}

\vspace{1em}
 
 \renewcommand{\figurename}{Table}
 \setlength\abovecaptionskip{0pt}
 \begin{minipage}[t]{0.24\textwidth}
 \tblcaption{\textsc{Simple} (sec.)}
 \label{table:case1_montre_exec_time}
  \centering
  \scalebox{0.58}{
   \pgfplotstabletypeset[
   sci,
   sci zerofill,
   multicolumn names, 
   display columns/0/.style={
   column name=$|w|$, 
   fixed,fixed zerofill,precision=0,
   },
   columns={[index]0,[index]2,[index]3,[index]4},
   display columns/1/.style={
   fixed,fixed zerofill,precision=2,
   column name=\begin{tabular}{c}FJS\\ (online)\end{tabular}
 },
   display columns/2/.style={
   fixed,fixed zerofill,precision=2,
   column name=\begin{tabular}{c}Montre\\(offline)\end{tabular}},
   display columns/3/.style={
   fixed,fixed zerofill,precision=2,
   column name=\begin{tabular}{c}Montre\\(online)\end{tabular}},
   fixed,fixed zerofill,precision=2,
   every head row/.style={
   before row={\toprule}, 
   after row={\midrule} 
   },
   empty cells with={Timeout},
   every last row/.style={after row=\bottomrule}, 
   ]{elapsed-time-0-0419193201.table}}
 \end{minipage}
 \hfill
 \begin{minipage}[t]{0.28\textwidth}
 \tblcaption{\textsc{Settling} (sec.)}
 \label{table:case4_montre_exec_time}
  \centering
  \scalebox{0.58}{
   \pgfplotstabletypeset[
   sci,
   sci zerofill,
   multicolumn names, 
   columns={[index]0,[index]2,[index]3,[index]4},
   display columns/0/.style={
   column name=$|w|$, 
   fixed,fixed zerofill,precision=0,
   },
   display columns/1/.style={
   fixed,fixed zerofill,precision=2,
   column name=\begin{tabular}{c}FJS\\ (online)\end{tabular}
 },
   display columns/2/.style={
   fixed,fixed zerofill,precision=2,
   column name=\begin{tabular}{c}Montre\\(offline)\end{tabular}},
   display columns/3/.style={
   fixed,fixed zerofill,precision=2,
   string replace={180}{},
   column name=\begin{tabular}{c}Montre\\(online)\end{tabular}},
   fixed,fixed zerofill,precision=2,
   every head row/.style={
   before row={\toprule}, 
   after row={\midrule} 
   },
   string replace={1800}{},
   empty cells with={Timeout},
   every last row/.style={after row=\bottomrule}, 
   ]{elapsed-time-14-0505151239.dat}}
 \end{minipage}
 \hfill
 \begin{minipage}[t]{0.23\textwidth}
 \tblcaption{\textsc{Gear} (sec.)}
 \label{table:case5_montre_exec_time}
  \centering
  \scalebox{0.58}{
   \pgfplotstabletypeset[
   sci,
   sci zerofill,
   multicolumn names, 
   columns={[index]0,[index]2,[index]3,[index]4},
   display columns/0/.style={
   column name=$|w|$, 
   fixed,fixed zerofill,precision=0,
   },
   display columns/1/.style={
   fixed,fixed zerofill,precision=2,
   column name=\begin{tabular}{c}FJS\\ (online)\end{tabular}
 },
   display columns/2/.style={
   fixed,fixed zerofill,precision=2,
   column name=\begin{tabular}{c}Montre\\(offline)\end{tabular}},
   display columns/3/.style={
   fixed,fixed zerofill,precision=2,
   column name=\begin{tabular}{c}Montre\\(online)\end{tabular}},
   fixed,fixed zerofill,precision=2,
   every head row/.style={
   before row={\toprule}, 
   after row={\midrule} 
   },
   empty cells with={Timeout},
   every last row/.style={after row=\bottomrule}, 
   ]{elapsed-time-12-0419193201.dat}} 
 \tblcaption{\textsc{Accel} (sec.)}
  \label{table:case6_montre_exec_time}
  \centering
  \scalebox{0.58}{
   \pgfplotstabletypeset[
   sci,
   sci zerofill,
   multicolumn names, 
   columns={[index]0,[index]2,[index]3,[index]4},
   display columns/0/.style={
   column name=$|w|$, 
   fixed,fixed zerofill,precision=0,
   },
   display columns/1/.style={
   fixed,fixed zerofill,precision=2,
   column name=\begin{tabular}{c}FJS\\ (online)\end{tabular}
 },
   display columns/2/.style={
   fixed,fixed zerofill,precision=2,
   column name=\begin{tabular}{c}Montre\\(offline)\end{tabular}},
   display columns/3/.style={
   fixed,fixed zerofill,precision=2,
   column name=\begin{tabular}{c}Montre\\(online)\end{tabular}},
   fixed,fixed zerofill,precision=2,
   every head row/.style={
   before row={\toprule}, 
   after row={\midrule} 
   },
   string replace={1800}{},
   empty cells with={Timeout},
   every last row/.style={after row=\bottomrule}, 
   ]{elapsed-time-10-0419193201.dat}}
 \end{minipage}
 \hfill
  \begin{minipage}[t]{0.23\textwidth}
   \tblcaption{Memory consumption of FJS (online) and BM}
   \scalebox{0.7}{
   \pgfplotstabletypeset[
   sci,
   sci zerofill,
   multicolumn names, 
   columns={[index]0,[index]2,[index]3},
   display columns/0/.style={
   column name=$|w|$, 
   fixed,fixed zerofill,precision=0,
   },
   display columns/1/.style={
   fixed,fixed zerofill,precision=2,
   column name=BM (MB),
   },
   display columns/2/.style={
   fixed,fixed zerofill,precision=2,
   column name=FJS (MB),
   },
   fixed,fixed zerofill,precision=2,
   every head row/.style={
   before row={\toprule}, 
   after row={\midrule} 
   },
   every last row/.style={after row=\bottomrule}, 
   ]{max-ram-14-0506001548.table.dat}}
   \label{fig:ram_usage}
  \end{minipage}

\end{figure}

We implemented the brute-force, BM, FJS algorithms in
 C++ \cite{TimedFJSSample} and
we compiled them by clang-800.0.42.1.  
All the experiments are done on MacBook Pro
Early 2013 with 2.6 GHz Intel Core i5 processor and 8 GB 1600MHz DDR3
RAM.

\vspace{.5em}
\noindent
\begin{minipage}{\textwidth}
\paragraph{Speed (i.e.\ Permissible Density in Online Usage)}
In~Fig.~\ref{fig:case1_exec_time}--\ref{fig:case6_exec_time} 
are the comparison of the offline implementations of the brute-force, BM
and FJS algorithms, respectively (average of  five runs). Preprocessing time is excluded (it is
anyway negligible, see Appendix~\ref{appendix:preprocessing}). We also
exclude time of loading the input timed word in memory; this is because in many deployment scenarios like embedded ones,   I/O is pipelined by, for
  example, DMA.
\end{minipage}

 The pattern automata
for the benchmarks \textsc{Torque}, \textsc{Setting}, and \textsc{Gear}
 look
similar to each other. However their input timed words---generated by
a suitable Simulink model for each benchmark---exhibit different
characteristics, such as how often the characters in the pattern
automaton occur in the input timed words. Accordingly the performance of
the timed pattern matching algorithms varies, as we see in
Fig.~\ref{fig:case3_exec_time}--\ref{fig:case5_exec_time}.

  We observe that our FJS algorithm generally outperforms the BM and
  brute-force ones. 
  For \textsc{Settling} and  \textsc{Accel} the performance gap is roughly twice, and it possibly makes a big practical difference e.g.\ when a data set is huge and the monitoring task takes hours. For \textsc{Large
  Constraints} it seems to depend on specific words which algorithm
  performs better. The advantage in performance is as we expected, given
  that the FJS algorithm combines 
  the KMP-type skipping (that works well roughly when the BM-type one does)
  and the Quick Search-type skipping (that complements KMP). 
  After all, it is encouraging to observe that our FJS algorithm performs better in the automotive examples, where our motivation is drawn. 


In every benchmark except for \textsc{Large Constraint}, the execution
time grows roughly linearly on the length of the input word. This is a pleasant property for monitoring algorithms for which an input word can be very long. 


These results for \emph{offline} implementations also support our claim of FJS's superiority in \emph{online} usage scenarios. In online usage we must process an input word faster than the speed with which the word arrives; otherwise the word eventually floods memory. Thus running twice as fast means that our algorithm can handle  twice as dense input---or that we can use cheaper hardware to conduct the same monitoring task. Note that the difference between our 
online and offline implementations is only in the memory management and I/O. Thus their speed should be similar.

\noindent\textit{Memory Usage}\quad
In Table~\ref{fig:ram_usage} is the memory consumption of our \emph{online} FJS implementation and that of BM, for the \textsc{Settling} benchmark (the tendency is the same for the other benchmarks). The absolute values are not very important because they include our program and dynamically linked libraries; what matters is the tendency that memory consumption is almost constant for online FJS while it increases for BM. Constant memory consumption is an important property for monitoring algorithms, especially in online usage. The results here also concurs with our theoretical observation at the end of~\S{}\ref{sec:timedFJS} (see Fig.~\ref{fig:FJSIsOnlineBMIsNot}). 

\subsection{Comparison with Montre}
Here we compare with Montre, a recent tool for (both online and offline) timed pattern matching~\cite{DBLP:journals/corr/Ulus16}. Montre's online and offline algorithms differ from each other; both of them are quite different from our FJS algorithm, too. Montre's emphasis is on the algebraic structure of timed regular expressions and compositional reasoning thereby, while our algorithm features automata-theoretic views on the problem. 

Since we had difficulty running Montre in the same environment as in~\S{}\ref{subsec:comparisonWithBFAndBM}, we instead used 
GCC 4.9.3
as a compiler, and conducted experiments 
on an Amazon EC2 c4.xlarge instance (April 2017, 4 vCPUs and 7.5 GB RAM) that runs Ubuntu 14.04
LTS (64 bit).  The timeout is set to thirty minutes.

In 
Tables~\ref{table:case1_montre_exec_time}--\ref{table:case6_montre_exec_time} are the results. Here we use the benchmarks
\textsc{Simple},
\textsc{Settling}, \textsc{Gear}, and \textsc{Accel},
for which the translation between timed words (our input) and signals (Montre's input) makes sense. 
Our
(online) FJS implementation is about 3 to 8 times faster than offline
Montre and about 250 times faster than online Montre.  
The big performance advantage over \emph{online} Montre can be attributed to various reasons, including:
1)
online Montre needs to frequently compute  derivatives of TREs; 2) online Montre is comparable to our  brute-force algorithm in that there is no skipping involved; and 3) Montre is implemented in a
functional language (Pure~\cite{PureLang}) that is in general slower.
The reason for the advantage over \emph{offline} Montre is yet to be seen: given that the algorithms are very different, the advantage may well be solely attributed to implementation details. We claim however that good online performance of our FJS algorithm is a big advantage for monitoring applications.

\auxproof{
\subsection{Benchmarks}

We used the following six microbenchmarks in the experiments:
\textsc{Simple}, \textsc{Large Constraints}, \textsc{Torque},
\textsc{Settling}, \textsc{Gear}, and \textsc{Accel}.  Each
benchmark consists of a single pattern timed automaton $\mathcal{A}$
and a set of timed words $W$.  Each implementation solves the timed
pattern matching problem for each $w \in W$.  We explain each
benchmark in the following.

\paragraph{\textsc{Simple}}

Fig.~\ref{fig:case1_pattern}. shows the pattern timed automaton
$\mathcal{A}$ of benchmark \textsc{Simple} taken from the Case 1
in~\cite{DBLP:conf/formats/WagaAH16}.  The set $W$ consists of 37
timed words, each of which has 20 to 1,024,000 events.  The events of
each timed word is the alternation of \texttt{a} and \texttt{b}.  The
duration of each event is set randomly.


\paragraph{ \textsc{Large Constraints}}

The pattern $\mathcal{A}$ is a translation of timed regular expression
$\bigl\langle\bigl(\,\bigl(\langle\mathrm{p}\cdot\mathrm{\neg p}\rangle_{(0,10]}\bigr)^*\land\bigl(\langle\mathrm{q}\cdot\mathrm{\neg
    q}\rangle_{(0,10]}\bigr)^*\,\bigr)\cdot \$\bigr\rangle_{(0,80]}$.  The
   pattern timed automaton $\mathcal{A}$ consists of five states and
   nine transitions.  The length of each timed word in $W$ is 1,934 to
   31,935.  The events of a timed word in $W$ is an interleaving of
   $p,\neg p,p,\neg p,\dots$ and $q,\neg q,q,\neg q,\dots$.  The
   duration of each switching follows certain exponential
   distribution.


\paragraph{\textsc{Torque}}

The benchmark \textsc{Torque}, taken
from~\cite{DBLP:conf/formats/WagaAH16}, is the scenario of monitoring
the output torque of an automotive powertrain.  The set $W$ consists
of ten timed words.  The length of each word is 242,808 to 4,873,207.
In order to construct $W$, we conducted simulation of the model
\texttt{sldemo\_enginewc.slx} in the Simulink Demo
palette~\cite{SimulinkGuide}\footnote{This model takes an input
  representing the desired rpm.  We fed a random signal following the
  Gaussian distribution $N(2000,10^{6})$.}; the discretized output
representing the engine torque is translated to a sequence of
$\textrm{high}$ and $\textrm{low}$ with the threshold $40 \mathrm{N
  \cdot m}$.

Fig.~\ref{fig:case3_pattern} shows the timed pattern automaton of this
benchmark.  This pattern detects a five or more consecutive occurrence
of $\textrm{high}$ in one second.




\paragraph{\textsc{Settling}}

We used a Simulink model of an automotive
powertrain~\cite{DBLP:conf/hybrid/JinDKUB14} to construct $W$ of
benchmark \textsc{Settling}.  This model outputs two signals: the mode
of the powertrain (``normal'' and ``power'') and an A/F value error
$\mu$.  The simulation result is translated to a sequence of
$\set{\textrm{normal},\textrm{power},|\mu| \ge 0.02, |\mu| < 0.02}$ as
follows: (1) The letter $a \in \set{\textrm{normal},\textrm{power}}$
is recorded at time $\tau$ if a mode change to $a$ happens at this
time and (2) the letter $a \in \set{|\mu| \ge 0.02, |\mu| < 0.02}$ is
recorded at time $\tau$ if the value of $|\mu|$ crosses 0.02 at time
$\tau$.  For example, the event $(\textrm{normal}, 0.5)$ is in a word
if a mode change from $\textrm{power}$ to $\textrm{normal}$ happens at
0.5 sec.  The event $(|\mu| \ge 0.02, 0.6)$ is in a word if the value
of $|\mu|$ crosses $0.02$ from below.  The set $W$ consists of twelve
timed words.  The length of each word is 472 to 47,200,000 events.

The pattern timed automaton $\mathcal{A}$, shown in
Figure~\ref{fig:case4_pattern}, detects an unstable A/F value error:
This pattern matches a word $w$ if $|\mu|$ is not less than $0.02$ for
$100$ milliseconds in $w$.  This pattern models the violation of the
requirement (32) in~\cite{DBLP:conf/hybrid/JinDKUB14}.


\paragraph{\textsc{Gear}}

Benchmark \textsc{Gear} is inspired by the scenario of monitoring gear
change of an automatic transmission system.  We constructed the set
$W$ as follows.  We conducted simulation of the model of an automatic
transmission system~\cite{DBLP:conf/cpsweek/HoxhaAF14}.  We used
S-TaLiRo~\cite{DBLP:conf/tacas/AnnpureddyLFS11} to generate an input
to this model; it generates a gear change signal that is fed to the
model.  A gear is chosen from $\set{\text{g}_1,\text{g}_2,\text{g}_3,\text{g}_4}$.  The generated
gear change is recorded in a timed word.  The set $W$ consists of 9
timed words; the length of each word is 307 to 1,011,427.

The pattern timed automaton $\mathcal{A}$, shown in
Fig.~\ref{fig:case5_pattern}, detects the violation of the following
condition: If the gear is changed to 1, it should not be changed to 2
within 2 seconds.  This condition is related to the requirement
$\phi^{\mathit{AT}}_5$ proposed in~\cite{DBLP:conf/cpsweek/HoxhaAF14}.

\paragraph{\textsc{Accel}}
\label{subsec:case6}

The $W$ of benchmark $\textsc{Accel}$ is also constructed from the
Simulink model of the automated transmission
system~\cite{DBLP:conf/cpsweek/HoxhaAF14}.  For this benchmark, the
(discretized) value of three state variables are recorded in $W$:
engine RPM (discretized to ``high'' and ``low'' with a certain
threshold), velocity (discretized to ``high'' and ``low'' with a
certain threshold), and 4 gear positions.  We used
S-TaLiRo~\cite{DBLP:conf/tacas/AnnpureddyLFS11} to generate a input
sequence of gear change.  The set $W$ consists of 9 timed words.  The
length of each word is 25,002 to 17,280,002.

The pattern timed automaton $\mathcal{A}$ of this benchmark is shown
in Fig.~\ref{fig:case6_pattern}.  This pattern matches a part of a
timed word that violates the following condition: If a gear changes
from 1 to 2, 3, and 4 in this order in 10 seconds and engine RPM
becomes large during this gear change, then the velocity of the car
must be sufficiently large in one second.  This condition models the
requirement $\phi^{\mathit{AT}}_8$ proposed
in~\cite{DBLP:conf/cpsweek/HoxhaAF14}.

}

\auxproof{
\subsection{Memory usage of the online implementations}
\begin{figure}[t]
  \scalebox{0.50}{
  \input{max-ram-0-0406183616.tikz}
 }
 \scalebox{0.50}{
 \input{max-ram-14-0506001548.tikz}
 }
 \caption{RAM usage of the online implementation of our FJS-type
   algorithm for timed pattern matching.  We separated the results
   into two figures because the order of the length of timed words
   used in \textsc{Settling} is larger than the other.}
 \label{fig:ram_usage}
\end{figure}

Figure~\ref{fig:ram_usage} shows the maximum memory consumption of our
timed FJS algorithm for each benchmark.  The numbers are measured
using GNU time.  We remark that the results in
Table~\ref{fig:ram_usage} contain the size of our programs and
dynamically linked libraries; these numbers highly depend on the
environment.

Theoretically, the memory consumption of our timed FJS algorithm is
not constant with respect to the length of the input timed word.
However, Figure~\ref{fig:ram_usage} suggests that the memory
consumption can be constant to the length of the timed word in
practice.  This implies that our algorithm is suitable for the use
case in which we need to monitor the input for long time without
running into an out-of-memory error.
}

\auxproof{
\subsection{Comparison with Montre}

Montre~\cite{DBLP:journals/corr/Ulus16} is a tool for online/offline
timed pattern matching.  Both offline and online modes of Montre take a
TRE as input but its online mode lazily constructs a labeled transition
system (LTS) on memory whereas its offline mode inductively assign the
truth value of each subformula as a signal.  Notice that our current
hardcodes a timed automaton that corresponds to the TRE.

We compared the execution time of 
$\IMPL_{\ONLINE,\FJS}$ and online/offline Montre.  This experiment is
conducted on an Amazon EC2 c4.xlarge instance.  The OS is Ubuntu 14.04
LTS (64 bit).  The following benchmarks are used: \textsc{Simple},
\textsc{Settling}, \textsc{Gear}, and \textsc{Accel}; the pattern of
the other benchmarks are not expressible in Montre's state-based
semantics.  The timeout is set to thirty minutes.


\begin{table}[t]
\centering
\caption{Online algorithms of timed pattern matching}\label{table:online_algorithms}
\begin{tabular}{c|c|c}
&Brute-force& Skipping \\\hline
  \begin{tabular}{c}
   LTS is constructed\\
   on the fly\\
  \end{tabular}
 &Montre~\cite{DBLP:journals/corr/Ulus16}&
         ---\\\hline
  LTS is given& Brute-force
     in~\cite{DBLP:conf/formats/WagaAH16}&FJS-type in this paper\\
\end{tabular}
\end{table}

Tables~\ref{table:case1_montre_exec_time},
\ref{table:case4_montre_exec_time},
\ref{table:case5_montre_exec_time}, and
\ref{table:case6_montre_exec_time} show the result.  
 Our FJS algorithm (its online implementation) is 
 about 3 to 8 times faster than the offline-mode
Montre and about 250 times faster than the online-mode Montre.  Our
implementation outperforms the online mode of Montre because (1) the
online-mode Montre frequently computes a derivative of a TRE to
augment an LTS; (2) Montre's brute-force algorithm requires more
matching trials than our algorithm; (3) Montre is implemented in a
functional language Pure\cite{PureLang}, which is in general slower
and may use procedures that scale poorly;
and (4) our implementations hardcode patterns, whereas Montre takes
them as input and parses them.
}

\section{Conclusions and Future Work}
We continued~\cite{DBLP:conf/formats/WagaAH16} and presented an algorithm for timed pattern matching. Based on the FJS algorithm~\cite{DBLP:journals/jda/FranekJS07} it exhibits better online properties, as witnessed in our experiments. 
As future work we wish to implement an interface of our experimental implementation and distribute as a tool.
We also wish to try the algorithm in actual embedded hardware, like~\cite{kane2015runtime}.

\vspace{.3em}
\noindent\textit{Acknowledgments}\quad
Thanks are due to
Sean Sedwards
 for  useful discussions and comments.
The authors are supported by
JSPS Grant-in-Aid
15KT0012. 
M.W.\ and I.H.\ are supported by JST ERATO HASUO Metamathematics for Systems Design Project (No.\
JPMJER1603),  and JSPS Grant-in-Aid No.\ 15K11984. 
K.S.\ is supported by JST PRESTO (No.\ JPMJPR15E5) and JSPS Grant-in-Aid No.\ 70633692.

\newpage
\appendix

\section{Detailed Pseudocode of Our FJS-type Algorithm for Timed Pattern Matching}
\label{appendix:detailFJS}

\begin{algorithm}[tbp]
  \caption{Detail of our FJS-type algorithm for timed pattern matching}
  \label{alg:timedFJSDetail}
  \begin{algorithmic}[1]
   \Require A timed word $w = (\overline{a},\overline{\tau})$, and a timed
   automaton $\mathcal{A} = (\Sigma,S,S_0,C,E,F)$.
   \Ensure $\bigcup Z$ is the match set $\mathcal{M} (w,\mathcal{A})$ in
   Def.~\ref{def:TimedPatternMatching}.
   \State $n \gets 1$
   \Comment{$n$ is the position in $\str$ of the head of the current matching trial}
   \State $\CurrConf \gets \emptyset;\; Z\gets\emptyset$
   \While{$n \leq |w| - m + 2$}
   \While{$\forall\, \overline{r} \in L'.\, \overline{a}_{n+m-2}\neq a' \,\text{where}\, \overline{r}_{m-2} \xrightarrow{a'} \overline{r}_{m-1}$}
   \Comment{Try to match the tail of $L'$}
   \State $n \gets n + \Delta({\overline{a}_{n+m-1}})$
   \Comment{ Quick Search-type skipping}
   \If{$n > |w| - m + 2$}
   \Return
   \EndIf
   \EndWhile

   \State $\CurrConf \gets \{(s,\rhoEmpty,[\tau_{n-1},\tau_n)) \mid s \in S_0\}$
   \For{$n' \in \{n,n+1,\cdots,|w|\}$}   \Comment{ We try matching in the same way as~\cite{DBLP:conf/formats/WagaAH16}}
   \State{$\NextConf \gets \emptyset$}

   \For{$(s,\rho,T) \in \CurrConf$}
   \For{$(s,s',a_n,\lambda,\delta) \in E$} 
   \State $T' \gets \{t_0 \in T \mid \eval(\rho,\tau_n,t_0) \models \delta\}$
   \If{$T' \neq \emptyset$}
   \State $\rho' \gets \rho$

   \For{$x \in \lambda$}
   \State $\rho' \gets \reset(\rho',x,\tau_n)$
   \EndFor

   \State $\NextConf \gets \NextConf \cup (s',\rho',T')$

   \For{$s_f \in F, (s',s_f,\$,\lambda',\delta') \in E$}
   \State $T'' \gets (\tau_{n'},\tau_{n'+1}]$
   \State $Z \gets Z \cup \solConstr (T',T'',\rho',\delta')$
   \EndFor
   \EndIf

   \EndFor
   \EndFor
   \If{$\NextConf = \emptyset$}
   \Break
   \EndIf
   \State $\CurrConf \gets \NextConf$
   \EndFor
   
   \For {$k \in \{n+1,\cdots,n + \max \{ \beta(s) \mid (s,\rho,T) \in \CurrConf\} - 1\}$}
   \State \Comment{Matching trial stacks at the states $\{ s \mid (s,\rho,T) \in \CurrConf\}$}
   \For {$s \in S_0, s_f \in F, (s, s_f, \$, \rho, \delta) \in E$}
   \State $Z \gets Z \cup \solConstr([\tau_{k-1},\tau_k),(\tau_{k-1},\tau_k],\rho,\delta)$
   \EndFor
   \EndFor
   \State $n \gets n + \max \{ \beta(s) \mid (s,\rho,T) \in \CurrConf\}$
   \Comment{ KMP-type skipping}
   \EndWhile
  \end{algorithmic}
 \end{algorithm}

\begin{mydefinition}[$\eval, \reset,  \solConstr$]
 Let a pattern timed automaton be $\mathcal{A} = (\Sigma,S,S_0,C,E,F)$.
 For a partial function $\rho: C \rightharpoonup \Rp$ and $t,t_0 \in \Rp$,
 the clock interpretation 
 $\eval (\rho,t,t_0)\colon C\to \mathbb{R}_{\ge 0}$
 is $\eval (\rho,t,t_0)(x)=t-\rho(x)$ (if $\rho(x)$ is defined) and 
 $\eval (\rho,t,t_0)(x)=t-t_{0}$ (otherwise).
 For a partial function $\rho: C \rightharpoonup \Rp$, $t_{r}\in\Rp$ and $x \in C$, $\reset(\rho,x,t_r)\colon C\rightharpoonup\mathbb{R}_{>0}$
 is the following partial function
 : $\reset(\rho,x,t_r)(x)=t_{r}$; and $\reset(\rho,x,t_r)(y)=\rho(y)$
 for each $y\in C \setminus\{x\}$.
 (The latter is Kleene's equality between partial functions, to be precise.)
 For intervals $T,T'\subseteq\Rp$, a partial function 
 $\rho\colon C\rightharpoonup\Rnn$, and a clock constraint
 $\delta \in \Phi (C)$ (\S{}\ref{subsec:timedPatternMatching}), we define
 $\solConstr(T,T',\rho,\delta)=\bigl\{\,(t,t')\,\bigl|\bigr.\,t\in T, t'\in T',
 \eval(\rho,t',t)\models \delta\,\bigr\}$.
\end{mydefinition}

The detail of our FJS-type algorithm for timed pattern matching is in Algorithm~\ref{alg:timedFJSDetail}.

\section{Our  FJS-Type Timed Pattern Matching Problem, Illustrated}
\label{appendix:TimedFJSIllustrated}
Let up look at the example in Fig.~\ref{fig:input_timed_pattern_matching}. The zone automaton $\SG(\mathcal{A})$ is in Fig.~\ref{fig:zoneAutomaton};  the execution of our algorithm is illustrated in Fig.~\ref{fig:tpm_configs}.

The first configuration in Fig.~\ref{fig:tpm_configs} means we are after possible matches that start at $t \in [0,0.5)$. With $m=4$ (the length of the shortest accepted word),
we try matching of the third target character $\overline{a}_{m-1}= \overline{a}_{3} = \text{b}$ with the tail of every length-3 prefix of $L(\mathcal{A})$ using the zone automaton $\SG(\mathcal{A})$ in Fig.~\ref{fig:zoneAutomaton}.
The trial fails and we invoke Quick Search-type skipping $\Delta(\overline{a}_{4} = \text{b})$.
Since $\overline{a}_{4} = \text{b}$ does not appear in any transition of $\SG(\mathcal{A})$, we can skip four events and reach the second configuration where we look for potential matches that start at $t \in [1.7,2.8)$.

We again try matching form the tail. This time it succeeds because $\overline{a}_7 = \text{a}$ appears in the third character of a word accepted by $\SG(\mathcal{A})$.
Then we move to Line~\ref{line:leftToRightMatchingInTimedFJS} of Algorithm~\ref{alg:timedFJSDetail} where we 
try matching from left to right. After the trial stacks at $s_2 \in S$,
we invoke the KMP-type skipping. 

The KMP-type skip value $\beta(s_2)$ is computed as shown in Fig.~\ref{fig:betaforFJSTimedPatternMatching}. Here it is much more intricate how to decide {\color{dgreen}{\cmark}} or {\color{red}{\xmark}}, i.e.\ if the ``prefix'' on the top ($L'_{s_{2}}$) matches the shifts of $L'$ below. Previously for string or (untimed) pattern matching we just compared characters (Fig.~\ref{fig:tableForBeta} \&~\ref{fig:tableForPatternBeta}); here the question is if there exists a timed word that causes both a transition in the prefix (on the top) and the corresponding transition in a shift (below). For this purpose we employ the zone automaton 
$\SG(\mathcal{A} \times \mathcal{A})$ of the product timed automaton $\mathcal{A} \times \mathcal{A}$. For example, the shift by one position does not match ({\color{red}{\xmark}}) in Fig.~\ref{fig:betaforFJSTimedPatternMatching} because there is no transition 
\scalebox{0.55}{
  \tikz[baseline=(s_0.base),auto,node distance=0.5cm]{
  \regionstate{s_0}{$(s_0,s_1)$}{$x=x'=0$};
  \regionstate[right=of s_0]{s_1}{$(s_1,s_2)$}{$x=x'=0$};
  \path[->](s_0) edge node {$a$} (s_1);
   }} 
in $\SG(\mathcal{A} \times \mathcal{A})$. 


In the fourth configuration, we try matching from $t \in [3.7,4.9)$.
We again try matching from the tail; it succeeds; we try matching from left to right; and we find a matching $\{(t,t') \mid t \in [3.7,3.9), t' \in (6.0,\infty)\}$.

\begin{figure}[tbp]
 \centering
 \scalebox{0.7}{
\begin{tabular}{c c c}
 &
 \begin{tabular}{|c|}
  \hline
  \begin{tikzpicture} 
  \draw [thick, -stealth](-0.5,0)--(6.5,0) node [anchor=north]{$t$};
  \draw (0,0.1) -- (0,-0.1) node [anchor=north]{$0$};

 \draw (0.5,0.1) node[anchor=south]{$\text{a}$} -- (0.5,-0.1) node[anchor=north]{$0.5$};
 \draw (0.9,0.1) node[anchor=south]{$\text{a}$} -- (0.9,-0.1) node[anchor=north]{$0.9$};
 \draw (1.3,0.1) node[anchor=south]{$\text{b}$} -- (1.3,-0.1) node[anchor=north]{$1.3$};
 \draw (1.7,0.1) node[anchor=south]{$\text{b}$} -- (1.7,-0.1) node[anchor=north]{$1.7$};
 \draw (2.8,0.1) node[anchor=south]{$\text{a}$} -- (2.8,-0.1) node[anchor=north]{$2.8$};
 \draw (3.7,0.1) node[anchor=south]{$\text{a}$} -- (3.7,-0.1) node[anchor=north]{$3.7$};
 \draw (5.3,0.1) node[anchor=south]{$\text{a}$} -- (5.3,-0.1) node[anchor=north]{$5.3$};
 \draw (4.9,0.1) node[anchor=south]{$\text{a}$} -- (4.9,-0.1) node[anchor=north]{$4.9$};
 \draw (6.0,0.1) node[anchor=south]{$\text{a}$} -- (6.0,-0.1) node[anchor=north]{$6.0$};

 \draw (0.1,-0.5) rectangle (1.6,-1.0);
 \node at (0.85,-0.75) {$\mathcal{W}(L')$};
 \end{tikzpicture}
  \\\hline
 \end{tabular}
 $\stackrel{\Delta(\textrm{b})=4}{\Longrightarrow}$&
 \begin{tabular}{|c|}
  \hline
 \begin{tikzpicture} 
  \draw [thick, -stealth](-0.5,0)--(6.5,0) node [anchor=north]{$t$};
  \draw (0,0.1) -- (0,-0.1) node [anchor=north]{$0$};

 \draw (0.5,0.1) node[anchor=south]{$\text{a}$} -- (0.5,-0.1) node[anchor=north]{$0.5$};
 \draw (0.9,0.1) node[anchor=south]{$\text{a}$} -- (0.9,-0.1) node[anchor=north]{$0.9$};
 \draw (1.3,0.1) node[anchor=south]{$\text{b}$} -- (1.3,-0.1) node[anchor=north]{$1.3$};
 \draw (1.7,0.1) node[anchor=south]{$\text{b}$} -- (1.7,-0.1) node[anchor=north]{$1.7$};
 \draw (2.8,0.1) node[anchor=south]{$\text{a}$} -- (2.8,-0.1) node[anchor=north]{$2.8$};
 \draw (3.7,0.1) node[anchor=south]{$\text{a}$} -- (3.7,-0.1) node[anchor=north]{$3.7$};
 \draw (5.3,0.1) node[anchor=south]{$\text{a}$} -- (5.3,-0.1) node[anchor=north]{$5.3$};
 \draw (4.9,0.1) node[anchor=south]{$\text{a}$} -- (4.9,-0.1) node[anchor=north]{$4.9$};
 \draw (6.0,0.1) node[anchor=south]{$\text{a}$} -- (6.0,-0.1) node[anchor=north]{$6.0$};

 \draw (1.8,-0.5) rectangle (5.2,-1.0);
 \node at (3.5,-0.75) {$\mathcal{W}(L')$};
 \end{tikzpicture}\\\hline
  \end{tabular}
 \\
 $\Longrightarrow$
&
\begin{tabular}{|c|}
 \hline
  \begin{tikzpicture} 
  \draw [thick, -stealth](-0.5,0)--(6.5,0) node [anchor=north]{$t$};
  \draw (0,0.1) -- (0,-0.1) node [anchor=north]{$0$};

 \draw (0.5,0.1) node[anchor=south]{$\text{a}$} -- (0.5,-0.1) node[anchor=north]{$0.5$};
 \draw (0.9,0.1) node[anchor=south]{$\text{a}$} -- (0.9,-0.1) node[anchor=north]{$0.9$};
 \draw (1.3,0.1) node[anchor=south]{$\text{b}$} -- (1.3,-0.1) node[anchor=north]{$1.3$};
 \draw (1.7,0.1) node[anchor=south]{$\text{b}$} -- (1.7,-0.1) node[anchor=north]{$1.7$};
 \draw (2.8,0.1) node[anchor=south]{$\text{a}$} -- (2.8,-0.1) node[anchor=north]{$2.8$};
 \draw (3.7,0.1) node[anchor=south]{$\text{a}$} -- (3.7,-0.1) node[anchor=north]{$3.7$};
 \draw (5.3,0.1) node[anchor=south]{$\text{a}$} -- (5.3,-0.1) node[anchor=north]{$5.3$};
 \draw (4.9,0.1) node[anchor=south]{$\text{a}$} -- (4.9,-0.1) node[anchor=north]{$4.9$};
 \draw (6.0,0.1) node[anchor=south]{$\text{a}$} -- (6.0,-0.1) node[anchor=north]{$6.0$};

 \node at (2.8,-0.8) {$\{s_1\}$};
 \node at (3.7,-0.8) {$\{s_2\}$};
 \node at (4.5,-0.8) {\color{red}{\xmark}};
\end{tikzpicture}
\\\hline
\end{tabular}
 $\stackrel{\beta(s_2)=2}{\Longrightarrow}$&
\begin{tabular}{|c|}
 \hline
  \begin{tikzpicture} 
  \draw [thick, -stealth](-0.5,0)--(6.5,0) node [anchor=north]{$t$};
  \draw (0,0.1) -- (0,-0.1) node [anchor=north]{$0$};

 \draw (0.5,0.1) node[anchor=south]{$\text{a}$} -- (0.5,-0.1) node[anchor=north]{$0.5$};
 \draw (0.9,0.1) node[anchor=south]{$\text{a}$} -- (0.9,-0.1) node[anchor=north]{$0.9$};
 \draw (1.3,0.1) node[anchor=south]{$\text{b}$} -- (1.3,-0.1) node[anchor=north]{$1.3$};
 \draw (1.7,0.1) node[anchor=south]{$\text{b}$} -- (1.7,-0.1) node[anchor=north]{$1.7$};
 \draw (2.8,0.1) node[anchor=south]{$\text{a}$} -- (2.8,-0.1) node[anchor=north]{$2.8$};
 \draw (3.7,0.1) node[anchor=south]{$\text{a}$} -- (3.7,-0.1) node[anchor=north]{$3.7$};
 \draw (5.3,0.1) node[anchor=south]{$\text{a}$} -- (5.3,-0.1) node[anchor=north]{$5.3$};
 \draw (4.9,0.1) node[anchor=south]{$\text{a}$} -- (4.9,-0.1) node[anchor=north]{$4.9$};
 \draw (6.0,0.1) node[anchor=south]{$\text{a}$} -- (6.0,-0.1) node[anchor=north]{$6.0$};

 \draw (3.8,-0.5) rectangle (6.5,-1.0);
 \node at (5.15,-0.75) {$\mathcal{W}(L')$};
\end{tikzpicture}
\\\hline
\end{tabular}
 \\
 $\Longrightarrow$
&
\begin{tabular}{|c|}
 \hline
  \begin{tikzpicture} 
  \draw [thick, -stealth](-0.5,0)--(7.0,0) node [anchor=north]{$t$};
  \draw (0,0.1) -- (0,-0.1) node [anchor=north]{$0$};

 \draw (0.5,0.1) node[anchor=south]{$\text{a}$} -- (0.5,-0.1) node[anchor=north]{$0.5$};
 \draw (0.9,0.1) node[anchor=south]{$\text{a}$} -- (0.9,-0.1) node[anchor=north]{$0.9$};
 \draw (1.3,0.1) node[anchor=south]{$\text{b}$} -- (1.3,-0.1) node[anchor=north]{$1.3$};
 \draw (1.7,0.1) node[anchor=south]{$\text{b}$} -- (1.7,-0.1) node[anchor=north]{$1.7$};
 \draw (2.8,0.1) node[anchor=south]{$\text{a}$} -- (2.8,-0.1) node[anchor=north]{$2.8$};
 \draw (3.7,0.1) node[anchor=south]{$\text{a}$} -- (3.7,-0.1) node[anchor=north]{$3.7$};
 \draw (5.3,0.1) node[anchor=south]{$\text{a}$} -- (5.3,-0.1) node[anchor=north]{$5.3$};
 \draw (4.9,0.1) node[anchor=south]{$\text{a}$} -- (4.9,-0.1) node[anchor=north]{$4.9$};
 \draw (6.0,0.1) node[anchor=south]{$\text{a}$} -- (6.0,-0.1) node[anchor=north]{$6.0$};

 \node at (4.9,-0.8) {$\{s_1\}$};
 \node at (5.5,-0.8) {$\{s_2\}$};
 \node at (6.0,-0.8) {$\{s_3\}$};
 \node at (6.6,-0.8) {$\{s_4\}$};
 \node at (7.3,-0.8) {\color{dgreen}{\cmark}};
\end{tikzpicture}\\
\hline
\end{tabular}
\end{tabular}}
 \caption{Our FJS-type algorithm for pattern matching, for the example in Fig.~\ref{fig:input_timed_pattern_matching}}
 \label{fig:tpm_configs}
\end{figure}
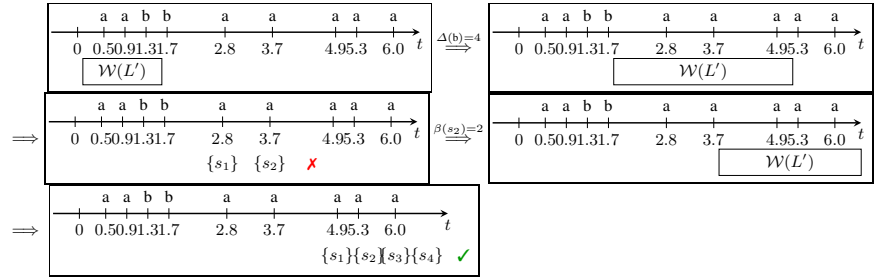
\begin{figure}[tbp]
\begin{minipage}{0.5\textwidth}
\[
 \SG(\mathcal{A}) =
  \scalebox{0.55}{
  \tikz[baseline=(s_0.base),auto,node distance=0.5cm]{
   \regionstate{s_0}{$s_0$}{$x=0$};
   \regionstate[right=of s_0]{s_1}{$s_1$}{$x=0$};
   \regionstate[right=of s_1]{s_2}{$s_2$}{$x=0$};
   \regionstate[right=of s_2]{s_3}{$s_3$}{$x>1$};
   \regionstate[right=of s_3,accepting]{s_4}{$s_4$}{$x>1$};
   \path[->](s_0) edge  node {$\text{a}$} (s_1);
   \path[->](s_1) edge  node {$\text{a}$} (s_2);
   \path[->](s_2) edge  node {$\text{a}$} (s_3);
   \path[->](s_3) edge  node {$\$$} (s_4);
   }}
\]
 \caption{The zone automaton $\SG(\mathcal{A})$ for $\mathcal{A}$ in Fig.~\ref{fig:input_timed_pattern_matching}}
 \label{fig:zoneAutomaton}
\end{minipage}
\hfill
\begin{minipage}{0.4\textwidth}
 \centering
\scalebox{0.8}{\begin{tabular}{c|l}
 &$ \left.
  \scalebox{0.70}{
  \tikz[baseline=(s_0.base),auto,node distance=0.5cm]{
   \regionstate{s_0}{$s_0$}{$x=0$};
   \regionstate[right=of s_0]{s_1}{$s_1$}{$x=0$};
   \regionstate[right=of s_1]{s_2}{$s_2$}{$x=0$};
   \path[->](s_0) edge  node {$\text{a}$} (s_1);
   \path[->](s_1) edge  node {$\text{a}$} (s_2);
   }}\right\} = L'_{s_2}$\\\hline
 \color{red}{\xmark}&
$\left.\quad
  \scalebox{0.70}{
  \tikz[baseline=(s_0.base),auto,node distance=0.5cm]{
   \node(any) at (3,0) {*};
   \regionstate[node distance=0.8cm,right=of any]{s_0}{$s_0$}{$x=0$};
   \regionstate[right=of s_0]{s_1}{$s_1$}{$x=0$};
   \regionstate[right=of s_1]{s_2}{$s_2$}{$x=0$};
   \regionstate[right=of s_2]{s_3}{$s_3$}{$x>1$};
   \path[->](s_0) edge  node {$\text{a}$} (s_1);
   \path[->](s_1) edge  node {$\text{a}$} (s_2);
   \path[->](s_2) edge  node {$\text{a}$} (s_3);
   }}\right\}  = L'$\\
 \color{dgreen}{\cmark}&
$\left.\quad
  \scalebox{0.70}{
  \tikz[baseline=(s_0.base),auto,node distance=0.5cm]{
   \node(any0) at (3,0) {*};
   \node(any) at (4.6,0) {*};
   \regionstate[node distance=0.85cm,right=of any]{s_0}{$s_0$}{$x=0$};
   \regionstate[right=of s_0]{s_1}{$s_1$}{$x=0$};
   \regionstate[right=of s_1]{s_2}{$s_2$}{$x=0$};
   \regionstate[right=of s_2]{s_3}{$s_3$}{$x>1$};
   \path[->](s_0) edge  node {$\text{a}$} (s_1);
   \path[->](s_1) edge  node {$\text{a}$} (s_2);
   \path[->](s_2) edge  node {$\text{a}$} (s_3);
   }}\right\}  = L'$\\
\end{tabular}}
 \caption{Table for $\beta(s_2)=2$}
 \label{fig:betaforFJSTimedPatternMatching}
\end{minipage}
\end{figure}

\section{Correctness of Our FJS-Type Timed Pattern Matching Algorithm}
\label{appendix:correctness}

\begin{mytheorem}[Correctness of $\Delta$ and $\beta$]\label{thm:correctness}
 Let $\Opt (n) = \min\{ i \in \Zp \mid \exists t \in [\tau_{n+i-1}, \tau_{n+i}), t' \in (t,\infty).\, (t,t') \in \mathcal{M} (w,\mathcal{A})\}$.
 For $n \in [1,|w|]$,
 we have both $\Opt (n) \geq \Delta (\overline{a}_{n+m-1})$ and
 $\Opt (n) \geq \max_{s \in S'}\beta (s)$
 where $\nu_0$ is the clock valuation assigning 0 for any $x \in C$,
 $n' = \max\{n' \in [1, |\str|] \mid \exists s_0 \in S_0,s \in S,
 \nu \in (\Rnn)^C.\, (s_0,\nu_0) \xrightarrow{w(n,n')} (s,\nu)\}$ and
 $S' = \{s \in S \mid \exists s_0 \in S_0,\nu \in (\Rnn)^C.\, (s_0,\nu_0) \xrightarrow{w(n,n')} (s,\nu)\}$.
\end{mytheorem}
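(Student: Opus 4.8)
The plan is to show that both skip values are \emph{lower bounds} for $\Opt(n)$; this is exactly the statement that advancing $n$ by $\Delta(\overline{a}_{n+m-1})$ or by $\max_{s\in S'}\beta(s)$ can never jump over the next matching interval, so skipping preserves the output. I would first dispose of the trivial case: if there is no match at or beyond relative position $1$ then $\Opt(n)=\infty$ and both inequalities hold vacuously, so I may fix $i=\Opt(n)<\infty$ and take as a witness an accepted segment $u=w|_{(t,t')}\in L(\mathcal{A})$ with $t\in[\tau_{n+i-1},\tau_{n+i})$, i.e.\ a match whose first event is $\overline{a}_{n+i}$. The whole argument then rests on two facts recorded earlier: every accepted segment has at least $m-1$ real characters (by definition of $m$), and the preservation property of $\SG(\mathcal{A})$, which guarantees that the genuine timed prefixes of $u$ and of the already-read block $w(n,n')$ are instances of runs in $L'$ and $L'_s$ respectively.

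For the inequality $\Opt(n)\ge \Delta(\overline{a}_{n+m-1})$ I would argue that the length-$(m-1)$ prefix $v$ of the witness $u$ lies in $\mathcal{W}(L')$, and that placing $v$ at relative offset $i$ is consistent with the look-ahead character: the event $\overline{a}_{n+m-1}$ sits at global position $m$ relative to $n$, which is either inside the block occupied by $v$ (when $1\le i\le m-1$, namely at the $(m-i)$-th letter of $v$, where $v$ genuinely carries $\overline{a}_{n+m-1}$) or in the free prefix $(\Sigma\times\Rp)^{i}$ (when $i\ge m$). In both cases the timed word $(\Sigma\times\Rp)^{i}\cdot v$ witnesses nonemptiness of the intersection in the definition of $\Delta(\overline{a}_{n+m-1})$, so by minimality $\Delta(\overline{a}_{n+m-1})\le i$. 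This step uses no timing constraint, matching the observation after Def.~\ref{def:timed_fjs_skip_value} that the $\Delta$-intersection reduces to an untimed one.

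For $\Opt(n)\ge\max_{s\in S'}\beta(s)$ I would fix an arbitrary $s\in S'$ and exhibit a single timed word witnessing $(\Sigma\times\Rp)^{i}\cdot\mathcal{W}(L')\cap\mathcal{W}(L'_s)\cdot(\Sigma\times\Rp)^{*}\neq\emptyset$; the natural candidate is the suffix of $w$ from event $n$ onwards, truncated to length $i+m-1$. Its leading $\min\{m_s,m-1\}$ events drive $\mathcal{A}$ to $s$ (since $s\in S'$), so by preservation they form a word of $\mathcal{W}(L'_s)$; its last $m-1$ events are the length-$(m-1)$ prefix of the witness match, hence a word of $\mathcal{W}(L')$ shifted by $i$. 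Because both requirements are read off the \emph{same} actual timed word, the overlap region (present when the shift $i$ is small) is automatically consistent, which is precisely where timing, not just letters, is respected, and where the product construction $\SG(\mathcal{A}\times\mathcal{A})$ is what makes the test decidable. Minimality of $\beta(s)$ then gives $\beta(s)\le i=\Opt(n)$, and taking the maximum over $s\in S'$ finishes the proof.

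The step I expect to be the main obstacle is the $\beta$ case, and within it the timing bookkeeping: one must check that the genuine time-stamps of $w$ really make the truncated suffix simultaneously an \emph{instance} of an $L'_s$-run and, after the shift by $i$, an instance of an $L'$-run, which requires care about the time normalization hidden in the non-absorbing concatenation and about the truncation length $\min\{m_s,m-1\}$ (shortening $L'_s$ only weakens the intersection condition, hence can only shrink $\beta(s)$, so it stays on the safe side of the inequality). The $\Delta$ case, by contrast, is essentially the untimed FJS argument and should be routine once the overapproximation $v\in\mathcal{W}(L')$ is invoked.
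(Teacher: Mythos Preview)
Your proposal is correct and follows essentially the same route as the paper: both arguments exhibit the actual segment of $w$ starting at position $n$ as the common witness for the nonemptiness conditions defining $\Delta$ and $\beta$, invoking the preservation property of $\SG(\mathcal{A})$ to land the relevant prefixes in $\mathcal{W}(L')$ and $\mathcal{W}(L'_s)$. The only cosmetic differences are that the paper first isolates the single membership $(w(i,|w|)-\tau_i)\in(\Sigma\times\Rp)^{n}\cdot\mathcal{W}(L')\cdot(\Sigma\times\Rp)^{*}$ and reads off both inequalities from it, and disposes of the large-shift case upfront via the blanket bound $\Delta,\beta\le m+1$, whereas you treat the two skip values separately and fold the $i\ge m$ case into the $\Delta$ argument.
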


\begin{proof}
 When $\Opt (i) > m$, both $\Opt (i) \geq \Delta (w(i+m-1))$ and
 $\Opt (i) \geq \max\{\beta (s) \mid(s,\rho,T) \in \Conf(i,j)\}$
 hold because for any 
 $a \in \Sigma$ and $s \in S$, we have
 $m + 1 \geq \Delta (a)$ and  $m + 1 \geq \beta (s)$.
 Assume $\Opt (i) \leq m$ in the following.
 Let $L_{-\$} (\mathcal{A})$ be $\{w (1,|w|-1) \mid w \in L (\mathcal{A})\}$.

The membership of a timed word segment leads the membership in the
approximated languages, as follows.
 \begin{align*}
  &\exists t \in [i + n - 1, i + n), t' \in (t,\infty).\, (t,t') \in
  \mathcal{M} (w,\mathcal{A})\\
  \iff& \exists t \in [i + n - 1, i + n), t' \in (t,\infty).\, w|_{(t,t')}
  \in L (\mathcal{A})\\
  \Rightarrow& \exists t \in [i + n - 1, i + n), t' \in (t,\infty), k
  \in [i + n - 1, |w|].\\ 
 &\qquad (w (i+n,k) - t) \circ (\$,t') \in L (\mathcal{A})\\
  \Rightarrow& \exists t \in [i + n - 1, i + n), k
  \in [i + n - 1, |w|].\,\\
  & \qquad (w (i+n,k) - t) \in L_{-\$}
  (\mathcal{A})\\
  \Rightarrow& \exists t \in [i + n - 1, i + n).\,
  (w (i+n,|w|) - t) \in L_{-\$}
  (\mathcal{A}) \cdot (\Sigma \times \Rp)^*\\
  \Rightarrow&
  (w (i,|w|) - \tau_i) \in
  (\Sigma \times \Rp)^{n} \cdot L_{-\$} (\mathcal{A}) \cdot
  (\Sigma \times \Rp)^*\\
  \Rightarrow&
  (w (i,|w|) - \tau_i) \in
  (\Sigma \times \Rp)^{n} \cdot \mathcal{W}(L') \cdot
  (\Sigma \times \Rp)^*\\
 \end{align*}
We have $\Opt (i) \geq \Delta (w (i + m-1))$ because of the follows.
 \begin{align*}
  &(w (i,|w|) - \tau_i) \in
  (\Sigma \times \Rp)^{n} \cdot \mathcal{W}(L') \cdot
  (\Sigma \times \Rp)^*\\
  \Rightarrow &(w (i,i+m-1) - \tau_i) \cdot (\Sigma \times \Rp)^* \cap\\
  & \qquad
  (\Sigma \times \Rp)^{n} \cdot \mathcal{W}(L') \cdot
  (\Sigma \times \Rp)^* \neq \emptyset\\
  \Rightarrow & (\Sigma \times \Rp)^m \cdot w (i+m-1) \cdot
  (\Sigma \times \Rp)^* \cap\\
  & \qquad
  (\Sigma \times \Rp)^{n} \cdot \mathcal{W}(L') \cdot
  (\Sigma \times \Rp)^* \neq \emptyset\\
 \end{align*}
Similarly, we have $\Opt (i) \geq \max\{\beta (s) \mid(s,\rho,T) \in
 \Conf(i,j)\}$ because of the follows.
 \begin{align*}
  &(w (i,|w|) - \tau_i) \in
  (\Sigma \times \Rp)^{n} \cdot \mathcal{W}(L') \cdot
  (\Sigma \times \Rp)^*\\
  \Rightarrow &\forall (s, \rho,T).\, 
  \mathcal{W}(L_s) \cdot (\Sigma \times \Rp)^* \cap
  (\Sigma \times \Rp)^{n} \cdot \mathcal{W}(L') \cdot
  (\Sigma \times \Rp)^*\\
 \end{align*}
 \qed
\end{proof}

\section{The pattern timed automaton in \textsc{Large Constraints}}

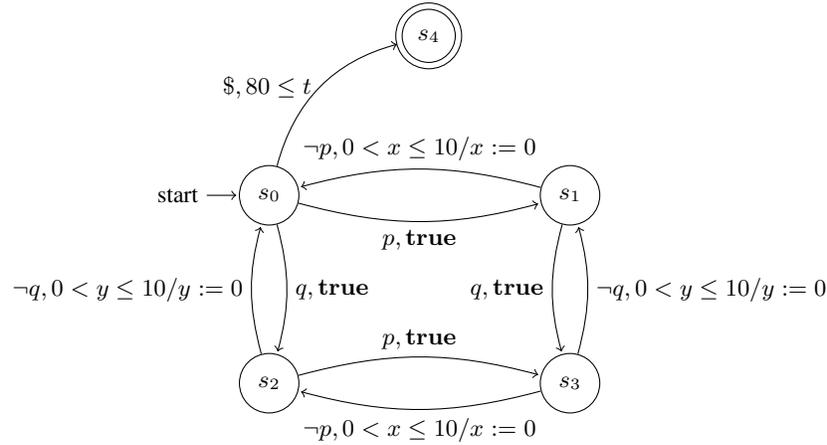
\begin{figure}[ht]
 \centering
 \begin{tikzpicture}[shorten >=1pt,node distance=3cm,on grid,auto] 
  \node[state,initial] (s_0)   {$s_0$}; 
  \node[state,node distance=4cm] (s_1) [right=of s_0] {$s_1$}; 
  \node[state,node distance=2.5cm] (s_2) [below=of s_0] {$s_2$};
  \node[state,node distance=2.5cm] (s_3) [below=of s_1] {$s_3$};
  \node[state,accepting] (s_4) [above right=of s_0] {$s_4$};
  \path[->] 
  (s_0) edge  [below,bend right=15,bend left=-15] node {$p,\mathbf{true}$} (s_1)
  (s_1) edge  [above,bend right=15,bend left=-15] node {$\neg p,0 < x \leq 10/x:=0$} (s_0)
  
  (s_0) edge  [right,bend right=-15,bend left=15] node {$q,\mathbf{true}$} (s_2)
  (s_2) edge  [left,bend right=-15,bend left=15] node {$\neg q,0 < y \leq 10/y:=0$} (s_0)

  (s_1) edge  [left,bend right=15,bend left=-15] node {$q,\mathbf{true}$} (s_3)
  (s_3) edge  [right,bend right=15,bend left=-15] node {$\neg q,0 < y \leq 10/y:=0$} (s_1)

  (s_2) edge  [above,bend right=-15,bend left=15] node {$p,\mathbf{true}$} (s_3)
  (s_3) edge  [below,bend right=-15,bend left=15] node {$\neg p,0 < x
  \leq 10/x:=0$} (s_2)
  (s_0) edge  [left,bend right=-30,bend left=30] node {$\$,80 \leq t$} (s_4);
 \end{tikzpicture}
 \caption{The pattern timed automaton in \textsc{Large Constraints}}
 \label{fig:case2_automaton}
\end{figure}

\section{Optimization of Preprocessing for Zone Abstraction and Skip Value Computation}
\label{appendix:preprocessing}
In Table~\ref{table:preprocessing} is how long our preprocessing takes for each of our benchmark problems. We see that our implementation is efficient in the preprocessing stage; this is largely due to our memorization technique in which we reuse parts of zone automata. 

For reference we also present results for \emph{region-based} abstraction~\cite{Alur1994}: though equivalent in terms of finiteness, zones give more efficient abstraction than regions. In our previous work~\cite{DBLP:conf/formats/WagaAH16} we used regions in place of zones, and that posed a bottleneck, as we can see in Table~\ref{table:preprocessing}.

\begin{table}[t]
\centering
\caption{The duration of preprocessing (ms). The timeout
is set to three minutes.}
\label{table:preprocessing}
\begin{tabular}{c||c|c|c}
&BM (region)&BM (zone) &FJS (zone)\\\hline
\textsc{Simple}&1.09e-03&2.53e-03&3.46e-03\\
\textsc{Large Constraints}&\color{red}{Timeout}&1.02e-03&1.05e-03\\
\textsc{Torque}&1.74e+00&4.25e-01&2.09e-01\\
\textsc{Settling}&1.69e+04&4.60e-03&5.40e-03\\
\textsc{Gear}&1.00e-03&4.47e-03&1.00e-03\\
\textsc{Accel}&\color{red}{Timeout}&2.95e-02&1.00e-03\\
\end{tabular}
\end{table}

\auxproof{
We measured the execution time of the preprocessing phase of
$\IMPL_{\ONLINE,\BMREGION}$, $\IMPL_{\ONLINE,\BMZONE}$, and
$\IMPL_{\ONLINE,\FJS}$ to evaluate the effectiveness of the zone-based
abstraction compared to the region-based abstraction; in the previous
work~\cite{DBLP:conf/formats/WagaAH16} that used the region-based
abstraction, the preprocessing phase was the bottleneck.  The timeout
is set to three minutes.  Table~\ref{table:preprocessing} shows the
results.

Even for the benchmarks that $\IMPL_{\ONLINE,\BMREGION}$ timeouts
(i.e., \textsc{Large Constraints}, \textsc{Settling}, and
\textsc{Accel}), $\IMPL_{\ONLINE,\BMZONE}$ and $\IMPL_{\ONLINE,\FJS}$
can handle the input (and runs reasonably fast).  The execution time
of the preprocessing phase of a region-based algorithms largely
depends on the maximum constant that appears as the guards of the
transition of the automaton, whereas that of zone-based algorithms
depends only on the structure of the automaton.
}


\auxproof{
\section{Notes for Authors}
\paragraph{Notations}
\begin{itemize}
 \item target string: $w$ in any problems
 \item pattern: $\pat$ in string matching, $\mathcal{A}$ or
       $L(\mathcal{A})$ in other problems
\end{itemize}

\paragraph{Suggestion}

\begin{itemize}
 \item Move the short overview of string matching algorithms in
       \S{}\ref{subsec:stringMatching} to the introduction.
 \item Remove or move \S{}\ref{subsec:zoneAutomata} to appendix and mention a paper in
       \S{}\ref{sec:timedFJS}.
       I do not think the detail of zone is important in this paper.
\end{itemize}
}

\begin{thebibliography}{10}

\bibitem{Alur1994}
R.~Alur and D.L. Dill.
\newblock A theory of timed automata.
\newblock \emph{Theor. Comput. Sci.}, 126(2):183--235, 1994.

\bibitem{DBLP:conf/focs/AlurH92}
R.~Alur and T.A. Henzinger.
\newblock Back to the future: Towards a theory of timed regular languages.
\newblock In \emph{33rd Annual Symposium on Foundations of Computer Science,
  Pittsburgh, Pennsylvania, USA, 24-27 October 1992}, pp. 177--186. {IEEE}
  Computer Society, 1992.

\bibitem{Asarin2002}
E.~Asarin, P.~Caspi and O.~Maler.
\newblock Timed regular expressions.
\newblock \emph{J. {ACM}}, 49(2):172--206, 2002.

\bibitem{DBLP:conf/tacas/BehrmannBFL03}
G.~Behrmann, P.~Bouyer, E.~Fleury and K.G. Larsen.
\newblock Static guard analysis in timed automata verification.
\newblock In H.~Garavel and J.~Hatcliff, editors, \emph{{TACAS} 2003,
        Held as Part of the Joint European Conferences on Theory and
  Practice of Software, {ETAPS} 2003, Warsaw, Poland, April 7-11, 2003,
  Proceedings}, vol. 2619 of \emph{Lecture Notes in Computer Science}, pp.
  254--277. Springer, 2003.

\bibitem{DBLP:journals/sttt/BehrmannBLP06}
G.~Behrmann, P.~Bouyer, K.G. Larsen and R.~Pel{\'{a}}nek.
\newblock Lower and upper bounds in zone-based abstractions of timed automata.
\newblock \emph{{STTT}}, 8(3):204--215, 2006.

\bibitem{DBLP:conf/hybrid/BombaraVPYB16}
G.~Bombara, C.I. Vasile, F.~Penedo, H.~Yasuoka and C.~Belta.
\newblock A decision tree approach to data classification using signal temporal
  logic.
\newblock In A.~Abate and G.E. Fainekos, editors, \emph{{HSCC}
  2016, Vienna, Austria, April 12-14, 2016}, pp. 1--10. {ACM}, 2016.

\bibitem{Boyer1977}
R.S. Boyer and J.S. Moore.
\newblock A fast string searching algorithm.
\newblock \emph{Commun. {ACM}}, 20(10):762--772, 1977.

\bibitem{DBLP:journals/jcse/ChenSWL16}
S.~Chen, O.~Sokolsky, J.~Weimer and I.~Lee.
\newblock Data-driven adaptive safety monitoring using virtual subjects in
  medical cyber-physical systems: {A} glucose control case study.
\newblock \emph{{JCSE}}, 10(3), 2016.

\bibitem{DBLP:conf/avmfss/Dill89}
D.L. Dill.
\newblock Timing assumptions and verification of finite-state concurrent
  systems.
\newblock In J.~Sifakis, editor, \emph{Automatic Verification Methods for
  Finite State Systems, International Workshop, Grenoble, France, June 12-14,
  1989, Proceedings}, vol. 407 of \emph{Lecture Notes in Computer Science}, pp.
  197--212. Springer, 1989.

\bibitem{d2013clock}
D.~D’Souza and R.~Matteplackel.
\newblock A clock-optimal hierarchical monitoring automaton construction for
  mitl.
\newblock Tech. rep., 2013.

\bibitem{DBLP:journals/csur/FaroL13}
S.~Faro and T.~Lecroq.
\newblock The exact online string matching problem: {A} review of the most
  recent results.
\newblock \emph{{ACM} Comput. Surv.}, 45(2):13:1--13:42, 2013.

\bibitem{DBLP:conf/cav/FerrereMNU15}
T.~Ferr{\`{e}}re, O.~Maler, D.~Nickovic and D.~Ulus.
\newblock Measuring with timed patterns.
\newblock In D.~Kroening and C.S. Pasareanu, editors, \emph{Computer Aided
  Verification - 27th International Conference, {CAV} 2015, San Francisco, CA,
  USA, July 18-24, 2015, Proceedings, Part {II}}, vol. 9207 of \emph{Lecture
  Notes in Computer Science}, pp. 322--337. Springer, 2015.

\bibitem{DBLP:journals/jda/FranekJS07}
F.~Franek, C.G. Jennings and W.F. Smyth.
\newblock A simple fast hybrid pattern-matching algorithm.
\newblock \emph{J. Discrete Algorithms}, 5(4):682--695, 2007.

\bibitem{DBLP:conf/cav/HerbreteauSW10}
F.~Herbreteau, B.~Srivathsan and I.~Walukiewicz.
\newblock Efficient emptiness check for timed b{\"{u}}chi automata.
\newblock In T.~Touili, B.~Cook and P.B. Jackson, editors, \emph{Computer Aided
  Verification, 22nd International Conference, {CAV} 2010, Edinburgh, UK, July
  15-19, 2010. Proceedings}, vol. 6174 of \emph{Lecture Notes in Computer
  Science}, pp. 148--161. Springer, 2010.

\bibitem{DBLP:conf/rv/HoOW14}
H.~Ho, J.~Ouaknine and J.~Worrell.
\newblock Online monitoring of metric temporal logic.
\newblock In B.~Bonakdarpour and S.A. Smolka, editors, \emph{{RV} 2014, Toronto, ON, Canada,
  September 22-25, 2014. Proceedings}, vol. 8734 of \emph{Lecture Notes in
  Computer Science}, pp. 178--192. Springer, 2014.

\bibitem{DBLP:conf/cpsweek/HoxhaAF14}
B.~Hoxha, H.~Abbas and G.E. Fainekos.
\newblock Benchmarks for temporal logic requirements for automotive systems.
\newblock In G.~Frehse and M.~Althoff, editors, \emph{1st and 2nd International
  Workshop on Applied veRification for Continuous and Hybrid Systems,
  ARCH@CPSWeek 2014, Berlin, Germany, April 14, 2014 / ARCH@CPSWeek 2015,
  Seattle, WA, USA, April 13, 2015.}, vol.~34 of \emph{EPiC Series in
  Computing}, pp. 25--30. EasyChair, 2014.

\bibitem{DBLP:conf/hybrid/JinDKUB14}
X.~Jin, J.V. Deshmukh, J.~Kapinski, K.~Ueda and K.R. Butts.
\newblock Powertrain control verification benchmark.
\newblock In M.~Fr{\"{a}}nzle and J.~Lygeros, editors, \emph{HSCC'14,
        Berlin, Germany, April 15-17, 2014}, pp. 253--262. {ACM}, 2014.

\bibitem{kane2015runtime}
A.~Kane.
\newblock \emph{Runtime monitoring for safety-critical embedded systems}.
\newblock PhD thesis, PhD thesis, Carnegie Mellon University, USA, 2015.

\bibitem{DBLP:conf/rv/KaneCDK15}
A.~Kane, O.~Chowdhury, A.~Datta and P.~Koopman.
\newblock A case study on runtime monitoring of an autonomous research vehicle
  {(ARV)} system.
\newblock In E.~Bartocci and R.~Majumdar, editors, \emph{{RV} 2015
        Vienna, Austria, September 22-25,
  2015. Proceedings}, vol. 9333 of \emph{Lecture Notes in Computer Science},
  pp. 102--117. Springer, 2015.

\bibitem{DBLP:conf/formats/KiniKP11}
D.R. Kini, S.N. Krishna and P.K. Pandya.
\newblock On construction of safety signal automata for
  $mitl[\: \mathcal{U}, \: \mathcal{S}]$ using temporal
  projections.
\newblock In U.~Fahrenberg and S.~Tripakis, editors, \emph{{FORMATS} 2011,
  Aalborg, Denmark, September 21-23, 2011. Proceedings}, vol. 6919 of
  \emph{Lecture Notes in Computer Science}, pp. 225--239. Springer, 2011.

\bibitem{Knuth1977}
D.E. Knuth, J.H.M. Jr. and V.R. Pratt.
\newblock Fast pattern matching in strings.
\newblock \emph{{SIAM} J. Comput.}, 6(2):323--350, 1977.

\bibitem{DBLP:conf/formats/MalerNP06}
O.~Maler, D.~Nickovic and A.~Pnueli.
\newblock From {MITL} to timed automata.
\newblock In E.~Asarin and P.~Bouyer, editors, \emph{{FORMATS} 2006,
  Paris, France, September 25-27, 2006, Proceedings}, vol. 4202 of
  \emph{Lecture Notes in Computer Science}, pp. 274--289. Springer, 2006.

\bibitem{SimulinkGuide}
The MathWorks, Inc., Natick, MA, USA.
\newblock \emph{{Simulink User's Guide}}, 2015.

\bibitem{DBLP:conf/formats/NickovicP10}
D.~Nickovic and N.~Piterman.
\newblock From mtl to deterministic timed automata.
\newblock In K.~Chatterjee and T.A. Henzinger, editors, \emph{{FORMATS} 2010,
  Klosterneuburg, Austria, September 8-10, 2010. Proceedings}, vol. 6246 of
  \emph{Lecture Notes in Computer Science}, pp. 152--167. Springer, 2010.

\bibitem{PureLang}
{Pure Programming Language}.
\newblock \url{https://purelang.bitbucket.io}.

\bibitem{DBLP:conf/sigsoft/RaimondiSE08}
F.~Raimondi, J.~Skene and W.~Emmerich.
\newblock Efficient online monitoring of web-service slas.
\newblock In M.J. Harrold and G.C. Murphy, editors, \emph{Proceedings of the
  16th {ACM} {SIGSOFT} International Symposium on Foundations of Software
  Engineering, 2008, Atlanta, Georgia, USA, November 9-14, 2008}, pp. 170--180.
  {ACM}, 2008.

\bibitem{DBLP:journals/fmsd/ReinbacherFB14}
T.~Reinbacher, M.~F{\"{u}}gger and J.~Brauer.
\newblock Runtime verification of embedded real-time systems.
\newblock \emph{Formal Methods in System Design}, 44(3):203--239, 2014.

\bibitem{DBLP:journals/cacm/Sunday90}
D.~Sunday.
\newblock A very fast substring search algorithm.
\newblock \emph{Commun. {ACM}}, 33(8):132--142, 1990.

\bibitem{DBLP:journals/corr/Ulus16}
D.~Ulus.
\newblock Montre: {A} tool for monitoring timed regular expressions.
\newblock \emph{CoRR}, abs/1605.05963, 2016.

\bibitem{DBLP:conf/formats/UlusFAM14}
D.~Ulus, T.~Ferr{\`{e}}re, E.~Asarin and O.~Maler.
\newblock Timed pattern matching.
\newblock In A.~Legay and M.~Bozga, editors, \emph{{FORMATS} 2014, Florence,
  Italy, September 8-10, 2014. Proceedings}, vol. 8711 of \emph{Lecture Notes
  in Computer Science}, pp. 222--236. Springer, 2014.

\bibitem{DBLP:conf/tacas/UlusFAM16}
D.~Ulus, T.~Ferr{\`{e}}re, E.~Asarin and O.~Maler.
\newblock Online timed pattern matching using derivatives.
\newblock In M.~Chechik and J.~Raskin, editors, \emph{{TACAS} 2016, Held
        as Part of the European Joint Conferences on Theory and
  Practice of Software, {ETAPS} 2016, Eindhoven, The Netherlands, April 2-8,
  2016, Proceedings}, vol. 9636 of \emph{Lecture Notes in Computer Science},
  pp. 736--751. Springer, 2016.

\bibitem{DBLP:conf/formats/WagaAH16}
M.~Waga, T.~Akazaki and I.~Hasuo.
\newblock A boyer-moore type algorithm for timed pattern matching.
\newblock In M.~Fr{\"{a}}nzle and N.~Markey, editors, \emph{{FORMATS} 2016,
  Quebec, QC, Canada, August 24-26, 2016, Proceedings}, vol. 9884 of
  \emph{Lecture Notes in Computer Science}, pp. 121--139. Springer, 2016.

\bibitem{TimedFJSSample}
M.~Waga, I.~Hasuo and K.~Suenaga.
\newblock {Code that Accompanies "Efficient Online Timed Pattern Matching by
  Automata-Based Skipping."}.
\newblock \url{https://github.com/MasWag/timed-pattern-matching}.

\bibitem{Watson2003}
B.W. Watson and R.E. Watson.
\newblock A boyer-moore-style algorithm for regular expression pattern
  matching.
\newblock \emph{Sci. Comput. Program.}, 48(2-3):99--117, 2003.

\end{thebibliography}
\end{document}
